\documentclass[prx,aps,twocolumn, amssymb,amsmath,floatfix,superscriptaddress]{revtex4-2}

\usepackage{graphicx}
\usepackage{dcolumn}
\usepackage{bm}
\usepackage{hyperref}
\usepackage{amsthm}

\usepackage{physics} 
\usepackage{comment}
\usepackage{lineno}
\usepackage{braket}
\usepackage{tabularx}
\usepackage{bm}
\usepackage{euscript}
\usepackage{epsfig,psfrag}
\usepackage{amsthm}
\theoremstyle{definition}

\usepackage{graphicx}
\usepackage{color}
\usepackage{amsfonts}
\usepackage{exscale}
\usepackage{wrapfig}
\usepackage{amsthm}
\newtheorem{lemma}{Lemma}
\usepackage{extarrows} 
\usepackage{placeins}
\usepackage{soul}
\usepackage{amsmath}
\usepackage[english]{babel}
\usepackage[autostyle, english = american]{csquotes}
\MakeOuterQuote{"}
\usepackage{enumitem}
\usepackage{slashed} 
\usepackage{tikz}
\numberwithin{equation}{section} 
\renewcommand{\theequation}{\arabic{section}.\arabic{equation}} 
\newcommand{\bea}{\begin{eqnarray}}  
\newcommand{\eea}{\end{eqnarray}}
\newcommand{\ben}{\begin{enumerate}}
\newcommand{\een}{\end{enumerate}}
\newcommand{\be}{\begin{equation}}
\newcommand{\ee}{\end{equation}}

\newtheorem{theorem}{Theorem}
\newtheorem{corollary}[theorem]{Corollary}

\definecolor{Bcolor}{RGB}{0,0,255}
\definecolor{Rcolor}{RGB}{255,0,0}
\definecolor{Gcolor}{RGB}{0,255,0}

\usetikzlibrary{external}
\begin{document}

\title{R\'enyi Phase Transitions and Analytic Continuation to the von Neumann Entropy}

\author{Ayush Raj}
\affiliation{Department of Physics and Astronomy, Purdue University, West Lafayette, IN 47907, USA}

\author{Akash Vijay}
\affiliation{Department of Physics and Institute of Condensed Matter Theory, University of Illinois at Urbana-Champaign, Urbana, Illinois 61801, USA}

\author{Hong-Chen Jiang}
\affiliation{Stanford Institute for Materials and Energy Sciences, SLAC National Accelerator Laboratory, Menlo Park, California 94025, USA}

\author{Laimei Nie}
\affiliation{Department of Physics and Astronomy, Purdue University, West Lafayette, IN 47907, USA}

\date{\today}

\begin{abstract}
Extracting operationally meaningful quantities such as von Neumann entropy and mutual information is central to characterizing many-body quantum systems, yet experiments and numerics often provide direct access only to integer R\'enyi entropies. Whereas conventional extrapolation relies on a prescribed fitting ansatz, in Ref.~\cite{vijay2026analytically} we introduced an alternative approach based on stabilized analytic continuation (SAC), which avoids such an ansatz and is inherently robust to noise. Here, we further develop this framework in the noiseless setting and benchmark its performance across several nontrivial many-body problems. 
We first show that, besides the intrinsic ambiguity of reconstructing the von Neumann entropy from finitely many R\'enyi samples, analytic continuation can also fail because of genuine nonanalyticities in the R\'enyi function arising from zeros of $\Tr\rho^z$. We derive universal zero-free domains for $\Tr\rho^z$ and systematically sharpen these bounds using additional spectral information about $\rho$.
We then benchmark the performance of SAC in three settings: ($i$) extracting topological entanglement entropy from DMRG Rényi data for the toric code and Kagome Heisenberg models; ($ii$) reconstructing the mutual information between disjoint intervals in a $(1+1)d$ compact-boson CFT; and ($iii$) recovering finite-time ballistic growth of the von Neumann entropy from integer R\'enyi entropies that cross over toward subballistic growth in diffusive quantum dynamics. In the latter setting, analytic continuation is expected to fail in the long time limit. We illustrate this mechanism with a physically motivated two-sector model, where the asymptotic separation between von Neumann and higher-R\'enyi growth is accompanied by a zero of $\Tr\rho^z$ pinching the real axis at $z=1$, thereby producing a first-order R\'enyi phase transition.
\end{abstract}
                              
\maketitle

\makeatletter
\newcommand{\hideappendixsubsections}{%
  \renewcommand{\l@subsection}[2]{}%
  \renewcommand{\l@subsubsection}[2]{}%
}
\makeatother

\makeatletter
\begingroup
\renewcommand{\l@subsubsection}[2]{}
\tableofcontents
\endgroup
\makeatother

\begin{figure*}[!t]
\centering
\includegraphics[width=\textwidth]{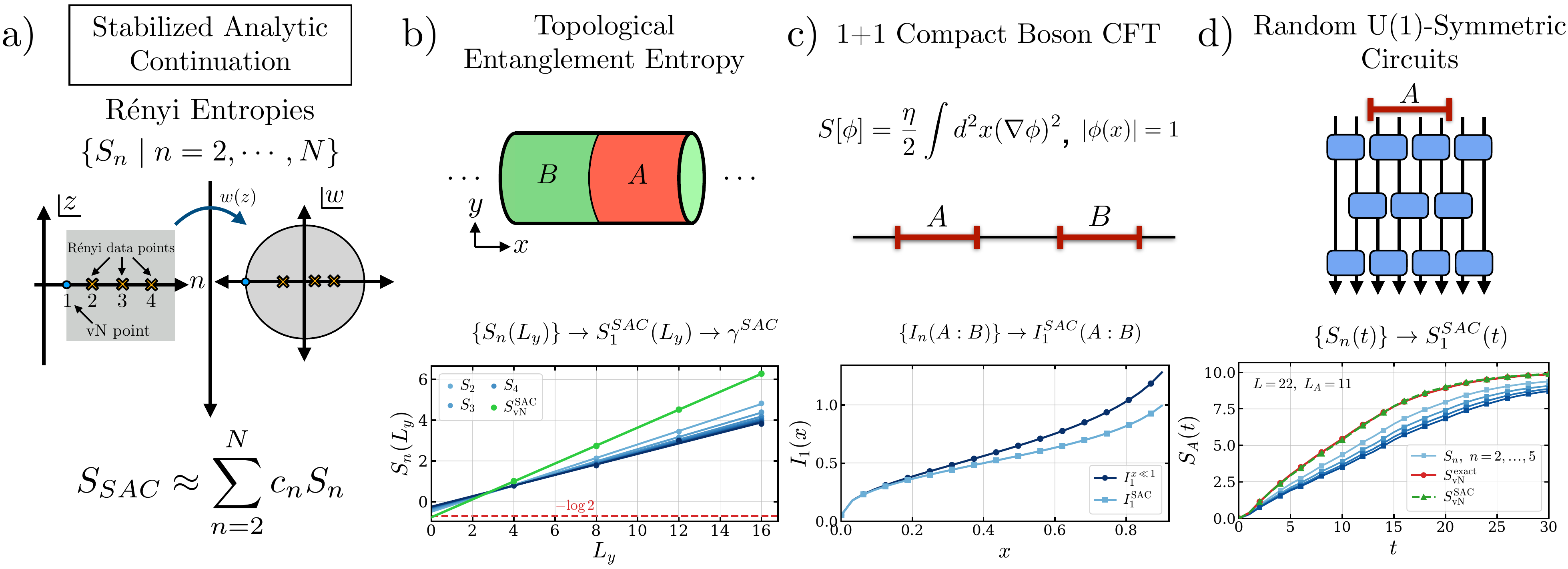}
\caption{\textbf{Stabilized analytic continuation of R\'enyi entropies and representative applications.}
(a) Schematic of the noiseless SAC construction. Integer R\'enyi data are mapped from the complex R\'enyi-index plane to the unit disc, where SAC selects a minimum-norm analytic continuation to the von Neumann point $n=1$. For a fixed set of R\'enyi entropies and conformal-map parameters, the resulting estimator is a state-independent linear combination,
$S_{\mathrm{vN}}^{\mathrm{SAC}}=\sum_n c_n S_n$.
(b) Extraction of the topological entanglement entropy in the spin-$1/2$ Kagome Heisenberg model on a cylinder with $L_x=48$, $J_1=1$, and $J_2=0.1$. At each circumference $L_y$, SAC reconstructs $S_{\mathrm{vN}}(L_y)$ from $S_2,\ldots,S_{10}$. Fitting the result to $S_{\mathrm{vN}}^{\mathrm{SAC}}(L_y)=\alpha L_y-\gamma$ yields $\gamma^{\mathrm{SAC}}\simeq0.756$. The red dashed line marks the expected $\mathbb{Z}_2$ intercept $-\log 2$.
(c) Reconstruction of the mutual information between two disjoint intervals in the $(1+1)d$ compact-boson CFT. SAC continues the integer-R\'enyi mutual informations $I_n(A:B)$, $n=2,\ldots,7$, to $n=1$ for compactification parameter $\eta=3$, and is compared with the analytic small-cross-ratio result.
(d) Entanglement growth in a random $U(1)$-symmetric brick-wall circuit with $L=22$, total charge $Q=11$, and subsystem size $L_{A}=L/2=11$. The SAC reconstruction from $S_2,\ldots,S_5$, averaged over $50$ circuit realizations, closely tracks the directly computed von Neumann entropy throughout the simulated time window.}
\label{fig:panel_figure}
\end{figure*}

\section{Introduction}  

The quantum-information perspective has become central to the characterization of many-body quantum systems, providing diagnostics of phenomena ranging from topological order and quantum criticality to thermalization, scrambling, and many-body localization~\cite{TEE_KitaevPreskill,TEE_LevinWen,Fast_Scramblers,ETH_EE,Deutsch_2018,Kaufman_2016,EntanglementGrowthRUCs,EE_Criticality,EE_PhaseTransition,Ryu_2006,Ryu_2007,abanin2019colloquium,nandkishore2015many}. Among the most fundamental quantities is the von Neumann entropy,
\begin{equation}
    S_{\mathrm{vN}}(\rho)=-\mathrm{Tr}(\rho\log\rho),
\end{equation} 
together with its cousins such as mutual information. Despite their conceptual and operational significance, these quantities are difficult to access directly in both experiments and numerical many-body calculations~\cite{Abanin2012PRL,haah2017sample,ODonnell2016Jun}.

By contrast, integer R\'enyi entropies,
\begin{equation}
    S_n(\rho)=\frac{1}{1-n}\log\mathrm{Tr}(\rho^n),
    \qquad n=2,3,\ldots,
\end{equation}
are naturally accessible through measurements of integer moments of the density matrix. Randomized-measurement and shadow-tomography protocols can determine low-order moments experimentally~\cite{Daley2012PRL,islam2015measuring,Kaufman_2016,elben2019statistical,huang2020predicting,elben2023randomized}, while replica-based constructions provide direct access to integer R\'enyi entropies in quantum Monte Carlo and related numerical approaches~\cite{Hastings2010PRL,Humeniuk2012QMC,Chung2014PRBQMC, Luitz2014PRBQMC, Assaad2014PRBQMC, Meng2022QMC}. This leaves a fundamental gap between what is readily accessible and what is ultimately desired. In practice, one typically has access to only a few integer R\'enyi entropies at $n\geq2$, and the von Neumann entropy
is obtained from the formal limit $S_{\mathrm{vN}}(\rho)=\lim_{n\to 1} S_n(\rho)$.
Therefore, inferring this limit from finitely many integer values is  a problem of analytic continuation. Two distinct difficulties arise.

First, finitely many samples do not uniquely determine an analytic function, even when the data are exact and noiseless. Conventional approaches address this ambiguity by fitting the R\'enyi function to a prescribed polynomial, rational polynmial, or other finite-dimensional ansatz~\cite{zyczkowski2005renyiextrapolationshannonentropy,Headrick2014pade, Hoker2021Taylor,vermersch2024enhanced}. 
With only a small number of input points, however, the resulting continuation can depend sensitively on the chosen functional form. 
In Ref.~\cite{vijay2026analytically}, we introduced an alternative approach based on the stabilized analytic continuation (SAC) framework of Ciulli and Spearman~\cite{CS1,CS2,CS3,CS4,CS5}. 
Rather than assuming a particular form for $S_n$, SAC uses analyticity itself as the organizing principle. 
After conformally mapping the relevant analytic domain to the unit disc, the method selects, among all functions consistent with the available data, the one that minimizes an appropriate boundary norm.
The framework also naturally incorporates uncertainty in the input data, making it well suited to experimentally accessible, noisy R\'enyi measurements~\cite{vijay2026analytically}.

Here, we focus on the complementary noiseless setting, in which SAC reduces to a closed-form estimator for the von Neumann entropy~\cite{vijay2026analytically}. For a fixed set of input R\'enyi entropies $\{S_n\,|\,n=2,\ldots,N+1\}$, the estimate takes the form
\begin{align}\label{eq:SAC_estimator}
    S_{\mathrm{vN}}^{\mathrm{SAC}}
    &=
    \frac{
    \displaystyle
    \sum_{i,j=3}^{N+1}(A^{-1})_{ij}
    \left(\frac{S_i(\rho)}{i-1}-S_2(\rho)\right)
    \left(\frac{1}{j-1}-1\right)}
    {\displaystyle
    \sum_{i,j=3}^{N+1}(A^{-1})_{ij}
    \left(\frac{1}{i-1}-1\right)
    \left(\frac{1}{j-1}-1\right)}
    \nonumber\\
    &=
    \sum_{n=2}^{N+1} c_n S_n .
\end{align}
The coefficients $c_{n}$ are determined entirely by the $(N-1)\times(N-1)$ positive-semidefinite symmetric matrix $A$, whose entries are given by
\begin{align}
    \label{eq:A_matrix_elements}
    A_{ij}
    =
    \frac{2}{\pi}
    \int_0^{2\pi}
    \ln\left|
    \frac{e^{i\theta}-w(i)}
         {e^{i\theta}-w(2)}
    \right|
    \ln\left|
    \frac{e^{i\theta}-w(j)}
         {e^{i\theta}-w(2)}
    \right|
    d\theta ,
\end{align}
where $w(z)$ is a conformal map which will be specified below. Importantly, $A$ depends only on the chosen R\'enyi-index set $\{2,\ldots,N+1\}$ and the conformal map, not on the underlying quantum state. Thus, once these are fixed, the coefficients $\{c_n\}$ can be computed once and applied universally to any input state.

The second difficulty concerns the analytic structure of the R\'enyi function itself. Even if the finite-data ambiguity could be overcome, the R\'enyi function $S_z(\rho)$ itself may develop singularities that can obstruct continuation. These singularities originate from zeros of $\Tr\rho^z$ in the complex $z$ plane~\cite{Gliozzi_2010}, which become branch points of $S_z(\rho)$. At finite rank, $\Tr\rho^z$ is entire and strictly positive for real $z>0$, so $S_z(\rho)$ is analytic in a neighborhood of every point on the positive real axis. As the rank of $\rho$ grows, however, zeros of $\Tr\rho^z$ can accumulate and pinch the positive real axis, producing a \emph{R\'enyi phase transition}. If the resulting singularity occurs at $1\leq z<2$, it poses a genuine obstruction to analytic continuation from the integer-R\'enyi entropies to the von Neumann entropy.

We begin by determining how closely these zeros can approach the continuation region. For every density matrix of rank $r\geq3$, we show that $\Tr\rho^z$ is zero-free in the universal wedge
\begin{equation}
    \mathcal W_r
    =\left\{
        z=u+iv:\ u>0,\quad
        |v|<\frac{\pi u}{\log(r-1)}
      \right\}.
    \label{eq:intro_zero_free_wedge}
\end{equation}
Within the continuation half-plane $\Re z\geq1$, this implies that no zero can lie closer to the real axis than $\pi/\log(r-1)$. This bound is saturated by a spectrum with one distinguished eigenvalue carrying half of the total weight and a degenerate sector carrying the other half. 
In the large-rank limit, competition between these sectors produces a first-order R\'enyi phase transition, providing a concrete spectral mechanism for obstructing analytic continuation.
We then sharpen this universal result using additional spectral information, including the modular spectral width, the entropy gap $S_{\mathrm{vN}}-S_\infty$, and fidelities to pure and maximally mixed states. These bounds provide increasingly refined criteria for excluding nearby singularities that can obstruct
continuation.

Finally, we benchmark SAC in three complementary many-body settings, summarized in Fig.~\ref{fig:panel_figure}. 
First, we reconstruct the von Neumann entropy from DMRG R\'enyi data for the toric-code and Kagome Heisenberg models and use it to extract the topological entanglement entropy, with direct DMRG results providing a benchmark. Second, we apply SAC to the R\'enyi mutual information of two disjoint intervals in a $(1+1)d$ compact-boson CFT, where the integer-R\'enyis are known exactly but the $n\to1$ continuation is highly nontrivial. Third, we explore the applicability of SAC to entanglement growth in chaotic systems with conservation laws, where diffusion produces subballistic higher-R\'enyi growth
while the von Neumann entropy remains ballistic
~\cite{rakovszky2019sub,Huang2020qudit}. We further introduce a physically motivated, time-dependent two-sector model~\cite{rakovszky2019sub} that reproduces the asymptotic growth laws $S_{\mathrm{vN}}\sim v_E t$ and $S_{n>1}\propto\sqrt{t}$. In this model, the zeros pinch the real axis at $z=1$ in the long-time limit, producing a first-order R\'enyi transition. This provides a spectral explanation for how SAC can remain accurate at accessible times yet ultimately fail asymptotically.

The paper is organized as follows. Sec.~\ref{sec:domain_analyticity} develops the interpretation of the zeros, proves the zero-free bounds, and studies their consequences for many-body spectra. Sec.~\ref{sec:SAC} derives the SAC
construction and specifies the conformal map. Sec.~\ref{sec:many_body_benchmarks} presents the
many-body benchmarks. And we conclude in Sec.~\ref{sec:conclusions} with a discussion of broader implications. Proofs and implementation details are collected in the appendices.

\section{Domain of Analyticity of $S_{z}(\rho)$}
\label{sec:domain_analyticity}

The singularities of $S_z(\rho)$ determine the analytic domains available for continuation from integer R\'enyi entropies to the von Neumann limit. 
These singularities have a direct physical interpretation: they arise from
Fisher zeros~\cite{Fisher1965, meixner1965statistical} of an auxiliary partition function associated with the modular Hamiltonian. In the thermodynamic limit, their accumulation toward the real R\'enyi axis can produce a R\'enyi phase transition and obstruct analytic continuation. In this section, we first make this correspondence precise, then derive a hierarchy of zero-free bounds, and finally identify spectral structures that either drive Fisher zeros toward the continuation region or
keep them parametrically far away.

For a finite-rank density matrix $\rho$, define
\begin{equation}
    f_\rho(z)\equiv\Tr\rho^z, \qquad S_z(\rho)=\frac{\log f_\rho(z)}{1-z}.
\end{equation}
Here and throughout, traces and operator functions are restricted to the support of $\rho$. On this support, $f_\rho(z)$ is an entire function of $z$ and is strictly positive along the positive real axis. The nontrivial singularities of $S_z(\rho)$ instead arise at the complex zeros of $f_\rho(z)$, where the logarithm develops branch points. Consequently, determining where $S_z(\rho)$ is analytic amounts to finding zero-free regions of $f_\rho(z)$.

To understand the physical meaning of these zeros, we introduce the modular Hamiltonian
\begin{equation}
    K_\rho\equiv-\log\rho ,
\end{equation}
and, for real $u>0$, the tilted state
\begin{equation}
    \rho_u
    \equiv
    \frac{\rho^u}{\Tr\rho^u}.
\end{equation}
$u$ plays the role of an inverse temperature for the modular Hamiltonian. We denote expectation values in $\rho_u$ by
$\langle O\rangle_u\equiv\Tr(\rho_u O)$ and define $\delta K_{\rho,u}\equiv K_\rho-\langle K_\rho\rangle_u$. For a complex R\'enyi index $z=u+iv$,
\begin{align}
    f_\rho(u+iv)
    &=
    f_\rho(u)\,
    \Tr\!\left(
        \rho_u e^{-ivK_\rho}
    \right)
    \nonumber\\
    &=
    f_\rho(u)\,
    e^{-iv\langle K_\rho\rangle_u}\,
    \chi_u(v),
    \label{eq:f_rho_characteristic}
\end{align}
where
\begin{equation}
    \chi_u(v)
    \equiv
    \Tr\!\left(
        \rho_u e^{-iv\delta K_{\rho,u}}
    \right)
    \label{eq: zeros of characteristic function}
\end{equation}
is the characteristic function of the modular-energy distribution $\delta K_{\rho, u}$ in $\rho_u$. Since the first two factors in Eq.~\eqref{eq:f_rho_characteristic} never vanish, the zeros of $f_{\rho}(u + iv)$ will be the zeros of $\chi_u(v)$. They arise when contributions from different parts of the modular spectrum acquire phases that cancel
destructively.

\subsection{Fisher Zeros and R\'enyi Phase Transitions}
\label{sec:fisher-physical-interpretation}

Because the moment function $f_{\rho}(z)$ is the partition function of the modular Hamiltonian at complex inverse temperature
\begin{equation}
    \Tr e^{-zK_\rho}
    =
    f_\rho(z),
    \label{eq:entanglement-partition-function}
\end{equation}
its complex zeros are therefore Fisher zeros of the auxiliary modular ensemble~\cite{Fisher1965}.
For a finite system, Fisher zeros remain away from the positive real axis. In the thermodynamic limit, the leading zeros may approach a real value $u_c$ and pinch the real axis, producing a nonanalyticity in the modular free energy. We refer to such a singularity as a \emph{R\'enyi phase
transition}.

For our purpose of analytic continuation, the most important scenario would be
\begin{equation}
    1 \le u_c<2.
    \label{eq:renyi_transition_interval}
\end{equation}
After the thermodynamic limit is taken, the von Neumann and integer R\'enyi  regimes are separated by a nonanalyticity and cannot be connected by a single analytic branch. At finite size the positive real axis remains analytic, but the approach of Fisher zeros toward it prevents a size-independent analytic domain.

We emphasize that a R\'enyi transition is a transition of the auxiliary modular ensemble as $u$ is varied and does not necessarily coincide with a quantum phase transition of a microscopic parent Hamiltonian
~\cite{Chandran2014,Metlitski2009}.

\subsection{Zero-free bounds}
\label{sec:zero-free-domains}

Having identified the zeros of $f_\rho(z)$ as Fisher zeros of the modular ensemble, we now ask how closely they can approach the real R\'enyi axis. For the continuation problem, it is convenient to define the Fisher-zero clearance
\begin{equation}
    \Delta_{\mathrm F}(\rho)
    \equiv
    \inf_{\substack{f_\rho(z)=0\\ \Re z\geq1}}
    |\Im z|,
    \label{eq:fisher_clearance}
\end{equation}
with $\Delta_{\mathrm F}=\infty$ if no Fisher zero occurs in the continuation half-plane $\Re z \ge 1$. We now derive a hierarchy of lower bounds on
$\Delta_{\mathrm F}$, requiring progressively more information about the spectrum.

Let $d = \text{dim}(\mathcal{H})$ be the Hilbert space dimension and let $r$ denote the rank of the density matrix $\rho$, hence $r \le d $. We begin with a state-independent result that depends only on the rank $r$.

\begin{theorem}[Universal zero-free wedge]
\label{thm:universal_zero_free_wedge}
Let $\rho$ have rank $r\geq3$. Then
$f_\rho(z)=\Tr\rho^z$ is zero-free in
\begin{equation}
    \mathcal W_r
    =
    \left\{
        z=u+iv:\;
        u>0,\quad
        |v|<
        \frac{\pi u}{\log(r-1)}
    \right\}.
    \label{eq:universal_zero_free_wedge}
\end{equation}
For $r\leq2$, $f_\rho(z)$ is zero-free throughout $u = \Re z>0$.
The bound is sharp.
\end{theorem}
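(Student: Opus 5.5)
The plan is to remove the normalization and the phase of the largest eigenvalue, so that a zero of $f_\rho$ becomes a statement about a convex combination of points on a logarithmic spiral. Order the nonzero eigenvalues as $p_1\ge p_2\ge\cdots\ge p_r>0$, set $m=r-1$, and write $t_k\equiv\log(p_1/p_{k+1})\ge0$ for $k=1,\dots,m$. For $z=u+iv$,
\begin{equation*}
f_\rho(z)=p_1^{z}\Bigl(1+\sum_{k=1}^{m}e^{-s t_k}\Bigr),\qquad s\equiv u-iv .
\end{equation*}
Since $p_1^z\neq0$, a zero of $f_\rho$ requires
\begin{equation*}
-\frac1m=\frac1m\sum_{k=1}^m e^{-st_k}.
\end{equation*}
That is, $-1/m$ must lie in the convex hull of the curve $\Gamma_s=\{e^{-st}:t\ge0\}$. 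The normalization $\Tr\rho=1$ plays no role, because zeros are invariant under rescaling $\rho$, so the $t_k\ge0$ are otherwise arbitrary. The theorem therefore reduces to a purely geometric statement about $\Gamma_s$. By the symmetry $f_\rho(\bar z)=\overline{f_\rho(z)}$ we may take $v>0$.

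The key step is the following claim: for $u>0$, $v>0$,
\begin{equation*}
\operatorname{conv}(\Gamma_s)\cap(-\infty,0]=[-e^{-\pi u/v},0].
\end{equation*}
The curve $\Gamma_s$ is the logarithmic spiral $e^{-ut}e^{ivt}$, which starts at $1$, turns counterclockwise with strictly decreasing modulus, and first meets the negative real axis at $t=\pi/v$, at the point $-e^{-\pi u/v}$.

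I would prove the claim by exhibiting a supporting line. Take the vertical line $\Re w=-e^{-\pi u/v}$ and show that the whole spiral satisfies $\Re w\ge -e^{-\pi u/v}$. This means checking $e^{-ut}\cos(vt)\ge -e^{-\pi u/v}$ for all $t\ge0$. The only region needing care is $t\in(\pi/(2v),3\pi/(2v))$, where the cosine is negative.

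This step is where I expect the main obstacle. The naive bound fails: the minimum of $e^{-ut}\cos vt$ sits slightly before $t=\pi/v$ and is slightly below $-e^{-\pi u/v}$. A vertical supporting line is therefore not quite correct.

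What I would try first is to restrict attention to the negative real axis directly. Suppose a convex combination of spiral points equals the real number $-x<0$. Pair the combination's imaginary parts: points with $t\in(0,\pi/v)$ lie in the upper half-plane, points with $t\in(\pi/v,2\pi/v)$ lie in the lower half-plane, and later turns are contracted copies scaled by $e^{-2\pi u/v}$. Using that the spiral region is star-shaped and that every chord between points on opposite sides of the axis crosses the axis no further left than the arc point $-e^{-\pi u/v}$, I would conclude $x\le e^{-\pi u/v}$. This chord property follows because the spiral curves toward the origin (it is locally convex toward $0$), and chords from the point $1$ cross the real axis only at $1$.

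To make this rigorous I would use the tangent line to $\Gamma_s$ at $t=\pi/v$ rather than the vertical line, combined with the imaginary-part constraint. If that proves messy, I would fall back on a direct Lagrange-multiplier minimization of the real part of $\frac1m\sum_k e^{-st_k}$ subject to vanishing imaginary part.

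Given the claim, the remaining steps are short:
\begin{itemize}
\item A zero of $f_\rho$ requires $1/m\le e^{-\pi u/v}$, i.e. $v\log(r-1)\ge\pi u$. Hence $f_\rho$ is zero-free on $\mathcal W_r$ for $r\ge3$.
\item For $r\le2$ we have $m\le1$. With $m=1$, a zero needs $e^{-st_1}=-1$, which forces $ut_1=0$, so $t_1=0$; but then $e^{-st_1}=1\ne-1$. With $r=1$, $f_\rho=p_1^z\neq0$. So $f_\rho$ is zero-free on $u>0$.
\end{itemize}

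Sharpness comes from the extremal configuration of the claim, where all $t_k$ are equal. Take $p_1=1/2$ and $p_k=1/(2(r-1))$ for $k\ge2$. Then $f_\rho(z)=2^{-z}\bigl(1+(r-1)^{1-z}\bigr)$, which vanishes at $z=1\pm i\pi/\log(r-1)$, exactly on $\partial\mathcal W_r$. Replacing $1/2$ by a different weight on the distinguished eigenvalue moves the zero along the wedge boundary to other values of $u$, so the whole boundary ray is approached.
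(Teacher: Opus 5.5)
Your reduction and sharpness example match the paper's proof. You divide out the top eigenvalue, read a zero as the statement that a negative real number is an average of points on a logarithmic spiral, separate with a line through the spiral's first crossing of the negative real axis, and use the balanced two-sector state $\rho_\star$ for sharpness. One slip: the bracket should read $1+\sum_k e^{-zt_k}$ with $z=u+iv$, not $s=u-iv$. This is harmless, since zeros come in conjugate pairs. Absorbing a degenerate top eigenvalue through $t_k=0$ is fine. The paper instead splits off the multiplicity $m$ of the top eigenspace, which gives the slightly stronger $|v|\geq\pi u/\log[(r-m)/m]$, but the theorem does not need this. You also correctly saw that a vertical supporting line fails.

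The gap is that the one nontrivial step is never established: that the spiral lies entirely on one side of its tangent line at $t=\pi/v$. You list it only as something you would try, with fallbacks. The chord-pairing route does not avoid this step. Local convexity is a local statement, while the crossing point of a chord joining distant points of the spiral is controlled only by the global fact that this tangent supports the whole curve.

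The paper's Lemma~\ref{lem:log_spiral} supplies this in one line. Take $v>0$ and the real-linear functional $\mathcal L(w)=\Re w-\frac{u}{v}\Im w$ (in your orientation, $\Re w+\frac{u}{v}\Im w$), whose level lines are parallel to that tangent. Along the spiral, with $\phi=vt$ and $\alpha=v/u$,
\[
\mathcal L\bigl(e^{-zt}\bigr)=e^{-\phi/\alpha}\Bigl(\cos\phi+\tfrac{1}{\alpha}\sin\phi\Bigr),\qquad \frac{d}{d\phi}\,\mathcal L\bigl(e^{-zt}\bigr)=-\Bigl(1+\tfrac{1}{\alpha^{2}}\Bigr)e^{-\phi/\alpha}\sin\phi .
\]
So the local minima sit at $\phi=(2j+1)\pi$ with values $-e^{-(2j+1)\pi/\alpha}$, and the global minimum is $-e^{-\pi u/v}$.

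Because $\mathcal L$ is the identity on real numbers, you need no separate imaginary-part constraint and no Lagrange multipliers. Applying $\mathcal L$ to $0=1+\sum_{k=1}^{r-1}e^{-zt_k}$ gives $0\geq 1-(r-1)e^{-\pi u/v}$. For $r\geq3$ this is $v\geq \pi u/\log(r-1)$, and for $r=2$ it is a contradiction. With this inserted, your proof is complete.
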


\begin{figure}
    \centering
    \includegraphics[width=0.99\linewidth]{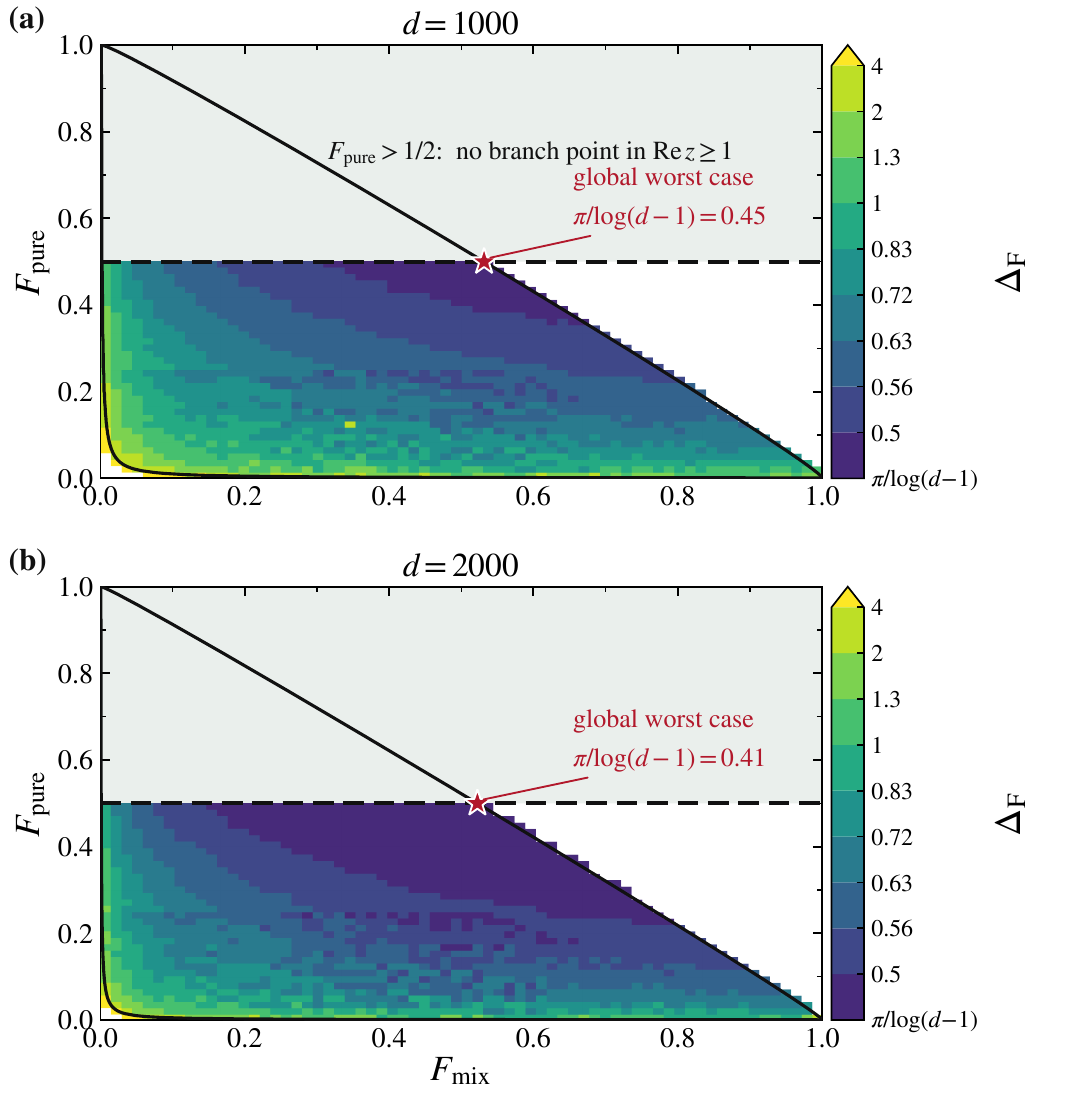}
    \caption{\textbf{Fisher-zero clearance in the joint-fidelity plane.}
    Empirical minimum of the Fisher-zero clearance $\Delta_{\mathrm F}$ over sampled spectra in the $(F_{\mathrm{mix}},F_{\mathrm{pure}})$ plane, for $d=1000$ (top) and $d=2000$ (bottom).
    Solid curves delimit the kinematically accessible fidelity region, whose upper boundary approaches $F_{\mathrm{mix}}+F_{\mathrm{pure}}=1$ as $d\to\infty$.
    For $F_{\mathrm{pure}}>1/2$, Theorem~\ref{thm:pure_fidelity_confinement} excludes all Fisher zeros from $\Re z\geq1$.
    The star marks the balanced two-sector state $\rho_\star$, which attains the exact global minimum $\Delta_{\mathrm F}=\pi/\log(d-1)$. The color scale shows the smallest clearance observed among spectra sampled within each fidelity bin and should therefore be interpreted as an empirical lower envelope rather than a rigorous conditional minimum.}
    \label{fig:fidelity_bounds}
\end{figure}

Because the wedge widens linearly with $u$, its narrowest cross-section in the continuation half-plane $u\geq1$ occurs at $u=1$. Hence
$\mathcal W_r$ contains the constant-width strip
\begin{equation}
    u \geq 1,
    \qquad
    |v|<
    \frac{\pi}{\log(r-1)}.
    \label{eq:universal_SAC_strip}
\end{equation}
Equivalently,
\begin{equation}
    \Delta_{\mathrm F}(\rho)
    \geq
    \frac{\pi}{\log(r-1)}.
\end{equation}
Thus no rank-$r$ state can have a Fisher zero closer to the real axis than $\pi/\log(r-1)$ within this half-plane. In Sec.~\ref{sec:spectral-coexistence}, we identify the density matrix that saturates this worst-case bound; many other spectra can have a substantially larger zero-free region.

The proof of Theorem 1 is given in Appendix~\ref{app:universal_wedge_proof}, but its physical origin is simple. A Fisher zero requires destructive interference among the modular-energy contributions to $\chi_u(v)$ in~\eqref{eq: zeros of characteristic function}: higher modular energies accumulate larger relative phases with $|v|$, but their tilted weights are exponentially suppressed by $u$. With only $r$ levels available, cancellation would be impossible unless $|v|/u$ is sufficiently large, giving the wedge-shaped bound.

Theorem 1 is universal because it only invokes the rank. One can obtain a sharper bound if more information about the spectrum is known. Define the total width of the modular spectrum,
\begin{equation}
    \Delta_K(\rho)
    \equiv
    \lambda_{\max}(K_\rho)-\lambda_{\min}(K_\rho)
    =
    \log\!\bigl(
        \|\rho\|_\infty\|\rho^{-1}\|_\infty
    \bigr),
    \label{eq:modular_width}
\end{equation}
where $\| \cdot \|_{\infty}$ is the $L^{\infty}$ norm, and $\rho^{-1}$ denotes the inverse on the support of $\rho$. This spectral width limits the relative phases that can develop between modular energies, leading to the following bound.

\begin{theorem}[Spectral-width bound]
\label{thm:spectral_width_bound}
For $\Delta_K(\rho)>0$, every Fisher zero $z = u+iv$ satisfies
\begin{equation}
    |v|
    \geq
    \frac{\pi}{\Delta_K(\rho)}.
    \label{eq:spectral_width_bound}
\end{equation}
If $\Delta_K(\rho)=0$, the spectrum is flat and $f_\rho(z)$ has no zeros.
\end{theorem}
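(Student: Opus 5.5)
The plan is to show that when $|v|$ is small, all the terms in $f_\rho(u+iv)$ point into a common open half-plane of $\mathbb{C}$, so they cannot sum to zero. Write the spectrum of $\rho$ on its support as $p_k=e^{-E_k}$, $k=1,\dots,r$. Let $E_{\min}=\lambda_{\min}(K_\rho)$ and $E_{\max}=\lambda_{\max}(K_\rho)$, so that $\Delta_K=E_{\max}-E_{\min}$. Then
\begin{equation}
    f_\rho(u+iv)=\sum_{k=1}^{r} e^{-uE_k}\,e^{-ivE_k}.
\end{equation}
Each coefficient $e^{-uE_k}$ is strictly positive for every real $u$. So $f_\rho(u+iv)$ is a positive combination of unit phasors $e^{-ivE_k}$, and the question reduces to where these phasors sit on the circle.

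First I strip off a common phase. Factor out $e^{-iv\bar E}$ with $\bar E=(E_{\max}+E_{\min})/2$; this is the same manoeuvre as pulling out $e^{-iv\langle K_\rho\rangle_u}$ in Eq.~\eqref{eq:f_rho_characteristic}, but centered on the midpoint of the spectrum. This gives
\begin{equation}
    e^{iv\bar E} f_\rho(u+iv)=\sum_k e^{-uE_k}\,e^{-iv(E_k-\bar E)}.
\end{equation}
Every shifted energy satisfies $|E_k-\bar E|\le \Delta_K/2$. Hence every phase obeys $|v(E_k-\bar E)|\le |v|\Delta_K/2$.

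Next, suppose $|v|<\pi/\Delta_K$. Then all phases lie strictly inside $(-\pi/2,\pi/2)$, so every cosine is strictly positive. Taking real parts,
\begin{equation}
    \Re\bigl[e^{iv\bar E}f_\rho(u+iv)\bigr]=\sum_k e^{-uE_k}\cos\bigl(v(E_k-\bar E)\bigr)>0,
\end{equation}
and therefore $f_\rho(u+iv)\neq 0$. Consequently any Fisher zero must satisfy $|v|\ge\pi/\Delta_K$, which is Eq.~\eqref{eq:spectral_width_bound}. Positivity of the weights is what makes the argument work for all $u$, so the bound is not restricted to $\Re z\ge 1$.

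Finally, the flat case. If $\Delta_K=0$, all $p_k$ equal $1/r$, and $f_\rho(z)=r^{1-z}$. This is an exponential, so it never vanishes.

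The only delicate step is centering the phases at the midpoint $\bar E$ of the modular spectrum rather than at an endpoint. Centering at an endpoint only confines the phases to an interval of length $|v|\Delta_K$, which would give the weaker threshold $\pi/(2\Delta_K)$. Centering at $\bar E$ yields the stated constant $\pi$. Sharpness is not claimed, but the bound is attained for $r=2$ with equal weights at $u=0$, where the two phasors are antipodal.
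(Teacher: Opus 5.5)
Your proposal is correct and is essentially the paper's proof: the paper also centers the phases at the midpoint $K_c=(K_{\min}+K_{\max})/2$ of the modular spectrum, uses $|v|\Delta_K<\pi$ to keep every phase $v(E_k-K_c)$ inside $(-\pi/2,\pi/2)$, and concludes that $\Re\bigl[e^{ivK_c}f_\rho(u+iv)\bigr]=\Tr\bigl[\rho^u\cos\bigl(v(K_\rho-K_c)\bigr)\bigr]>0$. Your explicit handling of the flat case via $f_\rho(z)=r^{1-z}$ is a harmless addition, as is your remark that the positivity of the weights makes the argument work for every real $u$ (the paper states it for $u>0$).
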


Proof of Theorem 2 can be found in Appendix~\ref{app:spectral_width_proof}. Geometrically, as long as $|\Im z|\,\Delta_K<\pi$, the phases accumulated across the entire modular spectrum span less than a semicircle, so their weighted sum cannot vanish. A large spectral width is therefore necessary for a Fisher zero to lie close to the real axis. 

Besides relative phases, cancellation also relies on the weights carried by widely separated parts of the spectrum. This motivates a complementary bound that incorporates spectral weight.

\begin{theorem}[Entropy-gap bound]
\label{thm:entropy_gap_clearance}
Every Fisher zero $z=u+iv$ with $u\geq1$ satisfies
\begin{equation}
    |v|
    \geq
    \frac{\pi}{
        2\left[
            S_{\mathrm{vN}}(\rho)-S_\infty(\rho)
        \right]
    }.
    \label{eq:entropy_gap_clearance}
\end{equation}
If $S_{\mathrm{vN}}=S_\infty$, then the spectrum is flat on its support and $f_\rho(z)$ is zero-free everywhere.
\end{theorem}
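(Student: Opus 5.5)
The plan is to reduce the statement to a moment inequality for the modular-energy distribution in the tilted state, combined with monotonicity of the mean modular energy in $u$.

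\textbf{Step 1 (shift by the ground modular energy).} Since $S_\infty(\rho)=-\log\|\rho\|_\infty=\lambda_{\min}(K_\rho)\equiv k_0$, write $x_i=k_i-k_0\geq 0$ for the shifted modular energies and let $p_i(u)$ be the eigenvalue weights of $\rho_u$. Equation~\eqref{eq:f_rho_characteristic} holds with any real constant in place of $\langle K_\rho\rangle_u$, since that factor is a pure phase. Hence $f_\rho(u+iv)=0$ is equivalent to
\begin{equation}
\sum_i p_i(u)\,e^{-iv x_i}=0 .
\end{equation}
Because the $p_i$ and $x_i$ are real, zeros come in conjugate pairs, so we may take $v>0$. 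Put $y_i=v x_i\geq0$. Then both $\sum_i p_i\cos y_i=0$ and $\sum_i p_i\sin y_i=0$.

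\textbf{Step 2 (elementary inequality; the main obstacle).} The task is to show that any probability distribution on $[0,\infty)$ whose characteristic value $E[e^{iy}]$ vanishes has $E[y]\geq\pi/2$. The two-point distribution with mass $1/2$ at $0$ and at $\pi$ shows this is sharp. I will prove it by exhibiting a trigonometric minorant
\begin{equation}
y\;\geq\;\frac{\pi}{2}-\frac{\pi}{2}\cos y-\sin y\qquad (y\geq0).
\end{equation}
The coefficients are fixed by requiring contact at $y=0$ and tangency at $y=\pi$, where the extremal distribution sits. Taking expectations, the cosine and sine terms vanish and $E[y]\geq\pi/2$ follows. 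The work is in verifying $h(y)=y-\pi/2+(\pi/2)\cos y+\sin y\geq0$ on $[0,\infty)$:
\begin{itemize}
\item $h(0)=h(\pi)=0$ and $h'(\pi)=0$.
\item $h'(y)=1-(\pi/2)\sin y+\cos y$, and $h'(0)=2>0$.
\item On $[0,2\pi]$, check the sign of $h$ between the critical points of $h'$. A short calculus argument, or splitting into $[0,\pi/2]$, $[\pi/2,\pi]$ and $[\pi,2\pi]$, should do it.
\item For $y\geq 2\pi$, use the crude bound $h(y)\geq y-\pi/2-\pi/2-1>0$.
\end{itemize}
I expect this inequality check to be the fiddliest part. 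The fallback, $1-\cos y\leq y$, gives only the constant $1$ in place of $\pi/2$.

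\textbf{Step 3 (monotonicity in $u$).} Step 2 gives $v\,\langle K_\rho-k_0\rangle_u\geq\pi/2$ at any zero. Next, $\frac{d}{du}\langle K_\rho\rangle_u=-\mathrm{Var}_u(K_\rho)\leq0$, because $\langle K_\rho\rangle_u=-\partial_u\log f_\rho(u)$ and $\log f_\rho$ is convex. So for $u\geq1$,
\begin{equation}
\langle K_\rho\rangle_u-k_0\;\leq\;\langle K_\rho\rangle_1-k_0=S_{\mathrm{vN}}(\rho)-S_\infty(\rho).
\end{equation}
Combining the two inequalities gives $|v|\geq \pi/\bigl(2[S_{\mathrm{vN}}-S_\infty]\bigr)$.

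\textbf{Degenerate case.} If $S_{\mathrm{vN}}=S_\infty$, then $\langle K_\rho-k_0\rangle_1=0$ with $K_\rho-k_0\geq0$. This forces every $x_i=0$, so the spectrum is flat, $f_\rho(z)=r^{1-z}$, and $f_\rho$ has no zeros. This is consistent with Theorem~\ref{thm:spectral_width_bound}.
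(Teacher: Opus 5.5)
Your proposal is correct and follows the paper's proof essentially step for step: the same trigonometric minorant $y-\pi/2+(\pi/2)\cos y+\sin y\geq0$ is applied to the shifted modular energies $v(K_\rho-K_{\min})$ in the tilted state. The monotonicity $\frac{d}{du}\langle K_\rho\rangle_u=-\mathrm{Var}_{\rho_u}(K_\rho)\leq0$ then passes from $u$ to $u=1$, and the flat case uses the same positivity argument. The calculus check you left open on $[0,2\pi]$ closes as in the paper via the factorization $h'(y)=2\cos(y/2)\bigl[\cos(y/2)-\tfrac{\pi}{2}\sin(y/2)\bigr]$, which shows $h$ first rises then falls on $[0,\pi]$ and is nondecreasing on $[\pi,2\pi]$.
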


Proof of Theorem 3 can be found in Appendix~\ref{app:entropy_gap_proof}. Since $S_\infty(\rho) \equiv  -\log\|\rho\|_\infty = \lambda_{\min} (K_{\rho})$ is the lowest modular energy, the entropy gap
\begin{equation}
    S_{\mathrm{vN}}(\rho)-S_\infty(\rho) = \Tr \big[\rho \big(K_{\rho} - \lambda_{\min}(K_{\rho}) \big) \big]
    \label{eq:entropy_gap}
\end{equation}
is the average modular energy above the ground level measured in the original state. It is a coarse measure of how much spectral weight lies at higher modular energies. Note that a large entropy gap does not necessarily force the nearby Fisher zeros to occur. As we show below, even highly entangled states with an extensive entropy gap can remain parametrically far from Fisher-zero pinching.

\subsection{Spectral competition and the extremal state}
\label{sec:spectral-coexistence}

The preceding bounds identify conditions that prevent Fisher zeros from approaching the continuation region. We now turn to the opposite question: what spectral structure actually produces a nearby zero? The essential
ingredient is competition between sectors of $\rho$ carrying comparable tilted weight. To organize this picture, we characterize the spectrum by its fidelities to two opposite limits. We define
\be
 F_{\mathrm{pure}}(\rho)
    =
    \max_{|\psi\rangle}
    \langle\psi|\rho|\psi\rangle
    =
    \|\rho\|_\infty,
    \label{eq:q_pure}
\ee
and, for a full-rank state on a $d$-dimensional Hilbert space,
\begin{equation}
    F_{\mathrm{mix}}(\rho)
    =
    F\!\left(\rho,\frac{P_\rho}{r}\right)
    =
    \frac{(\Tr\sqrt{\rho})^2}{r},
    \label{eq:x_mixed}
\end{equation}
where $F(\rho, \sigma) = (\Tr\sqrt{\sqrt{\sigma} \rho \sqrt{\sigma}})^2$. We first use these two fidelities to constrain where nearby Fisher zeros can occur. We then turn to the intermediate regime where the strongest spectral competition can arise, and identify a two-sector spectrum that exactly saturates the universal bound of Theorem~\ref{thm:universal_zero_free_wedge}.

\paragraph{Pure-state proximity.} 
If $F_{\mathrm{pure}}(\rho) > 1/2$, one eigenvalue already carries more than half of the weight in $\rho$, and tilting with $u \ge 1$ further enhances this dominance. All the remaining levels will be too weak to cancel it regardless of their phases. This constraints the Fisher zeros in the following way:

\begin{theorem}[Pure-state-fidelity bound]
\label{thm:pure_fidelity_confinement}
Let $\rho$ have rank $r\geq2$ and let
$q=F_{\mathrm{pure}}(\rho)$. If $q>1/2$, every Fisher zero
$z=u+iv$ satisfies
\begin{equation}
    u
    \leq
    u_{\max}(q,r)
    \equiv
    \frac{\log(r-1)}
    {\log\!\left[(r-1)q/(1-q)\right]}
    <1.
    \label{eq:pure_fidelity_confinement}
\end{equation}
\end{theorem}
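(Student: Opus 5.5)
Let $\lambda_1=q$ be the largest eigenvalue and $\lambda_2,\dots,\lambda_r$ the rest, with $\sum_{k\ge2}\lambda_k=1-q$. The plan is a dominance argument: show that for $u>u_{\max}(q,r)$ the top term strictly dominates, i.e.
\[
\lambda_1^u>\sum_{k\ge2}\lambda_k^u .
\]
Then for $z=u+iv$ the triangle inequality gives
\[
|f_\rho(z)|\ge \lambda_1^u-\Bigl|\sum_{k\ge2}\lambda_k^{u+iv}\Bigr|\ge \lambda_1^u-\sum_{k\ge2}\lambda_k^u>0,
\]
independently of $v$. So every zero must have $u\le u_{\max}$.

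The key step is to bound $\sum_{k\ge2}\lambda_k^u$ for $0<u\le1$. The map $x\mapsto x^u$ is concave, so by Jensen (equivalently, the power-mean inequality) the sum over the $r-1$ remaining eigenvalues is maximized when they are equal:
\[
\sum_{k\ge2}\lambda_k^u\le (r-1)^{1-u}(1-q)^u .
\]
For $u\ge1$ the bound $\sum_{k\ge2}\lambda_k^u\le (1-q)^u$ holds instead, which is even smaller. Dominance is therefore guaranteed once
\[
q^u>(r-1)^{1-u}(1-q)^u ,
\]
that is, $u\log\bigl[(r-1)q/(1-q)\bigr]>\log(r-1)$. Since $q>1/2$ gives $q/(1-q)>1$, the logarithm on the left is strictly positive. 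Dividing yields exactly $u>u_{\max}(q,r)$. For $u\ge1$ dominance follows directly from $q>1-q$, consistent with the $u\ge1$ bound above.

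It then remains to check that $u_{\max}<1$. This holds because $\log[(r-1)q/(1-q)]=\log(r-1)+\log[q/(1-q)]>\log(r-1)\ge0$. The case $r=2$ needs a separate remark: there $u_{\max}=0$, and the claim reads that no zeros exist with $u>0$. This follows directly from $q^u>(1-q)^u$ for $u>0$, in line with Theorem~\ref{thm:universal_zero_free_wedge}.

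The main subtlety I expect is at the boundary $u=u_{\max}$ itself. There equality in the Jensen bound requires equal $\lambda_k$, and equality in the triangle inequality requires aligned phases, so a zero can occur exactly at $u=u_{\max}$. This is why the statement uses $\le$. A short remark should note that the non-strict inequality is sharp for the flat-tail spectrum.
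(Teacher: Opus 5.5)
Your proof is correct and follows the paper's argument, phrased contrapositively: the paper also uses the triangle inequality to get $q^u\le \Tr(Q_\star\rho^u)$ at a zero, bounds the tail by $(r-1)^{1-u}(1-q)^u$ via concavity for $0<u\le1$, and uses $(1-q)^u<q^u$ (convexity) to exclude $u\ge1$. Your sharpness remark is also consistent with the paper, since the two-sector family with $q>1/2$ has all its zeros exactly on $\Re z=u_{\max}(q,r)$.
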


Thus, once the largest eigenvalue exceeds $1/2$, the entire continuation half-plane $u \geq1$ is free of R\'enyi singularities. The Proof of Theorem 4 can be found in Appendix~\ref{app:pure_fidelity_proof}.

\paragraph{Mixed-state proximity.}

At the opposite spectral limit, $F_{\mathrm{mix}}(\rho)$ quantifies proximity to a flat spectrum. A rigorous finite-dimensional bound is derived in Appendix~\ref{app:mixed_fidelity_proof}. Writing 
$x=F_{\mathrm{mix}}(\rho)$, its large-$r$ form is
\begin{equation}
    \Delta_{\mathrm F}(\rho)
    \gtrsim
    \frac{\pi}{\log[r\,x(1-x)]},
    \qquad
    r\to\infty,
    \label{eq:mixed_fidelity_asymptotic_x}
\end{equation}
for {\it fixed} $0<x<1$. This bound is the weakest at $x=1/2$, indicating that intermediate mixedness is the least protected regime. At $x = 1$ the state is maximally mixed and becomes more protected again with $\Delta_F (\rho) = \infty$.

\paragraph{Two-sector competition and the extremal state.}
The most interesting case occurs in the intermediate regime where neither a single eigenvalue nor a nearly flat spectrum dominates. A simple realization is a spectrum containing one distinguished level and one degenerate sector. Let $|\psi_\star\rangle$ be the distinguished eigenstate and $P_\star=|\psi_\star\rangle\langle\psi_\star|$ its projector,  
and define
\begin{equation}
    \rho(q)
    =
    qP_\star
    +
    \frac{1-q}{r-1}
    (P_\rho-P_\star),
    \qquad
    0<q<1.
    \label{eq:two_sector_family}
\end{equation}
where $q = F_{\mathrm{pure}}(\rho)$ as before, and $P_\rho$ is the projector onto the support of $\rho$. The isolated level carries total weight $q$, and the remaining probability is distributed uniformly over $r-1$ states. Its Fisher zeros are readily obtained as
\begin{equation}
    z_k
    =
    \frac{
        \log(r-1)+i(2k+1)\pi
    }{
        \log[(r-1)q/(1-q)]
    },
    \qquad
    k\in\mathbb Z.
    \label{eq:two_sector_zeros}
\end{equation}

This family makes the origin of a R\'enyi phase transition transparent. For $q>1/r$, the isolated level has the lower modular energy, but the degenerate sector gains an entropic advantage from its multiplicity. Changing $u$ therefore tunes a competition between energy and entropy. The two sectors carry equal tilted weight at
\begin{equation}
    u_c
    =
    \frac{\log(r-1)}
         {\log[(r-1)q/(1-q)]}
    \label{eq:two_sector_uc}
\end{equation}
and the first Fisher zero occurs when they acquire a relative phase of $\pi$.

Equivalently, a transition can be realized at any prescribed $u_c>0$ by choosing 
\begin{equation}
    q=\frac{1}{1+(r-1)^{\,1-1/u_c}}.
    \label{eq:q_for_uc}
\end{equation}
Following~\eqref{eq:two_sector_uc}, the zeros can then be written as 
\begin{equation}
    z_k = u_c + i\,\frac{(2k+1)\pi u_c}
    {\log(r-1)}.
\label{eq:two_sector_zeros_uc}
\end{equation}
If $\log r\propto V$ for a thermodynamic volume $V$, the leading zeros approach the real axis as $1/V$, while the dominant modular sector switches discontinuously across $u_c$. The two-sector family therefore gives an exact realization of a first-order R\'enyi phase transition. From~\eqref{eq:q_for_uc}, the dangerous scenario of $1 \le u_c < 2$ occurs when
\be
\frac{1}{1 + \sqrt{r-1}} < q \le \frac{1}{2}
\ee
At $q=1/2$ and $u_c=1$, define 
\be \label{eq:critical two sector state definition}
\rho_\star\equiv\rho(1/2) = \frac{1}{2} P_{\star} + \frac{1}{2} \frac{P_{\rho} - P_{\star}}{r-1},
\ee
the closest Fisher zero is then
\begin{equation}
    z_\star
    =
    1+\frac{i\pi}{\log(r-1)}.
    \label{eq:critical_zero}
\end{equation}
Thus $\rho_\star$ saturates
Theorem~\ref{thm:universal_zero_free_wedge} and realizes the exact worst-case Fisher-zero clearance at fixed rank $r$.

Fig~\ref{fig:fidelity_bounds} summarizes this structure in the joint-fidelity plane. The range of $(F_{\mathrm{mix}}, F_{\mathrm{pure}})$ values realizable by density matrices is bounded above by spectra in which the remaining weight $1-q$ is distributed uniformly over the other $r-1$ levels, and below by spectra where it is concentrated onto as few levels as possible. In the large-$r$ limit, these boundaries approach $F_{\mathrm{mix}} + F_{\mathrm{pure}}=1$ and the coordinate axes, respectively. The two-sector family lies on the upper boundary, with the extremal state $\rho_{\star}$ at
\begin{equation}
    (x_\star,q_\star)
    =
    \left(
        \frac12+\frac{\sqrt{r-1}}{r},
        \frac12
    \right)
    \xrightarrow{r\to\infty}
    \left(\frac12,\frac12\right).
    \label{eq:fidelity_star}
\end{equation}
where $x_{\star} = F_\text{mix}(\rho_{\star})$. This is the intermediate regime where neither a dominant eigenvalue nor an approximately flat spectrum protects the continuation. The grey area above $q=1/2$ is the zero-free region guaranteed by  Theorem~\ref{thm:pure_fidelity_confinement}. Nearby zeros become possible for $q \le 1/2$. Also shown is the smallest Fisher-zero clearance found numerically among sampled spectra in each $(x,q)$ bin, as indicated by the color scale. The numerical results are consistent with the analytic bounds, with the smallest clearances occurring near the extremal state $\rho_{\star}$.

\subsection{Many-body states}
\label{sec:entanglement-scaling-fisher}

We now examine how the general results above play out in several familiar classes of many-body states. We take $\rho=\rho_A$, the reduced density matrix of subsystem $A$, and illustrate how different structures of the entanglement spectrum can lead to very different Fisher-zero behavior.

\paragraph{Factorized and Gaussian spectra.}

Consider a factorized reduced density matrix,
\begin{equation}
    \rho_A
    =
    \bigotimes_{a=1}^{M}\rho_a,
    \qquad
    f_{\rho_A}(z)
    =
    \prod_{a=1}^{M}f_{\rho_a}(z).
    \label{eq:tensor_product_moment}
\end{equation}
The entropy is additive, whereas the Fisher-zero set is the union of those of the individual factors. Thus extensive entanglement does not by itself lead to Fisher-zero pinching: for example, repeated copies of a fixed factor have entropy proportional to $M$ while their Fisher-zero locations remain unchanged.

Free-fermion Gaussian states provide a natural many-body realization of this structure~\cite{PeschelEisler2009}. Their reduced density matrices factorize into independent fermionic modes
whose Fisher zeros lie on the imaginary axis. This provides an example where extensive entanglement coexists with an entire right half-plane free of R\'enyi singularities.

\paragraph{Area-law states.}

Theorem~\ref{thm:entropy_gap_clearance} gives a general constraint for area-law states.

\begin{corollary}[Area-law states]
\label{cor:area_law_clearance}
If $S_{\mathrm{vN}}(\rho_A) \leq a|\partial A|+b$ where $\partial A$ denotes the boundary of $A$, then
\begin{equation}
    \Re z\geq1,
    \qquad
    |\Im z|
    <
    \frac{\pi}{
        2(a|\partial A|+b)
    }
    \label{eq:area_law_strip}
\end{equation}
is zero-free.
\end{corollary}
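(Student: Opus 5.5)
The corollary follows directly from Theorem~\ref{thm:entropy_gap_clearance}, with one elementary inequality in between. Theorem~\ref{thm:entropy_gap_clearance} says that every Fisher zero $z=u+iv$ with $u\ge1$ satisfies $|v|\ge \pi/\{2[S_{\mathrm{vN}}(\rho_A)-S_\infty(\rho_A)]\}$. The plan is to bound the entropy gap from above by the area-law budget $a|\partial A|+b$. Since the clearance is a decreasing function of the gap, this gives a lower bound on $|\Im z|$ for every zero in $\Re z\ge1$. The strip in Eq.~\eqref{eq:area_law_strip} is then free of zeros.

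The key inequality is $S_{\mathrm{vN}}-S_\infty\le S_{\mathrm{vN}}$. It holds because $S_\infty(\rho_A)=-\log\|\rho_A\|_\infty\ge0$, as $\|\rho_A\|_\infty\le1$. Equivalently, in the form of Eq.~\eqref{eq:entropy_gap}, the lowest modular energy $\lambda_{\min}(K_\rho)$ is nonnegative. Combining this with the hypothesis gives $0\le S_{\mathrm{vN}}-S_\infty\le a|\partial A|+b$.

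Next I split into two cases.
\begin{itemize}
\item If the gap is strictly positive, any zero with $u\ge1$ has $|v|\ge \pi/[2(S_{\mathrm{vN}}-S_\infty)]\ge \pi/[2(a|\partial A|+b)]$. Such a zero therefore lies outside the open strip $|\Im z|<\pi/[2(a|\partial A|+b)]$.
\item If the gap vanishes, the second clause of Theorem~\ref{thm:entropy_gap_clearance} says $f_{\rho_A}$ has no zeros at all, and the claim is trivial.
\end{itemize}

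One degenerate situation needs a remark: $a|\partial A|+b=0$. This forces $S_{\mathrm{vN}}=0$, so $\rho_A$ is pure and $f_{\rho_A}\equiv1$, which has no zeros. The strip should then be read as the whole half-plane, consistent with the bound $\pi/0=\infty$.

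There is no real obstacle here. The only points needing care are the sign of $S_\infty$, which makes the gap no larger than $S_{\mathrm{vN}}$, and the boundary cases with zero gap or zero budget.
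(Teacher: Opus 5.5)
Your proposal is correct and matches the paper's proof: use $S_\infty(\rho_A)\ge 0$ to get $S_{\mathrm{vN}}-S_\infty\le S_{\mathrm{vN}}\le a|\partial A|+b$, then substitute into Theorem~\ref{thm:entropy_gap_clearance}. Your explicit handling of the zero-gap and zero-budget cases is a harmless refinement that the paper leaves implicit.
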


An area law therefore bounds how rapidly Fisher zeros can approach the real axis. 
In one dimension,
$|\partial A|=O(1)$ for an interval, so an area law guarantees an $O(1)$ zero-free strip. For example, for a matrix product state of fixed bond dimension $\chi$, a single cut obeys $S_{\mathrm{vN}}\leq\log\chi$ and hence
\begin{equation}
    \Delta_{\mathrm F}
    \geq
    \frac{\pi}{2\log\chi}.
\end{equation}

Volume-law entanglement, in contrast, is compatible with either behavior. On one hand, the two-sector family gives the worst-case example: when $\log r\propto V_A$, its leading Fisher zeros approach the real axis as $O(V_A^{-1})$. On the other hand, factorized spectra can exhibit volume-law entanglement without any such Fisher-zero pinching.

\paragraph{Haar-random states.}

Typical highly entangled spectra provide a useful contrast with the two-sector extremal state. Consider a Haar-random pure state on
$\mathbb C^{d_A}\otimes\mathbb C^{d_B}$ with $d_A\leq d_B$, keeping $d_A / d_B \in(0,1]$ fixed as the dimensions grow. Despite their
volume-law entanglement, these states become increasingly far from Fisher-zero pinching:

\begin{theorem}[Typical Haar clearance]
\label{thm:haar_clearance}
Let $\rho_A$ be the reduced density matrix of a Haar-random pure state with $d_A/d_B \in(0,1]$. Then Fisher-zero clearance 
obeys
\begin{equation}
    \Delta_{\mathrm F}(\rho_A)
    \xrightarrow[d_A\to\infty]{\mathrm{prob.}}
    \infty .
    \label{eq:haar_clearance_diverges}
\end{equation}
Equivalently, for every fixed $T>0$,
\begin{equation}
    \Pr\!\left[
        \Delta_{\mathrm F}(\rho_A)\leq T
    \right]
    \longrightarrow0 .
\end{equation}
\end{theorem}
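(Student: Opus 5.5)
The plan is to reduce the statement to a property of the Marchenko--Pastur (MP) law, and then transfer that property back to finite $d_A$ using concentration of the empirical spectrum.

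\textbf{Rescaling.} Write the eigenvalues of $\rho_A$ as $\lambda_i=x_i/d_A$, with $c\equiv d_A/d_B\in(0,1]$. Then
\begin{equation}
    f_{\rho_A}(z)=d_A^{\,1-z}\,g_{d_A}(z),
    \qquad
    g_{d_A}(z)\equiv\frac{1}{d_A}\sum_{i=1}^{d_A}x_i^{\,z}.
\end{equation}
The prefactor never vanishes, so the Fisher zeros of $\rho_A$ are exactly the zeros of $g_{d_A}$.

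\textbf{Limiting object.} By standard random-matrix results, the empirical measure of the $x_i$ converges almost surely to the MP law $\mu_c$. Its support is $[a_-,a_+]$ with $a_\pm=(1\pm\sqrt c)^2$. Moreover, the extreme rescaled eigenvalues converge to $a_\pm$; for $c=1$ the smallest ones tend to $0$, which is harmless since $\Re z\ge1$. Hence on every compact subset of $\{\Re z\ge1\}$ we have $g_{d_A}\to g_c(z)\equiv\int x^z\,d\mu_c(x)$ uniformly in probability, because $|x^z|\le x^{\Re z}$ is uniformly bounded once $x_{\max}\le a_++\epsilon$.

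\textbf{Zero-freeness on compacts.} Fix $T>0$. If $g_c$ has no zeros in the half-strip $\Sigma_T=\{\Re z\ge1,\ |\Im z|\le T\}$, then on any compact piece $\Sigma_T\cap\{\Re z\le U\}$ we get $|g_c|\ge\delta>0$. Uniform convergence then excludes zeros of $g_{d_A}$ there with probability tending to one; Hurwitz's theorem is not even needed, only the lower bound. So the first key step is to show $g_c\neq0$ on $\Sigma_T$.

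For $c=1$ this is explicit: the Mellin transform of MP is the analytically continued Catalan number,
\begin{equation}
    g_1(z)=\frac{\Gamma(2z+1)}{\Gamma(z+1)\Gamma(z+2)},
\end{equation}
which has no zeros for $\Re z>-1/2$. For general $c$, I would try to write $g_c(z)$ as a Gauss hypergeometric function, the analytic continuation of the Narayana-polynomial moments, and then prove nonvanishing for $\Re z\ge1$. I expect this to be the main obstacle. If a closed form is not conclusive, I would first try a perturbative argument in $c$ starting from $c=1$. Failing that, I would use the weaker but sufficient fact that we only need $|\Im z|\le T$ with $u$ large or bounded, and combine it with the edge analysis below.

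\textbf{Large $u$ (non-compact direction).} For $u\to\infty$ with $|v|\le T$, the integral localizes at the upper edge $a_+$. Using the square-root edge density of $\mu_c$, Laplace's method gives
\begin{equation}
    g_c(z)\simeq C\,a_+^{\,z}\,z^{-3/2}\bigl(1+O(1/u)\bigr),
\end{equation}
which is nonzero uniformly in $|v|\le T$ for $u\ge U_0$.

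For finite $d_A$ the analogous statement needs control of the top eigenvalues. Normalize by $x_{\max}^{\,z}$. The top eigenvalue has weight $1/d_A$, and the next $k$ eigenvalues lie within $O(d_A^{-2/3})$ by edge rigidity. The phases $e^{iv\log(x_i/x_{\max})}$ of these dominant terms then differ by $O(T\,d_A^{-2/3})$, so they add nearly coherently and cannot cancel. A cleaner route for this regime is to apply the idea behind Theorem~\ref{thm:spectral_width_bound} to the truncated top part of the spectrum, where the effective modular width is small, and to bound the remainder using $u$ large. I would then choose $U_0$ and the truncation level jointly so that the compact region and the tail region overlap. Together this gives $\Pr[\Delta_{\mathrm F}(\rho_A)\le T]\to0$ for every fixed $T$, which is the claimed convergence in probability.
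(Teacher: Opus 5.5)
\textbf{Gap: zero-freeness of the limit for $c<1$.} The step you flag as the main obstacle is a real gap. You need $g_c(z)=\int x^z\,d\mu_c(x)\neq 0$ on the bounded rectangle $\{1\le\Re z\le U_0,\ |\Im z|\le T\}$ for every $c\in(0,1)$, and your argument supplies this only at $c=1$. Your Catalan formula there is correct; by Legendre duplication it equals the paper's $4^z\Gamma(z+\tfrac12)/[\sqrt\pi\,\Gamma(z+2)]$. None of your fallbacks covers $c<1$:
\begin{itemize}
\item A perturbation argument from $c=1$ only reaches $c$ in a neighbourhood of $1$, and that neighbourhood depends on $T$ and $U_0$.
\item The edge (Laplace) analysis only controls $u\ge U_0$, so it says nothing about the bounded rectangle.
\end{itemize}
No cleverer finite-$d_A$ argument can sidestep this input. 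If $g_c$ vanished at some $z_0$ with $\Re z_0>1$, your locally uniform convergence plus Hurwitz's theorem would give zeros of $g_{d_A}$ near $z_0$ with probability tending to one, and $\Delta_{\mathrm F}$ would stay bounded. The paper's elementary bounds do not help either: Theorems~\ref{thm:spectral_width_bound} and~\ref{thm:entropy_gap_clearance} applied at $u=1$ only exclude a strip of fixed finite width around the real axis, not one of arbitrary half-width $T$. The hypergeometric (Narayana) continuation you mention is the right object, but proving it does not vanish on $\Re z\ge1$ is a nontrivial theorem. The paper imports it from Proposition~5.1 of Ref.~\cite{cunden2019moments}, which places all zeros of $M_c$ on the line $\Re z=-1/2$ for $0<c<1$. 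As written, your proof establishes the statement only for $c=1$.

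\textbf{The large-$u$ part is fine, by a different route.} Your non-compact argument can be completed, and it differs from the paper's. Drop the edge-rigidity idea: for intermediate $u$ (for example $u$ growing slowly with $d_A$), many eigenvalues in an $O(1)$ log-window contribute, not just the top $k$. Instead, make the truncation route precise as follows.
\begin{itemize}
\item Set $\ell=\pi/(2T)$ and apply Lemma~\ref{lem:window_cancellation} to the modular window $[K_{\min},K_{\min}+\ell]$. A zero with $|\Im z|\le T$ then requires tilted tail weight $\delta_u\ge\sqrt2-1$.
\item Tail eigenvalues lie below $x_{\max}e^{-\ell}$, while a block of $\mu_{d_A}$-mass $p_{d_A}$ lies above $x_{\max}e^{-\ell/2}$. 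Hence $\delta_u\le e^{-u\ell/2}/p_{d_A}$.
\item Since $p_{d_A}$ converges in probability to a positive Mar\v{c}enko--Pastur mass, this bound holds uniformly in all $u\ge U_0$ with probability tending to one.
\end{itemize}
The paper instead applies the entropy-gap theorem to the tilted state $\rho^U/\Tr\rho^U$. Its scaled gap $\log x_{\max}-g_{d_A}'(U)/g_{d_A}(U)$ converges to the Mar\v{c}enko--Pastur counterpart, which tends to $0$ as $U\to\infty$. Your version avoids convergence of the log-derivative; the paper's reuses Theorem~\ref{thm:entropy_gap_clearance}. Either way, this part handles only $\Re z\ge U_0$, so the bounded rectangle still needs the input from Ref.~\cite{cunden2019moments}.
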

Thus, with increasing subsystem dimension, typical Haar-random states have no Fisher zeros within any fixed distance of the real axis in the continuation half plane. Proof of the theorem can be found in Appendix~\ref{app:haar_clearance}. We want to highlight that Ref.~\cite{cunden2019moments} has already established that the limiting moment function $M_c(z) $ (see Appendix~\ref{app:haar_clearance} for the definition) has no zeros in the continuation half-plane $\text{Re}(z)\geq 1$. However, they leave open the possibility of zeros escaping to
arbitrarily large real parts while remaining at a bounded
distance from the real axis. Theorem~\ref{thm:haar_clearance} fills in this gap.

Figure~\ref{fig:haar-clearance} shows the corresponding finite-size behavior. Since the most relevant states for Fisher-zero pinching are those in the low-clearance tail, we track the lower 10$\%$ of the clearance distribution over Haar-random reduced states for several $d_A/d_B$ ratios. The clearance increases with $d_A$ in every case over the numerically accessible range. This behavior is in sharp contrast with the two-sector extremal state, whose clearance decreases as $\pi/\log(d_A-1)$.

\begin{figure}[t]
    \centering
    \includegraphics[width=\linewidth]
    {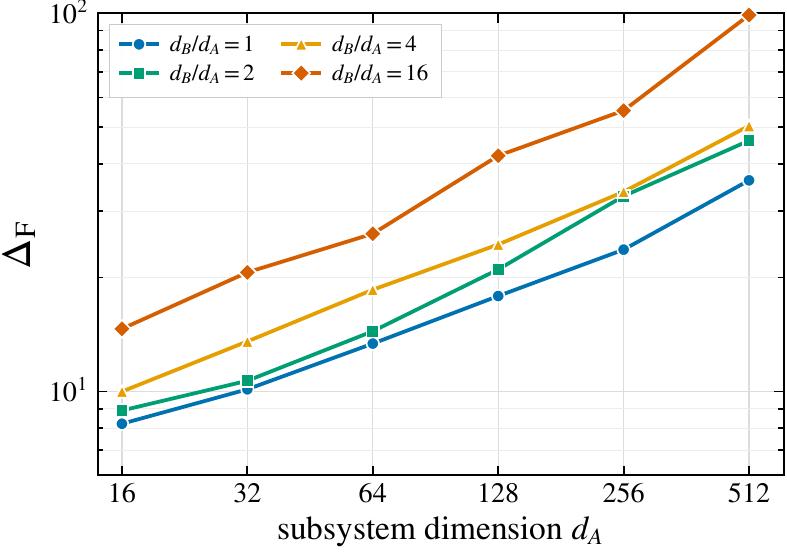}
    \caption{\textbf{Fisher-zero clearance for Haar-random reduced states.}
    For each subsystem dimension $d_A$ and ratio $d_B/d_A$, we show the lower $10\%$ of the empirical $\Delta_{\mathrm F}$ distribution obtained from $150$ Haar-random states. The clearance increases with subsystem size for all ratios shown, in agreement with Theorem~\ref{thm:haar_clearance}.}
    \label{fig:haar-clearance}
\end{figure}

Finally, we turn to how the analytic information above enters the SAC construction. SAC requires choosing a computational domain to be mapped to the unit disc. In practice, we choose a constant-width strip rather than the universal wedge of Theorem~\ref{thm:universal_zero_free_wedge}. This choice is motivated by two considerations: First, the state-dependent examples show that the actual analytic domain can be substantially larger than the conservative universal wedge. Second, the geometry of the domain affects the SAC performance after conformal mapping: for the finite-data continuation considered here, the strip geometry provides a more sensitive and accurate reconstruction than the wedge geometry. The origin of this difference is discussed in detail in Appendix~\ref{app:strip-versus-wedge}.

\section{Stabilized Analytic Continuation}
\label{sec:SAC}

With the analytic structure established in Sec.~\ref{sec:domain_analyticity} and the constant-width strip chosen as the computational domain, we now turn to reconstructing the von Neumann entropy from finitely many integer R\'enyi entropies.
We adapt the stabilized analytic continuation framework of Ciulli and
Spearman~\cite{CS1,CS2,CS3,CS4,CS5} to this problem~\cite{vijay2026analytically}. In the noiseless setting, the construction reduces to a closed-form estimator that is linear in the input R\'enyi entropies, with coefficients fixed entirely by the sampling points and chosen computational domain. Sec.~\ref{sec:the methodology of SAC} gives a brief introduction of the SAC method. The conformal map used to implement the continuation is described in Sec.~\ref{subsec:Two-step conformal mapping}.

\begin{figure*}[t]
\includegraphics[width=1\linewidth]{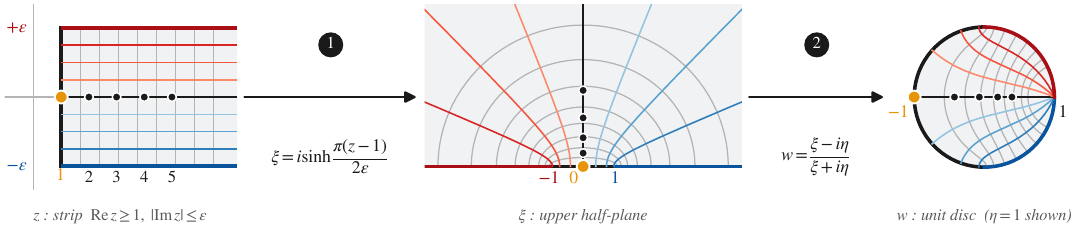}
\caption{\textbf{Two-step conformal transformation.}
The semi-infinite strip $\{\Re z\geq1,\ |\Im z|\leq\epsilon\}$ is mapped to the upper half-plane by $\xi=i\sinh[\pi(z-1)/(2\epsilon)]$, and subsequently to the unit disc by $w=(\xi-i\eta)/(\xi+i\eta)$. The colored curves track a coordinate net through the two transformations. The three boundary segments of the strip map to the unit circle. Black dots denote the integer R\'enyi inputs $z=n$, while the orange point denotes the von Neumann target point, $z=1\mapsto w=-1$. The figure is drawn with $\eta=1$ for clarity; decreasing $\eta$ moves the mapped integer R\'enyi points toward $w=1$.}
\label{fig:two-step map}
\end{figure*}

\subsection{The Method} \label{sec:the methodology of SAC}
Stabilized analytic continuation (SAC)~\cite{vijay2026analytically,CS1, CS2, CS3, CS4, CS5} selects, from the family of analytic functions consistent with a finite set of data points, the unique continuation that minimizes a prescribed norm. Unlike polynomial or rational extrapolation, it does not assume a finite-dimensional functional form.

Consider an unknown complex function $F(w)$ in the unit disc. Ciulli and Spearman showed that $F(w)$ is uniquely determined if the following conditions hold:
\begin{enumerate}
    \item $F(w)$ is analytic for $|w|<1$ and satisfies $F(w)=\overline{F(\bar w)}$;
    \item $F(w)$ takes prescribed real values $a_i\in \mathbb{R}$ at $N$ points $w_i\in(-1,1)$;
    \item among all analytic functions satisfying the above two constraints, $F$ minimizes the boundary norm
    \be
    \label{eq:normB}
    ||F|| =  \frac{1}{2\pi} \int_0^{2\pi} 
    \left|\frac{d}{d \theta}\mbox{Im} F(e^{i \theta}) \right|^2 d \theta 
    \ee
\end{enumerate} 
The rationale for this choice of norm is discussed in Appendix~\ref{sec:choice of norm}. Since Eq.~\eqref{eq:normB} is invariant under a constant shift, $F(w) \to F(w) + c$, it actually defines a seminorm and constrains only the nonconstant part of the function. It is therefore convenient to work with the subtracted function $\widetilde F(w)=F(w) - F(w_1)$ for which $\widetilde F(w_1)=0$. The Ciulli--Spearman construction then uniquely determines the minimum-norm continuation $\widetilde F(w)$, while the known value $F(w_1)=a_1$ restores the additive constant and hence the full function $F(w)$.

These conditions select a unique analytic continuation consistent with the finite data set. Although the minimization is formally over an infinite-dimensional space of analytic functions, the Ciulli-Spearman construction expresses the minimum norm entirely in terms of the interpolation data:
\be \label{eq:delta2min expression}
\delta^2_{\text{min}} 
= \sum_{i,j = 2}^N  (A^{-1})_{ij}   \left(a_i - a_1 \right) \left(a_j - a_1 \right),
\ee
where the matrix elements of $A$ are defined in Eq.~\eqref{eq:A_matrix_elements}. Thus, the functional minimization reduces to the evaluation of a finite-dimensional quadratic form.
The quantity $\delta_{\mathrm{min}}^2$ measures the minimum boundary structure required to accommodate the data. In particular, data that carry the imprint of a singularity on the boundary of the unit disk generally yield a larger minimum norm than data without such a signature.

We now adapt this construction to extract the von Neumann entropy from a small, noiseless set of integer R\'enyi entropies. The input data are $S_{z_i}(\rho)$ at R\'enyi indices $z_i=2,\ldots,N+1$. In Sec~\ref{subsec:Two-step conformal mapping}, we construct a conformal map $z\mapsto w$ that sends the computational domain with strip geometry to the unit disk, the integer R\'enyi points $\{z_{i}\}$ to $\{w_i\}\subset(-1,1)$, and the von Neumann point $z=1$ to the boundary point $w=-1$. With this mapping in place, we introduce the discrepancy function

\begin{align}\label{eq:Discrepancy function}
    D_{\alpha}(z) = \frac{S_{z}(\rho)}{z-1} - \frac{\alpha}{z-1} 
\end{align}
where $\alpha\in \mathbb{R}$ is a variational parameter. Under the map $z\mapsto w$, the function $D_\alpha(w)\equiv D_\alpha(z(w))$ plays the role of $F(w)$ in the Ciulli--Spearman construction. It is analytic in the open unit disk, satisfies $D_\alpha(w)=\overline{D_\alpha(\bar w)}$, and takes the prescribed values $a_i(\alpha) = \frac{S_{z_i}(\rho)-\alpha}{z_i-1}$ at the data points $w_i$.

The essential observation is that the discrepancy function converts the unknown von Neumann limit $n\to 1$ into a pole-cancellation condition. Expanding about $z=1$ gives $D_\alpha(z) = \frac{S_{\mathrm{vN}}(\rho)-\alpha}{z-1} + O(1)$. Thus, for $\alpha\neq S_{\mathrm{vN}}(\rho)$, $D_\alpha$ develops an artificial simple pole at $z=1$, which is mapped to the boundary point $w=-1$ of the unit disc. The boundary norm in Eq.~\eqref{eq:normB} is sensitive to this singularity and formally diverges unless the residue vanishes. This pole is canceled precisely when $\alpha=S_{\mathrm{vN}}(\rho)$. For each candidate value of $\alpha$, the Ciulli--Spearman construction determines the minimum boundary norm $\delta_{\min}^2(\alpha)$ among analytic functions consistent with the corresponding finite set of discrepancy-function values. Since the interpolation values $a_i(\alpha)$ depend linearly on $\alpha$, Eq.~\eqref{eq:delta2min expression} implies that $\delta_{\min}^2(\alpha)$ is quadratic in $\alpha$. Its minimum can be obtained analytically, yielding the closed-form SAC estimate given in Eq.~\eqref{eq:SAC_estimator}. Thus, in the noiseless setting, the continuation reduces to an explicit estimator built from the finite set of integer R\'enyi entropies. The closed-form estimator also allows its intrinsic properties to be analyzed. Appendix~\ref{sec:RKHS_SAC} formulates SAC as a minimum-norm interpolation problem in a reproducing-kernel Hilbert space, with a norm equivalent to Eq.\eqref{eq:normB}, the interpolation gram matrix same as Eq.\eqref{eq:A_matrix_elements}, and the estimator in Eq.\eqref{eq:SAC_estimator} selects the value of $\alpha$ for which the minimum interpolation norm is smallest.
Appendix~\ref{app:Nstar-params} examines how the conformal-map parameters control data-point clustering, and estimates a practical threshold on the number of R\'enyi entropies used for the chosen set of conformal parameters. For noisy inputs, the uncertainties must instead be incorporated into the SAC optimization, see Ref.~\cite{vijay2026analytically}.

\begin{figure*}[t]
    \centering
    \includegraphics[width=\textwidth]{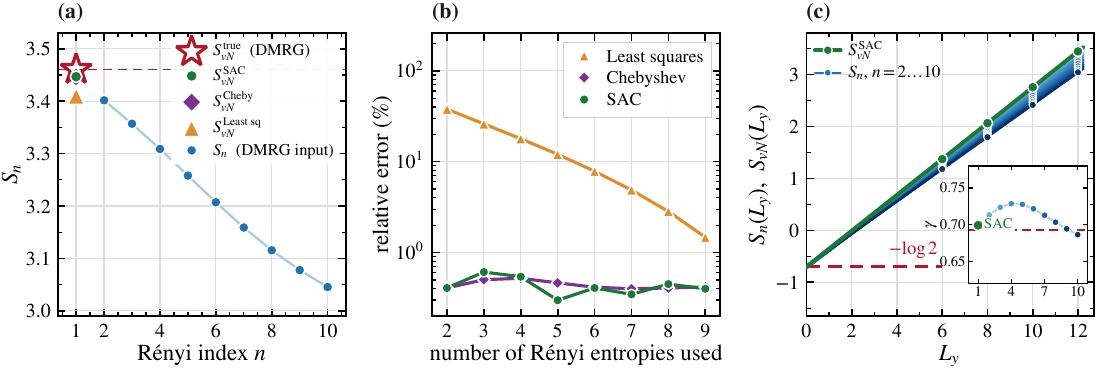}
    \caption{\textbf{Topological entanglement entropy in the toric code.}
    Results for $L_x=48$, $h_x=0.3$, $h_z=0$, and $n_{\max}=10$.
    (a) Integer R\'enyi entropies, the von Neumann entropy directly from DMRG, and estimates from SAC and traditional
    methods for $L_y=12$. The Chebyshev result (purple diamond) nearly overlaps with the DMRG and SAC results and is therefore obscured by the DMRG marker. (b) Relative reconstruction error versus the number of
    R\'enyis used. (c) Finite-circumference scaling of the integer
    R\'enyi entropies and the SAC reconstruction. The latter gives $\gamma_{\mathrm{SAC}}\simeq0.69924$, close to
    $\log2\simeq0.69315$.}
    \label{fig:toric_code}
\end{figure*}
\subsection{Conformal Map}
\label{subsec:Two-step conformal mapping}

To implement SAC, we map the chosen computational domain of strip geometry
\begin{equation}
    \mathcal D_\epsilon
    =
    \left\{
        z:\ \Re z\geq1,\ |\Im z|<\epsilon
    \right\}
\end{equation}
to the unit disc. We use the two-step conformal transformation shown in
Fig.~\ref{fig:two-step map}. First, the semi-infinite strip is mapped to
the upper half-plane by
\begin{equation}
    \xi(z)
    =
    i\sinh\!\left[
        \frac{\pi(z-1)}{2\epsilon}
    \right].
    \label{eq:Schwarz-Christoffel}
\end{equation}
The upper half-plane is then mapped to the unit disc by
\begin{equation}
    w(z)
    =
    \frac{\xi(z)-i\eta}{\xi(z)+i\eta},
    \qquad
    \eta>0.
    \label{eq:second step of conformal map}
\end{equation}
The resulting map sends
\begin{equation}
    z=1\mapsto w=-1,
    \qquad
    z\rightarrow\infty\mapsto w=1,
\end{equation}
while the real R\'enyi axis $z>1$ is mapped to the interval
$-1<w<1$. 

The two parameters have distinct roles. The strip half-width $\epsilon$
specifies the analytic domain assumed in the continuation, whereas
$\eta$ controls the distribution of the mapped R\'enyi points along the
real axis of the disc. If these points cluster too strongly near
$w=1$, the matrix $A$ becomes poorly conditioned. The parameter $\eta$
therefore provides a useful freedom to balance sensitivity to the target
against numerical stability. Equivalently, it parametrizes the remaining
one-parameter family of disc automorphisms that fixes the boundary
points $w=\pm1$.

Throughout this work we use
\begin{equation}
    \epsilon=4,
    \qquad
    \eta=10^{-2},
\end{equation}
which gives reliable reconstructions across the benchmarks considered
below.

The dependence on these parameters and the associated conditioning
tradeoff are discussed in
Appendix~\ref{subsec:choice of conformal parameters}.

\section{Many-body benchmarks}
\label{sec:many_body_benchmarks}

We now turn to three complementary tests of the SAC framework in many-body settings.
First, DMRG provides a stringent test of whether SAC can preserve a subleading topological contribution in the presence of a much larger area-law entropy in two representative spin models. Second, the compactified-boson CFT tests SAC in a quantum-critical state, with independent analytic results available in controlled limits. Finally, conserved dynamics connects the spectral competition identified in Sec.~\ref{sec:spectral-coexistence} to the qualitative separation between ballistic von Neumann growth and subballistic higher-R\'enyi growth. All the numerical benchmarks focus on finite-size and finite-time performance of SAC.

\subsection{Topological Entanglement Entropy in Toric Code and Kagome Heisenberg Model}
\label{sec:dmrg_tee}

Topological entanglement entropy (TEE) is a sensitive benchmark because it appears as a subleading constant in the entropy. DMRG provides both the integer R\'enyi entropies and the von Neumann entropy of the same finite system, allowing us to directly assess the continuation error and compare SAC with conventional extrapolation schemes.
Our numerical setup uses a cylinder geometry with length $L_x=48$ and circumference $L_y$, divided by a cut winding around the periodic direction; see Fig.~\ref{fig:panel_figure}. The entanglement boundary has length $L_y$. For a gapped topological state in the cylinder construction used here, the expected large-circumference behavior is
\begin{equation}
    S_{\mathrm{vN}}(L_y)
    =\alpha L_y-\gamma+\cdots,
    \qquad L_y\to\infty,
    \label{eq:tee_area_law}
\end{equation}
where $\alpha$ is nonuniversal and $\gamma$ is the TEE. The expected value for the $\mathbb Z_2$ topological
phase is $\gamma=\log2$~\cite{TEE_KitaevPreskill,TEE_LevinWen,
zhang2012quasiparticle,jiang2012identifying}.

In principle, $\gamma$ can also be extracted directly from
\begin{equation}
    S_n(L_y)=\alpha_n L_y-\gamma+\delta_n(L_y),
    \qquad n\geq2,
    \label{eq:renyi_tee_scaling}
\end{equation}
where $\delta_n$ denotes the finite-circumference correction. $\gamma$ is expected to be independent of $n$ in the thermodynamic limit, but the corrections at accessible circumferences
can be substantially larger for integer R\'enyi entropies than for the von Neumann entropy~\cite{Flammia_2009,jiang2012identifying,Jiang2013PRL}.
This motivates us to first continue at each circumference and then perform the spatial extrapolation:
\begin{equation}
    \{S_n(L_y)\}_{n=2}^{n_{\max}}
    \xrightarrow{\;\mathrm{SAC}\;}
    S_{\mathrm{vN}}^{\mathrm{SAC}}(L_y)
    \xrightarrow{\;L_y\text{-scaling}\;}
    \gamma_{\mathrm{SAC}}.
    \label{eq:DMRG_route}
\end{equation}

The linearity of SAC gives a useful perspective on this procedure. From Appendix~\ref{subsec:affine-data}, the estimator reproduces constant data exactly, $\sum_n c_n=1$ , where $c_n$'s are as defined in Eq.~\eqref{eq:SAC_estimator}. Applying a fixed set of coefficients to
Eq.~\eqref{eq:renyi_tee_scaling} gives
\begin{equation}
    S_{\mathrm{vN}}^{\mathrm{SAC}}(L_y)
    =\left(\sum_n c_n\alpha_n\right)L_y
     -\gamma+\sum_n c_n\delta_n(L_y).
    \label{eq:tee_sac_decomposition}
\end{equation}
Thus, if the topological constant $\gamma$ is truly independent of the R\'enyi index, SAC preserves it exactly. The only contamination comes
from $\alpha_n$ and $\delta_n(L_y)$. This makes it useful to compare
$S_{\mathrm{vN}}^{\mathrm{SAC}}(L_y)$ directly with the DMRG von Neumann entropy at each circumference before performing the final
$L_y$ extrapolation.

In both models, we use R\'enyi inputs up to $n_{\max}=10$ and compare SAC with least-squares and Chebyshev extrapolations.\footnote{We also tested
Pad\'e-type rational approximants~\cite{Headrick2014pade}. Their
performance depends sensitively on the rational order and on the treatment of spurious poles near $n=1$, so we omit them from the baseline comparisons.}

\begin{figure*}[t]
    \centering
    \includegraphics[width=\textwidth]{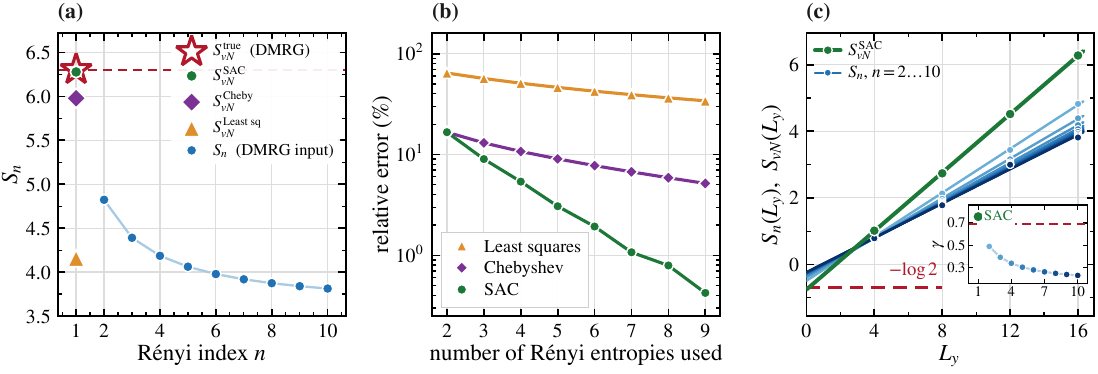}
    \caption{\textbf{Topological entanglement entropy in the Kagome
    Heisenberg model.} Results for $L_x=48$, $J_1=1$, $J_2=0.1$, and
    $n_{\max}=10$. (a) Integer R\'enyi inputs at $L_y=16$, the DMRG
    von Neumann entropy, and the continuation estimates.
    (b) Relative reconstruction error versus the number of integer
    inputs. (c) Finite-circumference scaling. The SAC fit gives
    $\gamma_{\mathrm{SAC}}\simeq0.75602$, compared with
    $\gamma_{\mathrm{DMRG}}\simeq0.69903$ from the directly computed
    von Neumann entropies.}
    \label{fig:kagome}
\end{figure*}

\subsubsection{Toric code}
\label{subsec:toric_code_tee}

We first consider the toric code in a uniform magnetic field,
\begin{equation}
    H=-J_v\sum_v A_v-J_p\sum_p B_p
      -h_x\sum_i\sigma_i^x-h_z\sum_i\sigma_i^z,
    \label{eq:toric_hamiltonian}
\end{equation}
where the spins lie on edges of the square lattice,
$A_v=\prod_{i\ni v}\sigma_i^x$ is the vertex stabilizer, and $B_p=\prod_{i\in\partial p}\sigma_i^z$ is the plaquette stabilizer.
We take $J_v=J_p=1$, $h_x=0.3$, and $h_z=0$, within the $\mathbb Z_2$ topological phase~\cite{Trebst2007loopgas,Vidal2009toriccode}.

Figure~\ref{fig:toric_code}(a) shows the integer R\'enyi entropies at $L_y=12$, the directly computed DMRG von Neumann entropy, and the
continuation estimates. The replica dependence is nearly affine (approximately linear in the R\'enyi index), and both SAC and Chebyshev extrapolation perform well. 
Figure~\ref{fig:toric_code}(b) compares the errors of different continuation methods with varying numbers of input R\'enyis. Applying SAC at each circumference and fitting
Eq.~\eqref{eq:tee_area_law} gives
\begin{equation}
    \gamma_{\mathrm{SAC}}\simeq0.69924,
\end{equation}
close to $\log2\simeq0.69315$; see Fig.~\ref{fig:toric_code}(c).

There are two reasons why SAC performs exceptionally well for the toric code. First, correlations in toric code are extremely short-ranged, with a correlation length likely of order half a lattice spacing. As a result, even a relatively small system, such as $L_y = 6$, is already large enough for finite-size effects to be negligible. This also explains why directly extracting $\gamma$ using R\'enyis works well, as shown in Fig.~\ref{fig:toric_code}(c). Second, the R\'enyi entropies are nearly affine in the R\'enyi index, and SAC is exact for an affine R\'enyi function, as shown in Appendix~\ref{subsec:affine-data}. Any remaining error comes only from the deviation from linearity and inaccuracies in the input R\'enyi data. These favorable conditions do not hold for the Kagome Heisenberg model, which we consider next.

\subsubsection{Kagome Heisenberg model}
\label{sec:kagome_tee}

A more demanding test is provided by the spin-$1/2$ Kagome antiferromagnet,
\begin{equation}
    H=J_1\sum_{\langle i,j\rangle}\mathbf S_i\cdot\mathbf S_j
      +J_2\sum_{\langle\!\langle i,j\rangle\!\rangle}
       \mathbf S_i\cdot\mathbf S_j,
    \label{eq:kagome_hamiltonian}
\end{equation}
with $J_1=1$ and $J_2=0.1$. The sums run over nearest- and next-nearest-neighbor pairs, respectively. This frustrated model has been extensively studied as a candidate quantum spin liquid and also proposed as a minimal model for the Herbertsmithite materials
~\cite{White2011Science,jiang2012identifying,Schollwock2012PRL,
Iqbal2015VMC,He2017Dirac}. Although its thermodynamic TEE is not known from an exact solution, DMRG again supplies an independent reference for the finite-size continuation.

As shown in Fig.~\ref{fig:kagome}(a), the R\'enyi entropies at $L_y=16$ have substantially stronger curvature than in the toric-code example.
The distinction between reconstruction methods is correspondingly sharper. As more R\'enyi entropies are included, SAC improves and substantially outperforms the least-squares and Chebyshev estimates, as seen in Fig.~\ref{fig:kagome}(b). The circumference fit gives
\begin{equation}
    \gamma_{\mathrm{SAC}}\simeq0.75602,
    \qquad
    \gamma_{\mathrm{DMRG}}\simeq0.69903,
\end{equation}
where the second value is obtained by applying the same fit directly to the DMRG von Neumann entropies. 
Also shown in Fig.~\ref{fig:kagome}(c) is the direct fits using integer R\'enyi entropies, which yields an approximately $40\%$ discrepancy in the extracted TEE, consistent with the observations of~\cite{Jiang2013PRL}. This can be attributed to the much longer correlation length in Kagome Heisenberg model, which makes the finite-size effects more severe and causes the direct R\'enyi-based extrapolation to fail.

Finally, we comment on the computational domain used in these DMRG benchmarks. Corollary~\ref{cor:area_law_clearance} states that the clearance is at least of order
$L_y^{-1}$ for area-law states, but does not require the actual clearance to decrease with circumference. Our numerical implementation uses a fixed strip of half-width $\epsilon=4$ for all $L_y$'s. The agreement with DMRG results motivates the possibility that the relevant analytic domains of these states are substantially larger than the worst-case bounds. Although the numerical agreement alone does not establish
the Fisher-zero structure in the thermodynamic limit, the results support a favorable picture that the conservative bounds of
Sec.~\ref{sec:domain_analyticity} do not necessarily set the practical limits
of SAC, and indicate the absence of a R\'enyi transition.

\subsection{$(1+1)d$ Compactified Boson CFT}
\label{sec:CFT}

We next test SAC in a continuum, quantum-critical setting by considering the entanglement entropy and mutual information between two disjoint intervals in a $(1+1)$-dimensional CFT. The mutual
information between two regions $A$ and $B$ is
\begin{equation}
    I(A:B) = S_{\mathrm{vN}}(\rho_A) + S_{\mathrm{vN}}(\rho_B) - S_{\mathrm{vN}}(\rho_{A\cup B}),
\end{equation}
and its R\'enyi counterpart is
\begin{equation}
    I_n(A:B) = S_n(\rho_A) + S_n(\rho_B)- S_n(\rho_{A\cup B}).
    \label{eq:renyiMI}
\end{equation}
For a single interval, the von Neumann and R\'enyi entropies are fixed by conformal symmetries~\cite{Cardy2004QFT, calabrese2009entanglement}, whereas the two-interval entropies depend on the full operator content of the theory~\cite{calabrese2009entanglement, Caraglio2008Twist, Tonni2009disjoint, Tonni2011disjoint}. We focus on the compactified boson, where the integer R\'enyi results are known explicitly but the continuation to the von Neumann limit remains nontrivial~\cite{Hoker2021Taylor, Hubeny2008Holographic, Casini2009Remarks, Rajabpour2012minimal,De_Nobili_2015}. 
We use SAC to reconstruct $I(A:B)$ from a finite set of integer R\'enyi mutual informations and benchmark the estimates against known analytic results in the decompactification and small-cross-ratio limits. In particular, we use the same fixed-width SAC prescription as in the DMRG benchmarks, with strip half-width $\epsilon=4$. As shown below, the SAC estimates agree closely with the available analytic results in both limits, indicating that this fixed computational domain remains effective in a quantum-critical setting, and illustrating that physical criticality does not need to coincide with a R\'enyi phase transition.

\subsubsection{Integer-replica input}

Let $A=[u_1,v_1]$ and $B=[u_2,v_2]$, with $u_1<v_1<u_2<v_2$. Their geometry is characterized by the cross ratio
\begin{equation}
    x=\frac{(v_1-u_1)(v_2-u_2)}{(u_2-u_1)(v_2-v_1)},
    \qquad 0<x<1.
    \label{eq:cft_cross_ratio}
\end{equation}
Small $x$ corresponds to widely separated intervals, while $x\to1$ corresponds to nearly adjacent intervals. The R\'enyi mutual information
is
\begin{equation}
    I_n(A:B)
    =-\frac{c}{6}\left(1+\frac1n\right)\log(1-x)
      +\frac{\log\mathcal F_n(x)}{n-1},
    \label{eq:renyi_mi_combination}
\end{equation}
where $c=1$ for the compactified boson. Unlike the entropy terms, the ultraviolet cutoff and nonuniversal additive constants cancel in this combination. $\mathcal F_n$ contains the additional theory-dependent information absent from the
single-interval problem~\cite{Cardy2004QFT,calabrese2009entanglement,
Tonni2009disjoint}.
\begin{figure}[t]
    \centering
    \includegraphics[width=1\linewidth]{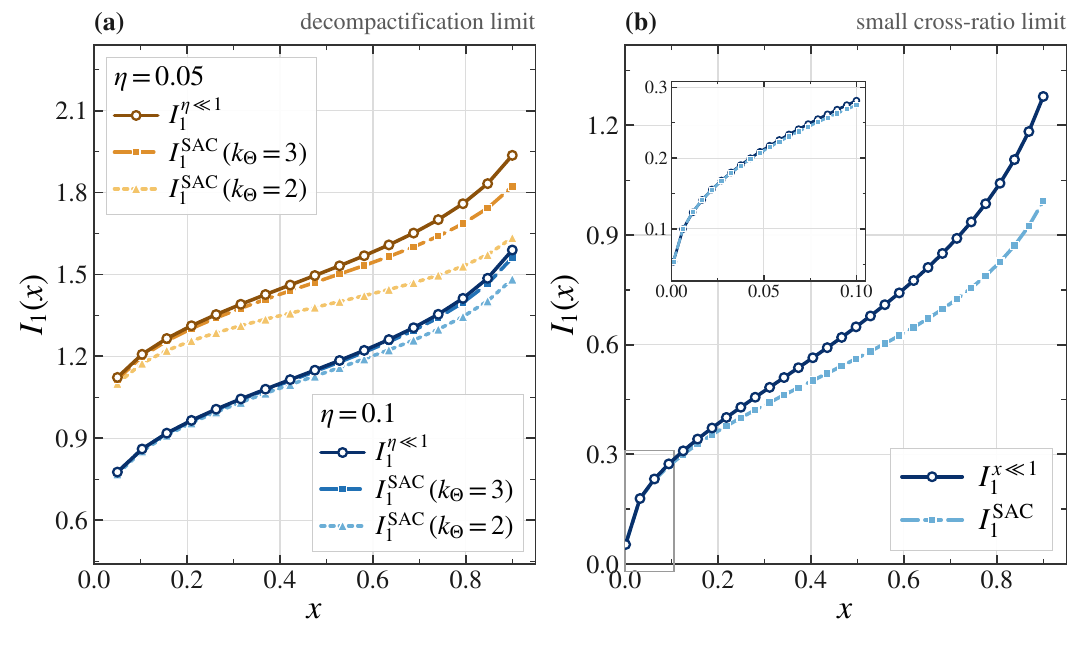}
    \caption{\textbf{Mutual information in the compactified-boson CFT.}
    (a) SAC reconstruction from $I_2,\ldots,I_7$ at $\eta=0.05$ and $0.1$, compared with the decompactification result in Eq.~\eqref{eq:decompactified_MI_exact}.
    The parameter $k_\Theta$ controls the truncation of the theta-function sum; increasing $k_\Theta$ improves the accuracy of the integer-R\'enyi inputs.
    (b) SAC reconstruction at $\eta=3$ and $k_\Theta=2$, compared with the small-cross-ratio expansion in Eq.~\eqref{eq:small_x_mi}. The inset highlights the small-$x$ regime. Both analytic reference expressions are valid only in their respective asymptotic limits.}
    \label{fig:SAC_CFT}
\end{figure}
For compactified boson CFT, it is known in terms of Riemann--Siegel theta functions~\cite{Tonni2009disjoint,Tonni2011disjoint}
\begin{equation}
    \mathcal F_n(x;\eta)
    =\frac{\Theta(0|\eta\Gamma)\Theta(0|\Gamma/\eta)}
           {\Theta(0|\Gamma)^2}.
    \label{eq:compact_boson_Fn}
\end{equation}
Here $\eta$ is the compactification parameter following CFT convention, and should not be confused with the M\"obius parameter in the SAC two-step mapping. The $(n-1)\times(n-1)$ matrix
$\Gamma$ has entries
\begin{equation}
    \Gamma_{rs}
    =\frac{2i}{n}\sum_{k=1}^{n-1}
      \sin\!\left(\frac{\pi k}{n}\right)
      \beta_{k/n}(x)
      \cos\!\left(\frac{2\pi k(r-s)}{n}\right),
    \label{eq:Gamma_matrix}
\end{equation}
for $r,s=1,\ldots,n-1$, where
\begin{equation}
    \beta_y(x)
    =\frac{{}_2F_1(y,1-y;1;1-x)}{{}_2F_1(y,1-y;1;x)}.
\end{equation}
where $_2F_1$ is the hypergeometric function. These expressions provide the integer-replica data. They do not directly provide the continuation to $n = 1$ as the dimension of $\Gamma$ itself is $(n-1) \times (n-1)$.

\subsubsection{SAC for mutual information}

We reconstruct the ordinary mutual information
$I_1=I(A:B)$ directly from the integer values $I_n$. The discrepancy function is
\begin{equation}
    D_\alpha(z)=\frac{I_z-\alpha}{z-1},
    \label{eq:cft_discrepancy}
\end{equation}
where $I_z$ is defined by replacing $n$ with $z$ in~\eqref{eq:renyiMI}. The construction of Sec.~\ref{sec:the methodology of SAC} then gives
\begin{equation}
    I_1^{\mathrm{SAC}}=\sum_{n=2}^{n_{\max}}c_n I_n,
    \label{eq:cft_sac_estimator}
\end{equation}
with the same coefficients as in Eq.~\eqref{eq:SAC_estimator}. This is equivalent to applying SAC to all three constituent entropies and taking their
combination. However, working directly with $I_z$ makes the cancellation of cutoff-dependent terms explicit before continuation.

The analytic structure of $I_z$ is determined by the ratio of three moment functions:
\begin{equation}
    I_z(A:B)
    =\frac{1}{1-z}\log\!\left[
      \frac{f_{\rho_A}(z)f_{\rho_B}(z)}{f_{\rho_{A\cup B}}(z)}
      \right],
    \label{eq:cft_moment_ratio}
\end{equation}
Two observations are useful here. First, a common zero-free domain for all the three moments is sufficient for analyticity of $I_z$, but it is not necessary because zeros can cancel in the ratios.
Second, the finite-rank bounds introduced in Sec~\ref{sec:domain_analyticity} apply only at a finite UV cutoff, and the guaranteed zero-free regions may shrink as we take the continuum limit. We use these bounds only as a guide and benchmark the continuation directly against independent analytic results in controlled limits.

\subsubsection{Benchmarks in controlled limits}

We test SAC against two independent analytic results: the decompactification limit, which provides a reference over a broad range of interval separations, and the small-cross-ratio limit, which tests the recovery of weak correlations between widely separated intervals. In both cases, we use the integer inputs $I_2,\ldots,I_7$ and the same fixed-width SAC map as in the previous benchmarks.

\paragraph{Decompactification limit.}

For a small compactification parameter $\eta$, the von Neumann mutual information has the asymptotic form~\cite{Tonni2009disjoint}
\begin{equation}
    I_1^{\eta\ll1}(x)
    =
    -\frac{c}{3}\log(1-x)
    -\frac12\log\eta
    +\mathcal T(x),
    \label{eq:decompactified_MI_exact}
\end{equation}
where
\begin{align}
    \mathcal T(x)
    &=
    \frac12\bigl[\mathcal J(x)+\mathcal J(1-x)\bigr],
    \\
    \mathcal J(x)
    &=
    \int_{-\infty}^{\infty}dy\,
    \frac{i\pi y}{\sinh^2(\pi y)}
    \log\!\left[{}_2F_1(iy,1-iy;1;x)\right].
\end{align}

We generate the integer-R\'enyi inputs from Eqs.~\eqref{eq:renyi_mi_combination} and~\eqref{eq:compact_boson_Fn}, then apply SAC to reconstruct $I_1(x)$. Fig~\ref{fig:SAC_CFT}(a) shows the results for $\eta=0.05$ and $0.1$.
Near $x=1$ or for very small $\eta$, it becomes more difficult to evaluate the R\'enyi integer inputs accurately, where the theta-function sums converge more slowly. Increasing the truncation parameter $k_\Theta$ retains more terms and improves the agreement, as shown in the figure. Since the comparison is carried out at small but finite $\eta$, part of the remaining discrepancy may come from corrections to the decompactification expression. This should be distinguished from the error introduced by the SAC continuation itself.

\paragraph{Small cross ratio.}

As $x\to 0$, the mutual information vanishes, so recovering it accurately requires resolving a small signal. In this regime, the analytic expansion is~\cite{Tonni2011disjoint,Hoker2021Taylor}
\begin{equation}
    I_1(x)
    =
    -\frac{c}{3}\log(1-x)
    +
    \frac{\sqrt{\pi}\,\Gamma(1+\alpha)}
         {2^{1+2\alpha}\Gamma(\alpha+3/2)}
    x^\alpha
    +\cdots,
    \label{eq:small_x_mi}
\end{equation}
where $\alpha=\min(\eta,\eta^{-1})$.

Fig.~\ref{fig:SAC_CFT}(b) compares the SAC reconstruction at $\eta=3$ with Eq.~\eqref{eq:small_x_mi}. The two agree excellently at
small $x$.
As $x$ increases, higher-order terms omitted from the expansion become important, so discrepancy between the SAC result and~\eqref{eq:small_x_mi} are expected outside the small-$x$ regime.

\begin{figure}[t]
    \centering
    \includegraphics[width=1\linewidth]{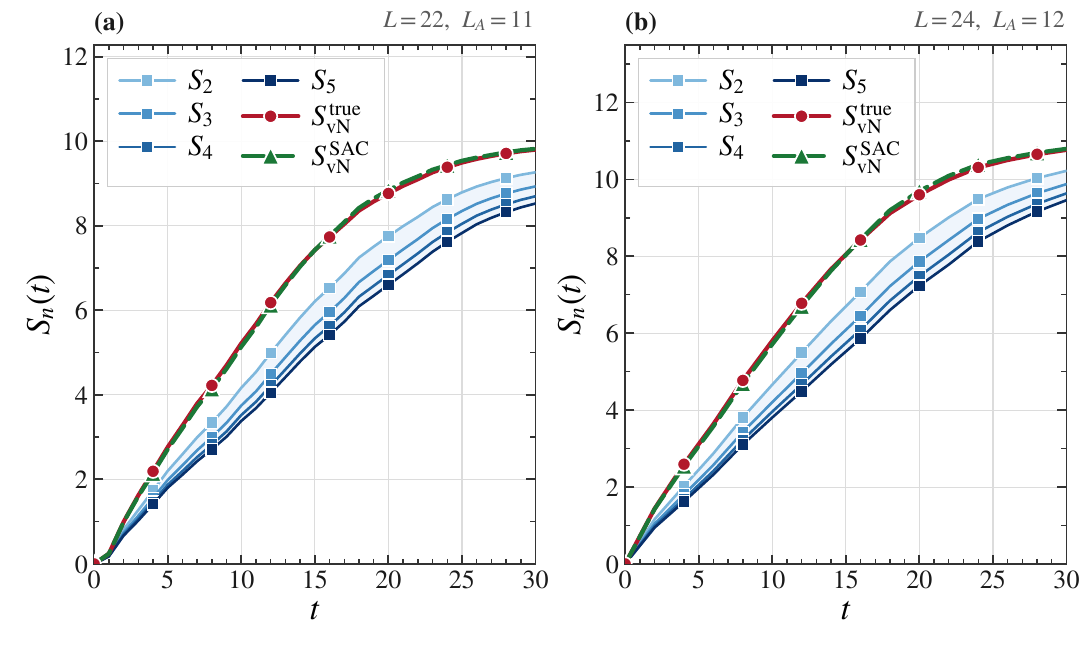}
    \caption{\textbf{Entanglement growth in charge-conserving random circuits.}
    Half-chain entanglement entropies for $U(1)$-symmetric brickwork circuits with (a) $L=22$, $Q=11$, and  $L_A=11$, and (b) $L=24$, $Q=12$, and $L_A=12$.
    Each realization starts from a uniformly sampled computational-basis product state at fixed total charge.
    The curves are averaged over $50$ independent circuit and initial-state realizations for each system size, with $t$ measured in individual circuit layers and entropies expressed in bits.
    SAC reconstruction closely tracks the directly computed von Neumann entropy throughout the displayed time window.}
    \label{fig:SAC_RUCs}
\end{figure}

\subsection{Quantum Dynamics with Conservation Laws}

\label{sec:conservationlaws}

\begin{figure}[t!]
    \centering
    \includegraphics[width=1\linewidth]{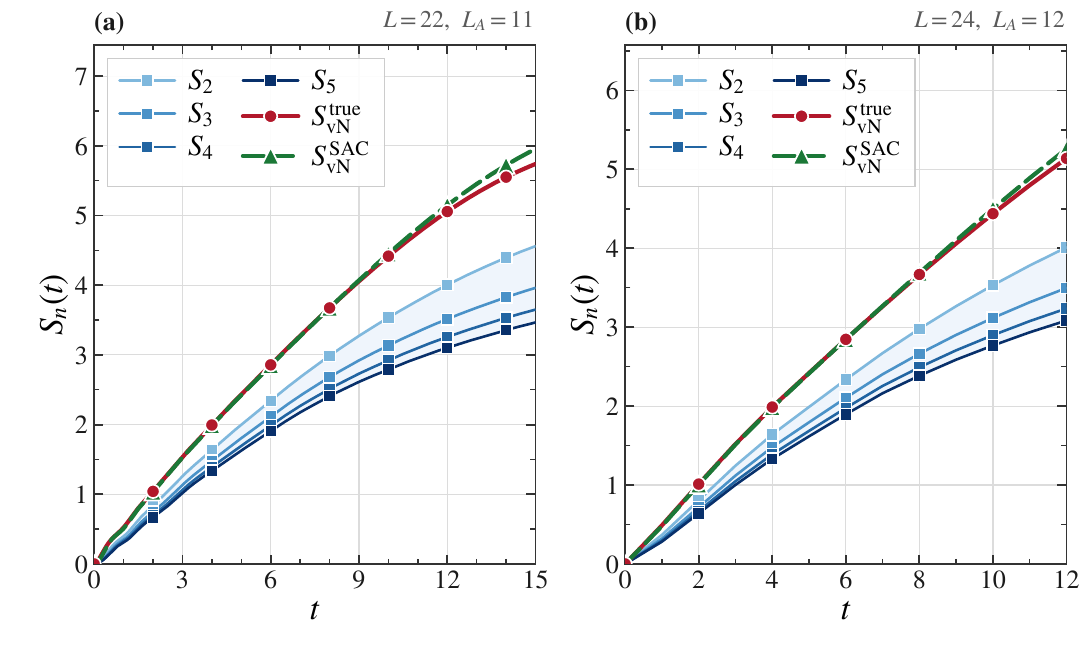}
    \caption{\textbf{Entanglement growth in the tilted-field Ising model.}
    Half-chain entropies for $L=22$ and $24$, with subsystem sizes $L_A=11$ and $12$, respectively, averaged over $50$ random initial states for each system size.
    The SAC reconstruction uses the integer R\'enyi entropies
    $S_2,\ldots,S_{5}$.}
    \label{fig:tilted_ising}
\end{figure}
Conserved dynamics provides a contrasting setting where different R\'enyi indices increasingly probe different parts of the entanglement spectrum. For generic many-body Hamiltonians without conservation laws, both the von Neumann and R\'enyi entropies of a subsystem typically grow linearly in time after a global quench~\cite{EntanglementGrowthRUCs,OperatorSpreadingRUCs,Keyserlingk2018PRX,Zhou2019StatMech}. Diffusion caused by conservation laws can instead produce subballistic higher-R\'enyi growth while the von Neumann entropy remains ballistic~\cite{rakovszky2019sub,Huang2020qudit}. More generally, for $n > 1$ diffusive transport implies
\begin{equation}
    S_{\mathrm{vN}}(t)\sim v_Et,
    \qquad S_n(t)\lesssim O(\sqrt{t \ln t}),
    \label{eq:dynamics_growth_laws}
\end{equation}
where $v_E$ is the entanglement velocity. 

Since the SAC estimator is a linear combination of finitely many R\'enyi entropies, once all the R\'enyis have entered the subballistic regime (up to logarithmic correction), the reconstructed von Neumann entropy must also be upper bounded by subballistic behavior. However, as we show below, SAC accurately recovers the von Neumann entropy in finite-size and finite-time numerics. This apparent tension can be resolved by recognizing that the numerically accessible regime is not asymptotic. At finite-time, higher R\'enyis show more pronounced crossover toward subballistic growth, while lower R\'enyis may still exhibit ballistic growth. Meanwhile, finite-size effects begin to limit the growth of all the entropies, obscuring the asymptotic separation between the two growth laws.

We connect this distinction between finite-time numerics and asymptotic scaling to the analytic structure of the R\'enyi function using a physically motivated two-sector model, building on
Sec.~\ref{sec:spectral-coexistence}. 
In the model, the competition between the two sectors drives Fisher zeros toward $z=1$ and produces a first-order R\'enyi phase transition in the long-time limit. 
The resulting noncommutativity of the long-time and $n\to1$ limits provides a spectral explanation for the different growth laws. It also clarifies how accurate finite-time continuation can coexist with a singular asymptotic limit in which finite-R\'enyi SAC can no longer recover the ballistic von Neumann entropy.

\subsubsection{Random $U(1)$-symmetric circuits}
\label{sec:random_symmetric_circuits}

We first consider periodic chains of $L=22$ and $24$ qubits evolved by a brickwork circuit of independent two-qubit gates drawn from the Haar measure subject to $U(1)$ charge conservation~\cite{rakovszky2019sub}. We work at total charge $Q=L/2$, and sample each initial state uniformly from the computational-basis product states with this total charge. Gates act on alternating sets of nearest-neighbor bonds, with time $t$ counting individual circuit layers.

For each system size $L$, we track the entanglement of a fixed half-chain with $L_A=L/2$, using exact time evolution within the conserved charge sector. We compute the von Neumann entropy and $S_2,\ldots,S_{5}$ after each layer up to $t=30$, and average over 50 independent circuit realizations, each initialized with an independently sampled random initial state. 
Fig.~\ref{fig:SAC_RUCs} show the results. At both sizes, SAC closely follows the directly computed von Neumann entropy throughout the simulated window, including the late-time bending toward saturation. Despite that the finite-size saturation scale shifts with system size, we find the finite-time reconstruction remains accurate.

In~\cite{rakovszky2019sub}, the second Rényi entropy $S_2$ was computed semianalytically for systems as large as $L \sim 100$, by mapping $\Tr(\rho_A^2)$ to an effective partition function and evaluating it using tensor-network contractions. This allows the crossover of $S_2$ from early-time $t$-growth to the asymptotic $\sqrt{t}$ behavior to be clearly resolved. In contrast, our framework requires several integer R\'enyi entropies. As extending the semianalytical construction to $n>2$ rapidly becomes intractable, we instead obtain $S_2, ..., S_5$ by exact time evolution and are limited to $L=24$. Finite-size saturation sets in before the full asymptotic regime, preventing us from cleanly resolving the crossover from $t$ to $\sqrt{t}$ growth particularly for lower R\'enyis. This is the regime where SAC can still accurately reconstruct the von Neumann entropy. Its asymptotic limitation is discussed below in Sec.~\ref{sec:dynamics_replica_interpretation}.

\subsubsection{Tilted-field Ising model}
\label{sec:tilted_field_ising}

To test the continuation in deterministic, energy-conserving dynamics, we next consider the tilted-field Ising chain,
\begin{align}
    H={}&-J\sum_{i=1}^{L-1}\sigma_i^z\sigma_{i+1}^z
       +\sum_{i=1}^{L}(h_z\sigma_i^z+h_x\sigma_i^x)
       \nonumber\\
       &-J(\sigma_1^z+\sigma_L^z),
    \label{eq:tilted_ising_hamiltonian}
\end{align}
with $J=1$, $h_x=(5+\sqrt5)/8$, and $h_z=(1+\sqrt5)/4$, following Ref.~\cite{rakovszky2019sub}. The last term is included to reduce boundary effects.

As in the circuit calculation, we study $L=22$ and $L = 24$ and compute the entropies of a half-chain. Using exact diagonalization, we obtain $S_2,\ldots,S_{10}$ and the von Neumann entropy for each of $50$ random initial states at each size. We apply SAC to each integer-R\'enyi dataset and then average the reconstructed entropies, keeping the SAC parameters fixed across both system sizes.
Fig.~\ref{fig:tilted_ising} shows similarly close agreement between the SAC reconstruction and the directly computed von Neumann entropy for both $L=22$ and $L = 24$. This indicates that the finite-time performance observed in the random-circuit benchmark also extends to Hamiltonian dynamics.

Next, we use the two-sector model to explain how successful reconstruction over the accessible window can coexist with a singular long-time replica limit.

\subsubsection{Two-sector competition and the long-time R\'enyi limit}
\label{sec:dynamics_replica_interpretation}

\begin{figure}[t!]
    \centering
    \includegraphics[width=1\linewidth]{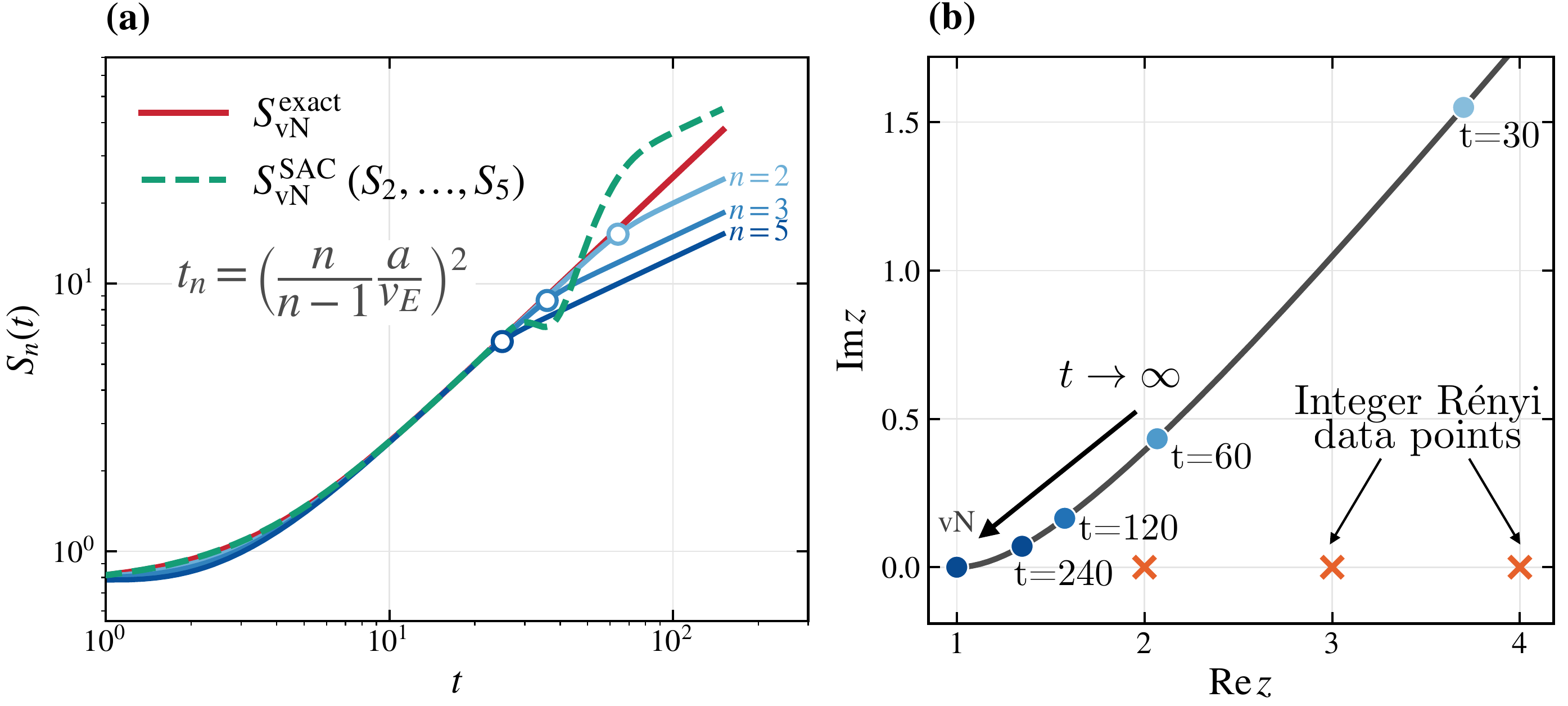}
    \caption{\textbf{Two-sector competition, SAC reconstruction, and Fisher-zero pinching.}
    \textbf{(a)} Exact von Neumann entropy (red), R\'enyi entropies for $n=2,3,5$ (blue), and the noiseless SAC estimate obtained from $S_2,\ldots,S_5$ (green dashed) for the two-sector toy model in Eq.~\eqref{eq:dynamics_two_sector_scaling}, shown on logarithmic axes.
    Open circles mark the estimated crossover times $t_n\simeq[n a/((n-1)v_E)]^2$, defined by $u_*(t_n) = n$, beyond which the isolated eigenvalue dominates the corresponding moment $\operatorname{Tr}[\rho(t)^n]$.
    The higher R\'enyi entropies cross over to $\sqrt{t}$ growth, while the von Neumann entropy grows ballistically. SAC closely follows the exact entropy at early times but develops visible deviations across the crossover.
    \textbf{(b)} Trajectory of the leading Fisher zero $z_0(t)$ of $\operatorname{Tr}[\rho(t)^z]$, corresponding to a branch point of $S_z(\rho(t))$. Blue circles indicate selected times. As time increases, the zero approaches the von Neumann point $z=1$. Orange crosses denote fixed integer R\'enyi data locations.
    This pinching results in a first-order R\'enyi transition in the long-time limit.}
    \label{fig:two_sector_dynamics}
\end{figure}

We now discuss a simple spectral model that makes the connection between the different growth laws and a singular replica limit explicit. 
Motivated by~\cite{rakovszky2019sub, huang2019dynamics}, the key ingredient is competition between one anomalously large eigenvalue and many smaller eigenvalues carrying most of the entropy. Following the two-sector model introduced in Eq.~\eqref{eq:two_sector_family}, we consider a time-dependent state $\rho(t)$, whose spectrum consists of one large eigenvalue $q(t)$ and $R(t)$ degenerate smaller eigenvalues, each equal to $[1-q(t)]/R(t)$. Here $q(t) = F_{\mathrm{pure}}(\rho(t)) = || \rho(t) ||_{\infty}$ is the same $q$ defined in Theorem~\ref{thm:pure_fidelity_confinement}, and $R(t) = r(t) -1$ with $r(t)$ being the rank. We take
\begin{equation}
    q(t)=e^{-a\sqrt t},
    \qquad
    \log R(t)=v_Et+o(t),
    \qquad a,v_E>0.
    \label{eq:dynamics_two_sector_scaling}
\end{equation}
Although $q(t)$ vanishes at long times, it remains much larger than each eigenvalue in the broad sector, since $q(t) \sim e^{-a\sqrt t}$ and $[1-q(t)]/R(t) \sim e^{-v_E t}$. 

It is easy to show that this simple model reproduces the asymptotic separation of ballistic von Neumann growth and subballistic R\'enyi growth. For the von Neumann entropy, the broad sector dominates because its total probability $(1-q(t))$ approaches one and its multiplicity grows as $R(t)\sim e^{v_Et}$, giving
\begin{equation}
    S_{\mathrm{vN}}(t) \sim v_Et.
\end{equation}
For every fixed $n>1$, taking the $n$th power suppresses
its small eigenvalues so strongly that the isolated level eventually dominates the moment, leading to
\begin{equation}
    S_n(t) \sim\frac{n}{n-1}a\sqrt t,
    \qquad n>1.
    \label{eq:dynamics_model_entropies}
\end{equation}
Fig.~\ref{fig:two_sector_dynamics}(a) illustrates the two different behaviors.

\paragraph{Fisher-zero pinching.}
Next, we examine the analytic structure of the model. Consider the moment function
\begin{equation}
    f_\rho(z,t)
    =
    \Tr[\rho(t)^z]
    =
    q(t)^z+R(t)^{1-z}[1-q(t)]^z
    \label{eq:dynamics_two_sector_moment}
\end{equation}
The two terms are the contributions from the isolated eigenvalue and the broad degenerate sector. For real R\'enyi index $u$, their magnitudes become equal at
\begin{equation}
    u_*(t)
    =
    \frac{\log R(t)}
    {\log R(t)+\log[q(t)/(1-q(t))]} .
\end{equation}
Namely, $u_*(t)$ is the crossover R\'enyi index at which the dominant
sector changes.

For complex $z=u+iv$, the two contributions can cancel when they acquire a relative phase of $\pi$ in additional to having equal magnitude. Following Eq~\eqref{eq:two_sector_zeros}, the Fisher zeros are 
\begin{equation}
    z_k(t) =
    \frac{\log R(t)+i(2k+1)\pi}
    {\log R(t)+\log[q(t)/(1-q(t))]},
    \qquad k\in\mathbb Z .
\end{equation}
The leading zero $z_0(t)$ satisfies 
\be
\Re z_0 = u_*(t), \quad \Im z_0 = \frac{\pi\Re z_0}{\log R(t)} =\frac{\pi \Re z_0}{\log (r(t)-1)}
\ee
i.e., the leading zero saturates the universal wedge of Theorem~\ref{thm:universal_zero_free_wedge}. Furthermore, using $q(t)=e^{-a\sqrt t}$ and $\log R(t)=v_E t+o(t)$, we have
\begin{equation}
    \Re z_0 \sim 1+ \frac{a}{v_E\sqrt t}, \quad
    \Im z_0 \sim \frac{\pi}{v_E t}.
    \label{eq:dynamics_zero_asymptotics}
\end{equation}
Therefore, the crossover R\'enyi index approaches the von Neumann point from the right at a rate $t^{-1/2}$, while the zeros pinch the real axis at a rate $t^{-1}$. Fig.~\ref{fig:two_sector_dynamics}(b) illustrates the process.

The first-order character of the transition can be seen from the long-time behavior of the modular free energy:
\begin{equation}
    \lim_{t\to\infty}-\frac{1}{ut}\log f_\rho(u,t)
    =
    \begin{cases}
        v_E \bigg(1-\frac{1}{u}\bigg), & 0<u<1,\\
        0,       & u\geq1.
    \end{cases}
    \label{eq:dynamics_limiting_pressure}
\end{equation}
Here, we rescale by $1/t$ because $\log R(t)\sim v_E t$ causing $\log f_{\rho}(u, t)$ to be extensive in time, with $t$ playing the role of volume in the usual thermodynamic limit. For $u<1$, the broad sector dominates, whereas for $u>1$ the isolated eigenvalue dominates. Consequently, the first derivative with respect to the modular temperature $1/u$ is discontinuous at $u=1$.

The different entropy growth laws are a consequence of this switch.
Taking $n\to1$ first retains the broad sector responsible for ballistic
von Neumann growth,
\be
\lim_{t \to \infty} \lim_{n \to 1} S_n(t) / t = v_E
\ee
whereas taking $t\to\infty$ at any fixed $n>1$ selects the isolated eigenvalue and gives subballistic growth,
\be
 \lim_{n \to 1} \lim_{t \to \infty} S_n(t) / t = 0
\ee
Thus the replica and long-time limits do not commute.

\paragraph{Implications for SAC.}

The two-sector model makes the asymptotic limitation of SAC more explicit. Once $u_*(t)<2$, every integer input lies on the side of the crossover that favors the isolated eigenvalue. 
The broad sector carrying the ballistic entropy is still present in the exact moments, but the accessible replicas become insensitive to it.

This provides a spectral interpretation of the general limitation mentioned at the beginning of Sec.~\ref{sec:conservationlaws}: once all sampled R\'enyi entropies are subballistic, any SAC reconstruction must also be subballistic. There is no contradiction with the successful finite-time reconstructions in Figs.~\ref{fig:SAC_RUCs} and~\ref{fig:tilted_ising}, which probe an intermediate regime where the different R\'enyi entropies are still at different stages of crossover and finite-size saturation begins to set in.

\section{Conclusion}
\label{sec:conclusions}

In this work, we connected the reconstruction of the von Neumann entropy from a finite set of noiseless integer R\'enyi entropies to the analytic structure of the R\'enyi function itself. In particular, we showed that zeros of $\Tr\rho^z$ generate branch-point singularities of $S_z(\rho)$ that can fundamentally obstruct analytic continuation. When these singularities pinch the real $z$ axis in the interval $1\leq z<2$, the von Neumann and integer-R\'enyi entropies are separated by a R\'enyi phase transition, so the $z\to1$ limit is no longer a smooth extrapolation.

Our zero-free results constrain when such an obstruction can occur. We first established a universal, rank-dependent zero-free wedge and then systematically sharpened the exclusion bounds using additional spectral information, including the modular spectral width, the entropy gap, and fidelities to pure and maximally mixed states. 
We further identify a two-sector family that saturates the universal bound and realizes a first-order R\'enyi transition as its rank grows. 
For area-law states satisfying $S_{\mathrm{vN}}(\rho_A)\leq a|\partial A|+b$, with size-independent constants $a$ and $b$, we establish a zero-free strip in $\Re z\geq1$ of half-width $\pi/[2(a|\partial A|+b)]$. 
Most strikingly, for reduced density matrices of Haar-random bipartite states at fixed subsystem-to-environment dimension ratio, we prove that the zero clearance in $\Re z\geq1$ diverges in probability as the subsystem dimension grows.
These results show that proximity to a R\'enyi transition is controlled not simply by the amount of entanglement, but by how spectral weight is organized across the modular spectrum.

We further developed the stabilized analytic continuation (SAC) method to reconstruct the von Neumann entropy from a finite set of noiseless integer-R\'enyi entropies. In this setting, SAC reduces to a closed-form linear estimator with state-independent coefficients.
We benchmarked the method in three many-body settings. Using DMRG data, SAC substantially improves the extraction of the subleading topological entanglement entropy compared to conventional extrapolation schemes.
In the compactified-boson CFT, it reconstructs disjoint-interval mutual information in agreement with independent analytic limits, illustrating that physical criticality need not imply R\'enyi criticality. 
Finally, SAC accurately tracks finite-time von Neumann entropy growth in charge-conserving random circuits and the tilted-field Ising chain. Across these benchmarks, we use a fixed numerical strip wider than the certified zero-free region. Its success supports the practical utility of this regularization choice, without establishing analyticity throughout the assumed strip.

Conserved dynamics also makes the distinction between finite-time success and asymptotic obstruction particularly transparent. Once all sampled R\'enyi entropies become subballistic, any SAC estimator constructed from a fixed finite set of them must itself remain subballistic and therefore cannot recover ballistic von Neumann growth asymptotically. We connect this limitation to the underlying analytic structure through a physically motivated time-dependent two-sector model that reproduces ballistic von Neumann growth and subballistic higher-R\'enyi growth. In this model, Fisher zeros of $\Tr\rho(t)^z$ pinch the real $z$ axis at $z=1$ in the long-time limit, producing a first-order R\'enyi transition. It provides a concrete spectral mechanism by which accurate finite-time reconstruction can coexist with a genuine obstruction to asymptotic analytic continuation.

More broadly, SAC offers a framework for recovering physically relevant quantities from more accessible replica observables. Natural applications include the R\'enyi coherent information in quantum error correction ~\cite{Ehud_Ashwin,Vijay2026} and R\'enyi correlators used to diagnose strong-to-weak spontaneous symmetry breaking
~\cite{LeeJianSWSSB,Olumakinde2023,Strong_Weak_SSB_Open_Systems,
Strong_to_weak_Fidelity_Correlator,Strong_Weak_SSB_Renyi1_Correlator,
Strong_Weak_SSB_Wightman}. In these settings, transition points and critical behavior can strongly depend on the R\'enyi index, so finite-R\'enyi diagnostics need not serve as accurate proxies of the physically relevant limit. 
Other natural targets include logarithmic negativity ~\cite{calabrese2012entanglement,calabrese2013entanglement} and the $(\alpha,z)$ R\'enyi divergences ~\cite{Petz1986quasi,KudlerFlam2024RMI}. Their analytic structures can differ substantially from that of the R\'enyi entropy, so each extension requires a separate analysis of the locations of singularities. The central question nevertheless remains the same: whether the accessible higher-replica observables are analytically connected to the operationally relevant limit, or whether a replica transition intervenes.

A further direction is to adapt SAC to the reconstruction of real-frequency response functions from imaginary-time or Matsubara-frequency data in quantum Monte Carlo~\cite{Silver1990MaxEnt, Jarrell1996Bayesian, Beach2000Pade, Gunnarsson2010optical, Schott2016averaging, Gull2011QMCRMP, LeBlanc2015Simons, Gull2021PRL, Zhang2024minipole, Zhang2025minipole}. In this setting, numerical calculations provide Green's functions away from the real-frequency axis, while experimentally relevant spectral and response functions are encoded in their boundary values on the real axis. The present minimum-norm construction could potentially be adapted by conformally mapping the appropriate complex-frequency domain to the unit disc while incorporating additional physical constraints, such as causality and spectral sum rules.

Finally, a more ambitious direction is to develop a renormalization-group description of R\'enyi phases for modular Hamiltonians arising in many-body states. Such a theory would identify the effective degrees of freedom and couplings that drive R\'enyi transitions, classify their universal behavior, and predict the finite-size scaling of the leading Fisher zeros. This requires retaining not only the modular spectrum but also the spatial structure of the modular Hamiltonian and how it changes under coarse-graining as the R\'enyi index is varied. This would help understand which structures in a many-body state protect a common analytic branch connecting accessible integer R\'enyi points to the von Neumann point, and which instead drive a transition that separates them.

\acknowledgments
The authors thank Nima Lashkari and Tibor Rakovszky for helpful discussions. The authors are also grateful to Tibor Rakovszky for sharing the code of tilted field Ising model. A.V. is supported by Jong Yeon Lee's faculty startup grant at the University of Illinois, Urbana-Champaign, and the IBM-Illinois Discovery Accelerator Institute. ChatGPT and Claude were used to assist with proving some of the theorems establishing the zero-free regions and identifying analytic properties of the SAC estimator.

\bibliography{References}

\appendix
\addtocontents{toc}{\protect\hideappendixsubsections}

\renewcommand{\theequation}{\thesection.\arabic{equation}}

\section{Proofs of the Analytic-Domain Bounds}
\label{app:analyticity_proofs}

In this appendix we prove the zero-free results stated in
Sec.~\ref{sec:domain_analyticity}. Throughout, $\rho$ is a density
matrix of rank $r$, $P_\rho$ denotes its support projector, and all
operator functions are understood on the support of $\rho$. We write
\begin{equation}
    f_\rho(z)=\Tr\rho^z,
    \qquad
    z=u+iv,
\end{equation}
and use the tilted state
\begin{equation}
    \rho_u=\frac{\rho^u}{\Tr\rho^u}.
\end{equation}

\subsection{Universal zero-free wedge}
\label{app:universal_wedge_proof}

 We first prove a lemma that will be invoked in proving
Theorem~\ref{thm:universal_zero_free_wedge}. Let
$q=\|\rho\|_\infty$ and let $P_\star$ denote the projector onto the
maximal eigenspace of $\rho$, with
\begin{equation}
    m\equiv\operatorname{rank}P_\star .
\end{equation}

\begin{lemma}[Logarithmic-spiral bound]
\label{lem:log_spiral}
For $\alpha>0$ and $t\geq0$, define
\begin{equation}
    w_\alpha(t)=e^{-(1+i\alpha)t}.
\end{equation}
Then the real-linear functional
\begin{equation}
    \mathcal L_\alpha(w)
    \equiv
    \Re w-\frac{1}{\alpha}\Im w
\end{equation}
satisfies
\begin{equation}
    \mathcal L_\alpha(w_\alpha(t))
    \geq
    -e^{-\pi/\alpha}.
    \label{eq:spiral_bound}
\end{equation}
\end{lemma}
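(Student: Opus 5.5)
We compute $\mathcal L_\alpha(w_\alpha(t))$ explicitly and minimize over $t\ge0$. Writing $w_\alpha(t)=e^{-t}(\cos\alpha t-i\sin\alpha t)$, we get
\begin{equation}
    g(t)\equiv\mathcal L_\alpha(w_\alpha(t))
    =e^{-t}\Bigl(\cos\alpha t+\tfrac{1}{\alpha}\sin\alpha t\Bigr).
\end{equation}
The plan is to locate the global minimum of $g$ on $[0,\infty)$ by elementary calculus and show it equals $-e^{-\pi/\alpha}$, attained at $t=\pi/\alpha$.

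First, differentiate:
\begin{equation}
    g'(t)=e^{-t}\Bigl(-\cos\alpha t-\tfrac1\alpha\sin\alpha t-\alpha\sin\alpha t+\cos\alpha t\Bigr)
    =-\Bigl(\alpha+\tfrac1\alpha\Bigr)e^{-t}\sin\alpha t .
\end{equation}
The critical points are therefore $t_k=k\pi/\alpha$, $k=0,1,2,\dots$. Since $g'\le0$ on $[2j\pi/\alpha,(2j+1)\pi/\alpha]$ and $g'\ge0$ on the next interval, the local minima are at the odd points $t=(2j+1)\pi/\alpha$. There $\sin\alpha t=0$ and $\cos\alpha t=-1$, so $g=-e^{-(2j+1)\pi/\alpha}$. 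These values increase toward $0$ with $j$, so the smallest is $j=0$. The boundary value $g(0)=1$ and the limit $g\to0$ as $t\to\infty$ are both larger. Hence $\min_{t\ge0}g(t)=-e^{-\pi/\alpha}$, which is \eqref{eq:spiral_bound}, with equality at $t=\pi/\alpha$.

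An alternative, more geometric route is to write $\cos\alpha t+\alpha^{-1}\sin\alpha t=\sqrt{1+\alpha^{-2}}\cos(\alpha t-\phi)$ with $\tan\phi=1/\alpha$. One could then argue that the spiral's farthest excursion in the direction $-(1,-1/\alpha)$ occurs at the first half-turn. The derivative computation is cleaner, though, because the simplification of $g'$ to a pure $\sin\alpha t$ term makes the critical points exact multiples of $\pi/\alpha$. I would present that version.

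The only step needing care is the simplification of $g'$, where the cosine terms cancel. Ranking the local minima is then immediate from the exponential factor. There is no real obstacle; the main point worth flagging for later use in Theorem~\ref{thm:universal_zero_free_wedge} is that the bound is sharp, attained at $t=\pi/\alpha$. This is exactly the half-turn phase $\pi$ that produces the extremal zero $z_\star$ of $\rho_\star$.
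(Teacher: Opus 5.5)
Your proof is correct and follows the same route as the paper. Both compute $\mathcal L_\alpha(w_\alpha(t))=e^{-t}(\cos\alpha t+\alpha^{-1}\sin\alpha t)$, note that the derivative reduces to a multiple of $-e^{-t}\sin\alpha t$, and conclude that the global minimum $-e^{-\pi/\alpha}$ occurs at the first half-turn $\alpha t=\pi$. The paper works in $\phi=\alpha t$ and simply asserts that the minimum is at $\phi=\pi$; your ranking of the odd critical points and your check of $g(0)=1$ and $g\to0$ supply that step explicitly.
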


\begin{proof}
Writing $\phi=\alpha t$,
\begin{equation}
    \mathcal L_\alpha(w_\alpha)
    =
    e^{-\phi/\alpha}
    \left(
        \cos\phi+\frac{1}{\alpha}\sin\phi
    \right).
\end{equation}
Differentiation gives
\begin{equation}
    \frac{d}{d\phi}
    \mathcal L_\alpha(w_\alpha)
    =
    -\left(1+\alpha^{-2}\right)
    e^{-\phi/\alpha}\sin\phi.
\end{equation}
The global minimum therefore occurs at $\phi=\pi$, yielding
Eq.~\eqref{eq:spiral_bound}.
\end{proof}

\begin{proof}[Proof of Theorem~\ref{thm:universal_zero_free_wedge}]
Suppose $f_\rho(u+iv)=0$ with $u>0$ and, without loss of generality,
$v>0$. Let
\begin{equation}
    Q_\star=P_\rho-P_\star,
    \qquad
    R\equiv q^{-1}Q_\star\rho Q_\star.
\end{equation}
The spectrum of $R$ lies strictly inside $(0,1)$ and
\begin{equation}
    0
    =
    q^{-z}f_\rho(z)
    =
    m+\Tr_{Q_\star}R^z.
    \label{eq:wedge_top_sector}
\end{equation}
Set
\begin{equation}
    \alpha=\frac{v}{u},
    \qquad
    T=-u\log R\succeq0.
\end{equation}
Then
\begin{equation}
    R^{u+iv}
    =
    e^{-(1+i\alpha)T}.
\end{equation}
Applying the functional of
Lemma~\ref{lem:log_spiral} to
Eq.~\eqref{eq:wedge_top_sector} gives
\begin{equation}
    0
    \geq
    m-(r-m)e^{-\pi/\alpha}.
\end{equation}
If $m\geq r/2$, this is impossible for finite $\alpha$ (if the maximal eigenspace constitutes for more than half of the rank, the phases in \eqref{eq: zeros of characteristic function} can never cancel out to zero), so no zeros occur
in $\Re z>0$. Otherwise,
\begin{equation}
    \alpha
    \geq
    \frac{\pi}{
        \log[(r-m)/m]
    }
    \geq
    \frac{\pi}{\log(r-1)}.
\end{equation}
Hence every zero satisfies
\begin{equation}
    |v|
    \geq
    \frac{\pi u}{\log(r-1)}.
\end{equation}
For $r=2$, the maximal eigenspace necessarily gives either $m\geq r/2$
or a flat state, so the entire right half-plane is zero-free.

Sharpness follows from the family in
Eq.~\eqref{eq:two_sector_family}, whose zeros satisfy
Eq.~\eqref{eq:two_sector_zeros}. Their slope is
\begin{equation}
    \frac{|\Im z_k|}{\Re z_k}
    =
    \frac{(2k+1)\pi}{\log(r-1)},
\end{equation}
and the lowest branch therefore saturates the universal wedge.
\end{proof}

\subsection{Spectral-width bound}
\label{app:spectral_width_proof}

\begin{proof}[Proof of Theorem~\ref{thm:spectral_width_bound}]
Let
\begin{equation}
    K_{\min}\equiv\lambda_{\min}(K_\rho),
    \qquad
    K_{\max}\equiv\lambda_{\max}(K_\rho),
\end{equation}
so that
\begin{equation}
    \Delta_K=K_{\max}-K_{\min}.
\end{equation}
Choose the midpoint
\begin{equation}
    K_c=\frac{K_{\min}+K_{\max}}{2}.
\end{equation}
For $|v|\Delta_K<\pi$,
\begin{equation}
    \left|
        v(K_\rho-K_c)
    \right|
    <\frac{\pi}{2}
\end{equation}
throughout the spectrum of $K_\rho$. Hence
\begin{equation}
    \cos[v(K_\rho-K_c)]\succ0 .
\end{equation}
For any $u>0$,
\begin{align}
    \Re\!\left[
        e^{ivK_c}f_\rho(u+iv)
    \right]
    &=
    \Tr\!\left[
        \rho^u
        \cos[v(K_\rho-K_c)]
    \right]
    >0.
\end{align}
Thus $f_\rho(u+iv)\neq0$ whenever
$|v|<\pi/\Delta_K$.
\end{proof}

\subsection{Entropy-gap and area-law bounds}
The entropy-gap bound follows from a modular version of the Margolus–Levitin orthogonalization bound, together with monotonicity under spectral tilting~\cite{margolus1998maximum}.

\label{app:entropy_gap_proof}

\begin{lemma}
\label{lem:mean_energy_scalar}
For all $x\geq0$,
\begin{equation}
    x-\frac{\pi}{2}
    +\frac{\pi}{2}\cos x+\sin x
    \geq0.
    \label{eq:mean_energy_scalar}
\end{equation}
\end{lemma}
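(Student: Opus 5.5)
We study $g(x)=x-\frac{\pi}{2}+\frac{\pi}{2}\cos x+\sin x$ on $[0,\infty)$. We have $g(0)=0$, and
\[
g'(x)=1-\frac{\pi}{2}\sin x+\cos x .
\]
The plan is to split the half-line into $[0,\pi]$, where we compute the minimum exactly, and $[\pi,\infty)$, where a crude bound suffices.

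\textbf{Large $x$.} For $x\geq\pi$ we use $\frac{\pi}{2}\cos x+\sin x\geq-\sqrt{1+\pi^2/4}$. This gives
\[
g(x)\geq x-\frac{\pi}{2}-\sqrt{1+\pi^2/4}.
\]
Since $\sqrt{1+\pi^2/4}\approx1.862$, the right-hand side is nonnegative once $x\geq\frac{\pi}{2}+1.862\approx3.433$. The gap between $\pi$ and $3.433$ is small, so instead of this crude bound we check $[\pi,3.5]$ directly. At $x=\pi$ we get $g(\pi)=\pi/2-\pi/2+0+0=0$. On $(\pi,3.5]$ the derivative is $g'(x)=1-\frac{\pi}{2}\sin x+\cos x$. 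Here $\sin x<0$, so the middle term is positive, while $\cos x\geq\cos 3.5\approx-0.936$. Hence $g'>1-0.936>0$, so $g$ is increasing there and stays nonnegative. Combining this with the crude bound settles all of $[\pi,\infty)$.

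\textbf{Interval $[0,\pi]$.} Both endpoints give zero: $g(0)=0$ and $g(\pi)=0$. So it suffices to show $g$ first increases and then has a single interior critical point that is a maximum. Rewrite the derivative with the half-angle substitutions $1+\cos x=2\cos^2(x/2)$ and $\sin x=2\sin(x/2)\cos(x/2)$:
\[
g'(x)=2\cos(x/2)\bigl[\cos(x/2)-\tfrac{\pi}{2}\sin(x/2)\bigr].
\]
On $(0,\pi)$ the factor $\cos(x/2)$ is positive. The bracket is positive exactly when $\tan(x/2)<2/\pi$, and negative afterwards. Therefore $g$ increases up to $x_0=2\arctan(2/\pi)$ and then decreases down to $g(\pi)=0$. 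The minimum over $[0,\pi]$ is thus $\min(g(0),g(\pi))=0$.

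\textbf{Expected difficulty.} The main obstacle would have been locating the sign change of $g'$. The half-angle factorization handles this cleanly, since it reduces the sign of $g'$ to a single monotone condition on $\tan(x/2)$. The remaining tail estimate is routine numerics.
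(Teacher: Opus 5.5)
Your argument on $[0,\pi]$ is the paper's argument: the same half-angle factorization $g'(x)=2\cos(x/2)\bigl[\cos(x/2)-\tfrac{\pi}{2}\sin(x/2)\bigr]$, a single sign change, and $g(0)=g(\pi)=0$. Your tail bound using the amplitude $\sqrt{1+\pi^2/4}$ is also correct. It is a sharper version of the paper's estimate $g(x)\geq x-\pi-1$ for $x\geq 2\pi$.

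The step on $(\pi,3.5]$ is false as written. Cosine is increasing on $[\pi,2\pi]$, so on this interval $\cos x\leq\cos 3.5\approx-0.936$, not $\geq$; indeed $\cos x\to-1$ as $x\to\pi^{+}$. Since $g'(\pi)=1-0-1=0$ and $g'$ is continuous, your claimed bound $g'>1-0.936$ fails for $x$ just above $\pi$.

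The conclusion $g'>0$ is still true, and the repair is one line. Write $g'(x)=(1+\cos x)-\frac{\pi}{2}\sin x$, where $1+\cos x\geq 0$ everywhere and $-\frac{\pi}{2}\sin x>0$ on $(\pi,2\pi)$. Equivalently, as in the paper, on $[\pi,2\pi]$ both $\cos(x/2)$ and the bracket in your factorization are $\leq 0$, so their product is $\geq 0$. This gives monotonicity on all of $[\pi,2\pi]$. You could then drop the amplitude bound and finish with the cruder estimate for $x\geq 2\pi$, which is exactly the paper's route.
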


\begin{proof}
Let the left-hand side be $g(x)$. On $0\leq x\leq\pi$,
\begin{equation}
    g'(x)
    =
    2\cos\frac{x}{2}
    \left[
        \cos\frac{x}{2}
        -
        \frac{\pi}{2}\sin\frac{x}{2}
    \right].
\end{equation}
Thus $g$ first increases and then decreases, while
$g(0)=g(\pi)=0$, so $g\geq0$ on this interval. On
$\pi\leq x\leq2\pi$, both terms inside the square brackets are
nonpositive, but since $\cos\frac{x}{2} \le 0$, hence $g'(x)\geq0$. Finally, for $x\geq2\pi$,
\begin{equation}
    g(x)\geq x-\pi-1>0.
\end{equation}
\end{proof}

\begin{proof}[Proof of Theorem~\ref{thm:entropy_gap_clearance}]
Define
\begin{equation}
    K_{\min}
    =
    \lambda_{\min}(K_\rho)
    =
    S_\infty(\rho),
    \qquad
    X\equiv K_\rho-K_{\min}P_\rho\succeq0.
\end{equation}
At a zero of $f_\rho(u+iv)$,
\begin{equation}
    \Tr(\rho_u e^{-ivX})=0
\end{equation}
up to a nonvanishing phase. Hence, taking $v>0$ 
and applying Lemma~\ref{lem:mean_energy_scalar} to the positive operator
$vX$ at the zeros of $f(u+ iv)$ gives
\begin{equation}
    v\,\Tr(\rho_uX)\geq\frac{\pi}{2}.
    \label{eq:mean_energy_zero}
\end{equation}

Define
\begin{equation}
    \mu(u)=\Tr(\rho_uX).
\end{equation}
Differentiation gives
\begin{equation}
    \mu'(u)
    =
    -\operatorname{Var}_{\rho_u}(K_\rho)
    \leq0.
\end{equation}
Therefore, for $u\geq1$,
\begin{align}
    \mu(u)
    &\leq
    \mu(1)
    \nonumber\\
    &=
    \Tr(\rho K_\rho)-K_{\min}
    \nonumber\\
    &=
    S_{\mathrm{vN}}(\rho)-S_\infty(\rho).
\end{align}
Combining this with Eq.~\eqref{eq:mean_energy_zero} proves
Eq.~\eqref{eq:entropy_gap_clearance}.

If
$S_{\mathrm{vN}}=S_\infty$, then
$\Tr[\rho(K_\rho-K_{\min})]=0$. Since the operator in parentheses is
positive, it vanishes on the support of $\rho$, implying
$K_\rho=K_{\min}P_\rho$ and therefore
$\rho=P_\rho/r$.
\end{proof}

\begin{proof}[Proof of Corollary~\ref{cor:area_law_clearance}]
Since $S_\infty(\rho_A)\geq0$,
\begin{equation}
    S_{\mathrm{vN}}(\rho_A)-S_\infty(\rho_A)
    \leq
    S_{\mathrm{vN}}(\rho_A)
    \leq
    a|\partial A|+b.
\end{equation}
Substitution into
Theorem~\ref{thm:entropy_gap_clearance} proves the result.
\end{proof}

\subsection{Pure-state-fidelity bound and the critical state}
\label{app:pure_fidelity_proof}

\begin{proof}[Proof of Theorem~\ref{thm:pure_fidelity_confinement}]
Let
\begin{equation}
    q=\|\rho\|_\infty>\frac12
\end{equation}
and let
$P_\star=|\psi_\star\rangle\langle\psi_\star|$
project onto a state attaining this maximum. Set
$Q_\star=P_\rho-P_\star$. At a zero of $f_\rho(z)$,
\begin{equation}
    \Tr(P_\star\rho^z)
    =
    -\Tr(Q_\star\rho^z),
\end{equation}
and therefore
\begin{equation}
    q^u
    \leq
    \Tr(Q_\star\rho^u).
    \label{eq:pure_balance}
\end{equation}

For $0<u\leq1$, define
\begin{equation}
    \sigma
    =
    \frac{Q_\star\rho Q_\star}{1-q}.
\end{equation}
The state $\sigma$ has rank at most $r-1$, and the concavity of
$x^u$ gives
\begin{equation}
    \Tr\sigma^u
    \leq
    (r-1)^{1-u}.
\end{equation}
Hence
\begin{equation}
    \Tr(Q_\star\rho^u)
    \leq
    (r-1)^{1-u}(1-q)^u.
\end{equation}
Combining this with Eq.~\eqref{eq:pure_balance} yields
\begin{equation}
    u
    \leq
    \frac{\log(r-1)}
    {\log[(r-1)q/(1-q)]}.
\end{equation}
For $u\geq1$, since $x^u$ is convex, we have
\begin{equation}
    \Tr(Q_* \rho^u) \leq (1-q)^u < q^u
\end{equation}
contradicting Eq.~\eqref{eq:pure_balance}. Thus no zero exists for
$u\geq1$ when $q > \frac{1}{2}$.
\end{proof}

For the two-sector family
Eq.~\eqref{eq:two_sector_family},
\begin{equation}
    \Tr\rho_q^z
    =
    q^z+(r-1)^{1-z}(1-q)^z.
\end{equation}
Setting this expression to zero gives directly
\begin{equation}
    z_k
    =
    \frac{
        \log(r-1)+i(2k+1)\pi
    }{
        \log[(r-1)q/(1-q)]
    }.
\end{equation}
At $q=1/2$ this becomes
\begin{equation}
    z_k
    =
    1+\frac{i(2k+1)\pi}{\log(r-1)}.
\end{equation}
Theorem~\ref{thm:universal_zero_free_wedge} implies that no rank-$r$
state can possess a zero with $\Re z\geq1$ and
$|\Im z|<\pi/\log(r-1)$. Thus the critical state
$\rho_\star$ realizes the exact worst-case clearance in the SAC
half-plane.

\subsection{Maximally mixed-state fidelity bound}
\label{app:mixed_fidelity_proof}

Firstly, define
\begin{equation}
    h = 1 - \sqrt{F_\text{mix}(\rho)}
\end{equation}
where $F_\text{mix}(\rho)$ is defined in Eq.~\eqref{eq:x_mixed}.

\begin{theorem}[Maximally mixed-state fidelity bound]
\label{thm:mixed_fidelity_bound}
Let $\rho$ be a rank $r$ density matrix and set
\begin{equation}
    q_\text{max}(h)
    \equiv
    \min\!\Big\{1,\;\big(h+r^{-1/2}\big)\big(2-h-r^{-1/2}\big)\Big\}.
\end{equation}
Then every zero $z=u+iv$ of $f_\rho(z)=\Tr\rho^z$ with $u\geq1$ obeys
$|\Im z|\geq B^{\mathrm{SAC}}_{\mathrm{mix}}(h)$, where
\begin{equation}\label{eq:mixed_master_width}
    B^{\mathrm{SAC}}_{\mathrm{mix}}(h)
    \equiv
    \max_{0<\theta<\pi/2}
    \frac{\theta}{\log\!\Big[(1-h)\sqrt{r\, q_\text{max}(h)}\;
        \dfrac{1+\cos\theta}{\cos\theta}\Big]},
\end{equation}
the maximization taken over $\theta$ for which the logarithm is positive.
Whenever the balance condition cannot be satisfied, the entire half-plane
$\Re z\geq1$ is zero-free. Consequently $S_z(\rho)$ admits an analytic branch in
$\{\Re z\geq1,\ |\Im z|<B^{\mathrm{SAC}}_{\mathrm{mix}}(h)\}$. For fixed $h$ and
large $r$,
\begin{equation}\label{eq:mixed_master_asymptotic}
    B^{\mathrm{SAC}}_{\mathrm{mix}}(h)
    =\frac{\pi}{\log[\,r\,g(h)\,]}\,[1+o(1)],
    \qquad g(h)=h(2-h)(1-h)^2,
\end{equation}
up to subleading $\log\log r$ corrections.
\end{theorem}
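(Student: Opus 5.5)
The plan is to turn a zero of $f_\rho$ into a balance between a "near" and a "far" part of the modular spectrum, in the spirit of the proofs of Theorems~\ref{thm:spectral_width_bound} and~\ref{thm:entropy_gap_clearance}. The far part is then controlled using only $\Tr\sqrt\rho=(1-h)\sqrt r$ and the largest eigenvalue. The numerical constants (in particular whether $\theta/|v|$ or $\theta/(2|v|)$ appears in the exponent) are not yet checked against the stated form of $B^{\mathrm{SAC}}_{\mathrm{mix}}$. The steps are as follows.

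\textbf{Step 1 (eigenvalue cap).} Write the eigenvalues as $p_1=q\geq p_2\geq\dots\geq p_r$, so that $\Tr\sqrt\rho=(1-h)\sqrt r$. By Cauchy--Schwarz on the last $r-1$ eigenvalues,
\[
\Tr\sqrt\rho\leq\sqrt q+\sqrt{(r-1)(1-q)}\leq \sqrt q+\sqrt{r(1-q)} .
\]
Dividing by $\sqrt r$ gives $\sqrt{1-q}\geq 1-h-\sqrt{q/r}\geq 1-h-r^{-1/2}$. Squaring, whenever the right-hand side is nonnegative, yields $q\leq (h+r^{-1/2})(2-h-r^{-1/2})$, that is, $q\leq q_{\max}(h)$. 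This is elementary and explains the form of $q_{\max}$.

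\textbf{Step 2 (near/far split at a zero).} Take a zero $z=u+iv$ with $u\geq1$ and $v>0$, and multiply by $e^{ivK_{\min}}$ so that the top level has phase $0$. Fix $\theta\in(0,\pi/2)$.
\begin{itemize}
\item Call level $i$ \emph{near} if $v(K_i-K_{\min})\leq\theta$. Each near level contributes real part at least $\cos\theta\,p_i^u$.
\item Call level $i$ \emph{far} otherwise. Far levels satisfy $p_i< q\,e^{-\theta/v}$.
\end{itemize}
Taking the real part of the vanishing sum gives $\cos\theta\sum_{\rm near}p_i^u\leq\sum_{\rm far}p_i^u$. Adding $\cos\theta\sum_{\rm far}p_i^u$ to both sides gives the balance condition
\[
f_\rho(u)\leq\frac{1+\cos\theta}{\cos\theta}\sum_{\rm far}p_i^u .
\]
The factor $(1+\cos\theta)/\cos\theta$ in Eq.~\eqref{eq:mixed_master_width} arises exactly here.

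\textbf{Step 3 (bounding the far weight by the fidelity).} Write $p_i^u=\sqrt{p_i}\,p_i^{u-1/2}$ and use $p_i^{u-1/2}\leq (q e^{-\theta/v})^{u-1/2}$ for far levels. This gives
\[
\sum_{\rm far}p_i^u\leq (1-h)\sqrt r\,(q e^{-\theta/v})^{u-1/2}.
\]
On the other side, bound $f_\rho(u)$ from below by its top-level contribution, or at $u=1$ simply by $1$. One then needs to reduce general $u\geq1$ to $u=1$, for instance via monotone tilting as in the proof of Theorem~\ref{thm:entropy_gap_clearance}.

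This is the step I expect to be the main obstacle. Arranging the powers of $q$ so that $\sqrt{q_{\max}}$ appears in the numerator, rather than $q^{-1/2}$, together with the right exponent, requires choosing the split point and the lower bound on $f_\rho(u)$ carefully. My first attempt is the $u=1$ estimate
\[
\frac{\cos\theta}{1+\cos\theta}\leq\sqrt{q}\,e^{-\theta/(2v)}(1-h)\sqrt r .
\]
I would then adjust the near/far threshold (for example, measuring phases from the midpoint of the near window) to recover $\theta/v$. Replacing $q$ by $q_{\max}(h)$ via Step 1, and taking logarithms, would give $v\geq\theta/\log[(1-h)\sqrt{r q_{\max}}(1+\cos\theta)/\cos\theta]$ for each admissible $\theta$. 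Maximizing over $\theta$ then gives $B^{\mathrm{SAC}}_{\mathrm{mix}}$. If the logarithm is nonpositive for every $\theta$, the balance condition is impossible and the half-plane $\Re z\geq1$ is zero-free, as claimed. Analyticity of $S_z(\rho)$ on the stated strip then follows because $\log f_\rho$ admits a branch on any zero-free simply connected domain containing the positive real axis.

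\textbf{Step 4 (asymptotics).} For fixed $h$ and $r\to\infty$, note that $q_{\max}\to h(2-h)$. Hence
\[
(1-h)^2\,r\,q_{\max}\to r\,g(h),\qquad g(h)=h(2-h)(1-h)^2 .
\]
Choose $\theta=\pi/2-\delta$ with $\delta\sim1/\log r$. The factor $1/\cos\theta$ then contributes only $O(\log\log r)$ to the logarithm, while $\theta\to\pi/2$. Tracking how the square root inside the logarithm converts $\pi/2$ into $\pi$ (which again hinges on the exponent convention fixed in Step 3) gives $B^{\mathrm{SAC}}_{\mathrm{mix}}=\pi/\log[r\,g(h)]\,[1+o(1)]$. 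Substituting $x=(1-h)^2$ then reproduces Eq.~\eqref{eq:mixed_fidelity_asymptotic_x}.
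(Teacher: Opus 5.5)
Your plan is essentially the paper's proof. It uses the same Cauchy--Schwarz cap $\|\rho\|_\infty\le q_{\max}(h)$; your case split in Step~1 is vacuous, because $\Tr\sqrt\rho\ge1$ forces $1-h\ge r^{-1/2}$. It also uses the same window-cancellation inequality $\delta_u\ge\cos\theta/(1+\cos\theta)$, the same tail estimate from $\sqrt\rho\succeq\rho/\sqrt{q e^{-L}}$ on the tail with $\Tr\sqrt\rho=(1-h)\sqrt r$, and the same $\theta\to\pi/2$ asymptotics. Both points you left open close exactly as you guessed:
\begin{itemize}
\item The paper measures phases from the midpoint $E_c=K_{\min}+L/2$ of the window $[K_{\min},K_{\min}+L]$ with $L=2\theta/|v|$. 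Near levels then deviate by at most $\theta$, while far levels obey $\sqrt{p_i}\le\sqrt q\,e^{-\theta/|v|}$. Your edge-anchored split would give exactly half the stated width, so this choice is necessary, not cosmetic.
\item The paper reduces $u\ge1$ to $u=1$ by showing that the tilted tail weight $\Tr(\rho_u D_L)$ is non-increasing in $u$. Its derivative is minus the covariance of $K_\rho$ with the tail projector. It does not use the lower bound $f_\rho(u)\ge q^u$, which produces the unwanted $q^{-1/2}$.
\end{itemize}
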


We first establish two elementary ingredients.

\begin{lemma}[Bound on the largest pure-state weight]
\label{lem:qmax_h}
With
\begin{equation}
    h=1-\frac{1}{\sqrt r}\Tr\sqrt\rho,
\end{equation}
one has
\begin{equation}
    \|\rho\|_\infty
    \leq
    q_{\max}(h)
    =
    \min\left\{
        1,\,
        (h+r^{-1/2})(2-h-r^{-1/2})
    \right\}.
\end{equation}
\end{lemma}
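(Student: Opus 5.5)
The plan is a direct spectral computation: bound $\Tr\sqrt\rho$ from above in terms of the largest eigenvalue $q=\|\rho\|_\infty$, and then invert that bound. The key observation is that the target bound can be rewritten as
\begin{equation}
    (h+r^{-1/2})(2-h-r^{-1/2}) = 1-\bigl(1-h-r^{-1/2}\bigr)^2 .
\end{equation}
So it suffices to show $1-h-r^{-1/2}\leq\sqrt{1-q}$ whenever the left-hand side is nonnegative. If the left-hand side is negative, the quantity above exceeds $1$, and the trivial bound $\|\rho\|_\infty\leq 1$ (the $\min\{1,\cdot\}$) already suffices.

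\emph{Step 1: bound $\Tr\sqrt\rho$ by $q$.} Write the nonzero eigenvalues of $\rho$ as $p_1=q\geq p_2\geq\dots\geq p_r>0$, with $\sum_i p_i=1$. Separate the top eigenvalue and apply Cauchy--Schwarz to the remaining $r-1$ square roots:
\begin{equation}
    \Tr\sqrt\rho=\sqrt q+\sum_{i=2}^r\sqrt{p_i}\leq \sqrt q+\sqrt{(r-1)(1-q)} .
\end{equation}

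\emph{Step 2: translate into a bound on $1-h$.} Dividing by $\sqrt r$ gives
\begin{equation}
    1-h=\frac{\Tr\sqrt\rho}{\sqrt r}\leq \sqrt{\frac{q}{r}}+\sqrt{\frac{r-1}{r}}\sqrt{1-q}\leq r^{-1/2}+\sqrt{1-q}.
\end{equation}
The last inequality uses only the crude estimates $q\leq1$ and $(r-1)/r\leq1$. These crude estimates are what produce the additive $r^{-1/2}$ slack in the statement.

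\emph{Step 3: invert.} Step 2 gives $\sqrt{1-q}\geq 1-h-r^{-1/2}$. If $1-h-r^{-1/2}\geq0$, squaring yields $q\leq 1-(1-h-r^{-1/2})^2$, which is the claimed bound by the identity above. Otherwise $q_{\max}(h)=1$ and there is nothing to prove.

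There is no real obstacle here. The only point needing care is the sign case in Step 3, since squaring is legitimate only when $1-h-r^{-1/2}\geq 0$, and that case split is exactly why the statement takes the minimum with $1$.
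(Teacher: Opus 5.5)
Your Steps 1 and 2 are the same Cauchy--Schwarz estimate the paper uses, but the sign case in Step 3 is wrong. Write $x = 1-h-r^{-1/2}$. Then the target quantity is $(h+r^{-1/2})(2-h-r^{-1/2}) = 1-x^2$, and this is at most $1$ for every real $x$; it never exceeds $1$. For example, $x=-1/2$ gives $3/4$.

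So if $x<0$ could occur, $q_{\max}(h)$ would equal $1-x^2<1$, not $1$. Your inequality $\sqrt{1-q}\ge x$ would then hold trivially and give no bound on $q$. Your claim that in this case $q_{\max}(h)=1$ and there is nothing to prove is therefore false. The $\min\{1,\cdot\}$ in the statement is redundant; it cannot rescue a negative-$x$ case.

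The missing ingredient is that the negative case never arises. Every nonzero eigenvalue satisfies $0<p_i\le 1$, so $\sqrt{p_i}\ge p_i$ and hence $\Tr\sqrt\rho\ge\sum_i p_i=1$. This gives $x=(\Tr\sqrt\rho-1)/\sqrt r\ge 0$ always, which is exactly how the paper justifies squaring. With this one line added, your Step 3 needs no case split and your proof coincides with the paper's.
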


\begin{proof}
Let $q=\|\rho\|_\infty$ and let $P_\star$ be a rank-one projector
attaining $q$. By Cauchy--Schwarz,
\begin{align}
    1-h
    &=
    \frac{1}{\sqrt r}\Tr\sqrt\rho
    \nonumber\\
    &\leq
    \frac{1}{\sqrt r}
    +
    \sqrt{1-q}.
\end{align}
Since $\Tr\sqrt\rho\geq1$, one has
$1-h\geq r^{-1/2}$, and therefore
\begin{equation}
    q
    \leq
    1-\left(1-h-r^{-1/2}\right)^2
    =
    (h+r^{-1/2})(2-h-r^{-1/2}).
\end{equation}
The additional minimum with unity is trivial.
\end{proof}

\begin{lemma}[Window cancellation]
\label{lem:window_cancellation}
Choose an entanglement-energy interval
\[
    I=[E_c-L/2,E_c+L/2]
\]
of width $L$, and let $D$ be the spectral projector onto
all eigenstates of the entanglement Hamiltonian whose
energies lie outside $I$. Define
\[
    \delta_u=\Tr(\rho_uD),
\]
so that $\delta_u$ and $1-\delta_u$ are the tilted
probabilities outside and inside $I$, respectively.

If $f_\rho(u+iv)=0$ and
$\theta\equiv |v|L/2<\pi/2$, then
\begin{equation}
    \delta_u
    \geq
    \frac{\cos\theta}{1+\cos\theta}.
    \label{eq:window_cancel}
\end{equation}
\end{lemma}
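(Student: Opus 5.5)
Since the Window Cancellation lemma only concerns the tilted probability distribution and a phase condition, I would argue directly from the characteristic-function form of a zero, in the same way as the proof of Theorem~\ref{thm:spectral_width_bound}. At a zero $f_\rho(u+iv)=0$ with $u>0$, divide by $f_\rho(u)>0$ and multiply by the unimodular phase $e^{ivE_c}$. This gives
\begin{equation}
    0=\Tr\!\left[\rho_u\, e^{-iv(K_\rho-E_c)}\right]
    =\Tr\!\left[\rho_u (P_\rho-D)\,e^{-iv(K_\rho-E_c)}\right]
    +\Tr\!\left[\rho_u D\, e^{-iv(K_\rho-E_c)}\right],
\end{equation}
where $P_\rho-D$ is the spectral projector of $K_\rho$ onto energies inside $I$. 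Both $D$ and $P_\rho-D$ commute with $\rho_u$ and with $K_\rho$, so each term is a nonnegative-weighted sum of phases $e^{-iv(k-E_c)}$ over eigenvalues $k$ in the corresponding spectral set. I would then take the real part.

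For the inside term, every eigenvalue $k\in I$ satisfies $|v(k-E_c)|\le |v|L/2=\theta<\pi/2$. Hence $\cos[v(k-E_c)]\ge\cos\theta>0$, and the real part of the inside term is at least $(1-\delta_u)\cos\theta$, since the tilted inside weight is $1-\delta_u$. For the outside term no phase control is available, so I would only use the trivial bound: its real part is at least $-\delta_u$, because each phase has real part $\ge -1$ and the total outside weight is $\delta_u$.

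Putting the two together, the vanishing real part gives $0\ge(1-\delta_u)\cos\theta-\delta_u$. Rearranging yields $\delta_u(1+\cos\theta)\ge\cos\theta$, which is Eq.~\eqref{eq:window_cancel}. The sign of $v$ does not matter, because only $|v|$ enters through $\theta$.

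There is no real obstacle here. The one point needing care is the choice of rotating phase $e^{ivE_c}$ centered on the window, so that the inside phases lie in an arc symmetric about the positive real axis and a single cosine lower bound applies. One should also note that the eigenvalues of $K_\rho$ restricted to the support are finite, so the spectral decomposition is legitimate and $\delta_u$ is well defined.
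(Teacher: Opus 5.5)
Your proposal is correct and follows essentially the same argument as the paper: rotate the vanishing characteristic function by $e^{ivE_c}$ and take the real part. You then bound the inside-window contribution below by $(1-\delta_u)\cos\theta$ and the outside contribution by $-\delta_u$, which rearranges to Eq.~\eqref{eq:window_cancel}.
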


\begin{proof}
Let $E_j$ and $p_j(u)=\langle j|\rho_u|j\rangle$ denote
the entanglement energies and their tilted probabilities.
Multiplying the zero trace condition
$\sum_j p_j(u)e^{-ivE_j}=0$ by $e^{ivE_c}$ and taking
the real part gives
\[
    0=\sum_{E_j\in I}p_j(u)\cos\!\bigl(v(E_j-E_c)\bigr)
      +\sum_{E_j\notin I}p_j(u)\cos\!\bigl(v(E_j-E_c)\bigr).
\]
Inside $I$, $|v(E_j-E_c)|\leq\theta$, so the first sum
is at least $(1-\delta_u)\cos\theta$. The second sum
is at least $-\delta_u$, since $\cos x\geq-1$.
Thus
\[
    0\geq(1-\delta_u)\cos\theta-\delta_u,
\]
which yields Eq.~\eqref{eq:window_cancel}.
\end{proof}

\begin{proof}[Proof of Theorem~\ref{thm:mixed_fidelity_bound}]
Let
\begin{equation}
    q=\|\rho\|_\infty
\end{equation}
and define the high-entanglement-energy tail
\begin{equation}
    D_L
    \equiv
    \mathbf 1_{(K_{\min}+L,\infty)}(K_\rho).
\end{equation}
Its weight in the original state is
\begin{equation}
    w_L\equiv\Tr(\rho D_L).
\end{equation}
On this subspace,
\begin{equation}
    \rho
    \preceq
    qe^{-L}D_L,
\end{equation}
and therefore, spectrally,
\begin{equation}
    \sqrt{\rho}
    \succeq
    \frac{\rho}{\sqrt{qe^{-L}}}
    \qquad\text{on }D_L.
\end{equation}
It follows that
\begin{equation}
    1-h
    =
    \frac{1}{\sqrt r}\Tr\sqrt\rho
    \geq
    \frac{w_Le^{L/2}}{\sqrt{rq}},
\end{equation}
or
\begin{equation}
    w_L
    \leq
    (1-h)\sqrt{rq}\,e^{-L/2}
    \leq
    A_r(h)e^{-L/2}.
    \label{eq:deep_tail_bound}
\end{equation}

where $A_r(h)\equiv(1-h)\sqrt{r\,q_{\max}(h)}$. For $u\geq1$, tilting by $\rho^u$ can only decrease the weight of the
high-entanglement-energy tail. Indeed,
\begin{equation}
    \delta_u\equiv\Tr(\rho_uD_L)
\end{equation}
satisfies
\begin{equation}\label{eq:u-derivative of D support}
    \frac{d\delta_u}{du}
    = - (\langle K_\rho D_L \rangle_u - \langle K_\rho\rangle_u \langle D_L \rangle_u),
\end{equation}
where $\langle . \rangle_u  $ is expectation computed in the state $\rho_u$. Define the mean energies in the tail and its complement by
\[
    \overline K_{\rm tail}
    =\frac{\langle K_\rho D_L\rangle_u}{\delta_u},
    \qquad
    \overline K_{\rm rest}
    =\frac{\langle K_\rho(1-D_L)\rangle_u}{1-\delta_u}.
\]
Then
\[
    \frac{d\delta_u}{du}
    =-\delta_u(1-\delta_u)
    (\overline K_{\rm tail}-\overline K_{\rm rest})\leq0,
\]
since every energy in the tail exceeds every energy in its
complement. Hence $\delta_u\leq\delta_1=w_L$ for $u\geq1$. Suppose now that $f_\rho(u+iv)=0$. For any
$\theta\in(0,\pi/2)$ choose
\begin{equation}
    L=\frac{2\theta}{|v|}.
\end{equation}
Lemma~\ref{lem:window_cancellation} and
Eq.~\eqref{eq:deep_tail_bound} imply
\begin{equation}
    \frac{\cos\theta}{1+\cos\theta}
    \leq
    A_r(h)e^{-\theta/|v|}.
\end{equation}
Whenever
\begin{equation}
    A_r(h)\frac{1+\cos\theta}{\cos\theta}>1,
\end{equation}
this rearranges to
\begin{equation}
    |v|
    \geq
    \frac{\theta}{
    \log\!\left[
        A_r(h)
        \frac{1+\cos\theta}{\cos\theta}
    \right]}.
\end{equation}
Maximizing over $\theta$ proves
Eq.~\eqref{eq:mixed_master_width}. If for some admissible $\theta$ the
quantity inside the logarithm is at most unity, the necessary inequality
$1\leq A_r(h)(1+\cos\theta)e^{-\theta/|v|}/\cos\theta$
is impossible. Hence no zero with $u\geq1$ exists.

For fixed $h$ and $r\rightarrow\infty$,
\begin{equation}
    q_{\max}(h)
    =
    h(2-h)+O(r^{-1/2}),
\end{equation}
so
\begin{equation}
    A_r(h)
    =
    \sqrt{r\,g(h)}\,[1+o(1)],
    \qquad
    g(h)=h(2-h)(1-h)^2.
\end{equation}
The trivial bound $\theta<\pi/2$ gives
$B_r(h)\leq\pi/[2\log A_r(h)]$ at leading order. Conversely, choosing
$\theta=\pi/2-O(1/\log A_r)$ gives the same leading value, with a
relative correction of order
$\log\log r/\log r$. Since
$2\log A_r=\log[r\,g(h)]+o(1)$, this yields
Eq.~\eqref{eq:mixed_master_asymptotic}.
\end{proof}

Finally,
\begin{equation}
    g'(h)
    =
    -2(h-1)(2h^2-4h+1),
\end{equation}
so its unique interior maximum occurs at
\begin{equation}
    h_\star=1-\frac{1}{\sqrt2},
\end{equation}
with $g(h_\star)=1/4$. For the critical two-sector state defined in Eq.~\eqref{eq:critical two sector state definition},
\begin{align}
    \sqrt{F_{\mathrm{mix}}(\rho_\star)}
    &=
    \frac{1}{\sqrt r}
    \left[
        \frac{1}{\sqrt2}
        +
        (r-1)
        \frac{1}{\sqrt{2(r-1)}}
    \right]
    \nonumber\\
    &=
    \frac{1+\sqrt{r-1}}{\sqrt{2r}}
    \xrightarrow{r\to\infty}
    \frac{1}{\sqrt2},
\end{align}
Thus $h(\rho_\star)\to h_\star=1-1/\sqrt2$, the value
that maximizes $g(h)$. The critical two-sector state
therefore asymptotically realizes this extremal
fidelity parameter.

\subsection{Typical Haar-state clearance}
\label{app:haar_clearance}

\begin{proof}[Proof of Theorem~\ref{thm:haar_clearance}]
Write $N=d_A$ and $\rho_N=\rho_A$. Let $\lambda_{j,N}$ be the
$N$ eigenvalues of $\rho_N$, and define the rescaled eigenvalues and
empirical measure
\begin{equation}
    x_{j,N}=N\lambda_{j,N},
    \qquad
    \mu_N=\frac1N\sum_{j=1}^{N}\delta_{x_{j,N}}.
    \label{eq:app_haar_empirical_measure}
\end{equation}
We use $x^z=e^{z\log x}$ for $x>0$; any zero eigenvalues contribute
zero when $\Re z>0$. The normalized moment function is
\begin{equation}
    M_N(z)
    \equiv N^{z-1}f_{\rho_N}(z)
    =\int x^z\,d\mu_N(x).
    \label{eq:app_haar_normalized_moment}
\end{equation}
The prefactor never vanishes, so $M_N$ and $f_{\rho_N}$ have the same
Fisher zeros. We first exclude zeros from bounded regions of the
continuation half-plane, then use the entropy-gap theorem to control
arbitrarily large $\Re z$.

\emph{Limiting spectrum and complex moments.}
For Haar-random reduced states with $d_A/d_B \to c\in(0,1]$, the
empirical spectral measure $\mu_{d_A}$ of the rescaled state $d_A\rho_A$, as $d_A \to \infty$ at fixed $c$, converges weakly in probability to the
Mar\v{c}enko--Pastur law, and the largest rescaled eigenvalue converges
to its upper edge~\cite{Nechita2007}:
\begin{equation}
    \mu_N\Longrightarrow\mu_c,
    \qquad
    x_{\max,N}\equiv\max_j x_{j,N}
    \xrightarrow{\mathrm{prob.}}x_+,
    \label{eq:app_haar_spectral_convergence}
\end{equation}
where
\begin{align}
    d\mu_c(x)
    &=\frac{\sqrt{(x_+-x)(x-x_-)}}{2\pi c x}
      \mathbf1_{(x_-,x_+)}(x)\,dx,
    \nonumber\\
    x_\pm&=(1\pm\sqrt c)^2.
    \label{eq:app_haar_MP_measure}
\end{align}

Consequently $M_N(z)\to M_c(z)\equiv\int x^z\,d\mu_c(x)$ locally uniformly in
probability on $\{\Re z>0\}$.

\begin{proof}
Fix $B>x_+$; since $x_{\max,N}\to x_+$ in probability,
$\operatorname{supp}\mu_N\subset[0,B]$ with probability $\to1$. For fixed $z$ with
$\Re z>0$ the map $x\mapsto x^z$ is bounded and continuous on $[0,B]$
($x^z\to0$ as $x\to0$), so $\mu_N\Rightarrow\mu_c$ gives
\begin{equation}
    M_N(z)=\int x^z\,d\mu_N \xrightarrow{\mathrm{prob.}} \int x^z\,d\mu_c=M_c(z).
    \label{eq:app_haar_moment_pointwise}
\end{equation}
On $[0,B]$, $|\partial_z x^z|=|x^z\log x|$ is bounded uniformly for $z$ in a
compact $\mathcal K\subset\{\Re z>0\}$, hence $M_N$ and $M_c$ are Lipschitz on
$\mathcal K$ with a common constant independent of $N$:
\begin{equation}
    |M_N(z)-M_N(z')|\le C|z-z'|,\quad |M_c(z)-M_c(z')|\le C|z-z'|.
    \label{eq:app_haar_equicontinuity}
\end{equation}
Equicontinuity \eqref{eq:app_haar_equicontinuity} promotes the pointwise limit
\eqref{eq:app_haar_moment_pointwise} to a uniform one: applying it on a finite
net of $\mathcal K$ controls the error everywhere in between, giving
\begin{equation}
    \sup_{z\in\mathcal K}|M_N(z)-M_c(z)|\xrightarrow{\mathrm{prob.}}0 .
    \label{eq:app_haar_compact_convergence}
\end{equation}
\end{proof}

The derivatives converge locally uniformly as well, by Cauchy's
integral formula. This argument includes $c=1$ and does not require a
positive lower bound on the smallest eigenvalue.

The limiting moment function $M_c(z) $ is zero-free on $\Re z>0$. For
$0<c<1$, Proposition~5.1 of Ref.~\cite{cunden2019moments} places all its
zeros on $\Re z=-1/2$. For $c=1$, a beta integral gives directly
\begin{align}
    M_1(z)
    &=\frac{1}{2\pi}\int_0^4
      x^{z-1/2}(4-x)^{1/2}\,dx
    \nonumber\\
    &=\frac{4^z\Gamma(z+\tfrac12)}
           {\sqrt\pi\,\Gamma(z+2)},
    \label{eq:app_haar_balanced_transform}
\end{align}
which has no zeros in the domain $\text{Re}(z) > -\frac{1}{2}$.

\emph{Control at large real replica index.}
Compact convergence alone does not exclude zeros whose real parts
diverge with $N$. To control this possibility, fix $U\geq1$ and
introduce the tilted state
\begin{equation}
    \sigma_{N,U}=\frac{\rho_N^U}{f_{\rho_N}(U)}.
\end{equation}
Its entropy gap, divided by $U$, is
\begin{align}
    g_N(U)
    &\equiv
    \frac{S_{\mathrm{vN}}(\sigma_{N,U})
          -S_\infty(\sigma_{N,U})}{U}
    \nonumber\\
    &=\log x_{\max,N}-\frac{M_N'(U)}{M_N(U)}.
    \label{eq:app_haar_tilted_gap}
\end{align}
Since
\begin{equation}
    \Tr\sigma_{N,U}^{\,w}
    =\frac{f_{\rho_N}(Uw)}{f_{\rho_N}(U)^w},
\end{equation}
Theorem~\ref{thm:entropy_gap_clearance}, applied to $\sigma_{N,U}$,
implies that every zero of $f_{\rho_N}(z)$ with $\Re z\geq U$
satisfies
\begin{equation}
    |\Im z|\geq\frac{\pi}{2g_N(U)}.
    \label{eq:app_haar_large_real_bound}
\end{equation}
If $g_N(U)=0$, the tilted state, and hence $\rho_N$, is flat on its
support and has no Fisher zeros.

For each fixed $U$, Eqs.~\eqref{eq:app_haar_spectral_convergence}
and~\eqref{eq:app_haar_compact_convergence} give
\begin{equation}
    g_N(U)\xrightarrow{\mathrm{prob.}}g_c(U),
    \label{eq:app_haar_gap_convergence}
\end{equation}
where
\begin{align}
    g_c(U)
    &\equiv\log x_+-\frac{M_c'(U)}{M_c(U)}
    \nonumber\\
    &=\frac{\int x^U\log(x_+/x)\,d\mu_c(x)}
           {\int x^U\,d\mu_c(x)}.
    \label{eq:app_haar_gap_limit}
\end{align}
Moreover, $g_c(U)\to0$ as $U\to\infty$: increasing $U$ concentrates
the tilted measure near the upper spectral edge. An elementary bound
makes this precise, including at the balanced hard edge. Let $\nu_c$
be the distribution of $y=\log(x_+/x)$ under $\mu_c$. For any
$\delta>0$, its measure $p_\delta=\nu_c([0,\delta/2])$ is positive.
For $U\geq1/\delta$, splitting the numerator at $y=\delta$ gives
\begin{align}
    0\leq g_c(U)
    &=\frac{\int y e^{-Uy}\,d\nu_c(y)}
           {\int e^{-Uy}\,d\nu_c(y)}
    \nonumber\\
    &\leq\delta+\frac{\delta}{p_\delta}e^{-U\delta/2}.
    \label{eq:app_haar_gap_decay}
\end{align}
Here the denominator is at least $p_\delta e^{-U\delta/2}$, while
$y e^{-Uy}\leq\delta e^{-U\delta}$ for $y\geq\delta$.
Taking $U\to\infty$ and then $\delta\to0$ proves the claim.

\emph{Combining the two regions.}
Fix $T>0$ and set $H=T+1$. Choose a finite $U>1$, independent of
$N$, such that $g_c(U)<\pi/(4H)$.
Equation~\eqref{eq:app_haar_gap_convergence} then implies
\begin{equation}
    \Pr\!\left[g_N(U)<\frac{\pi}{2H}\right]\longrightarrow1.
\end{equation}
On this event, Eq.~\eqref{eq:app_haar_large_real_bound} excludes
all zeros with $\Re z\geq U$ and $|\Im z|\leq H$.

The remaining rectangle
\begin{equation}
    \mathcal R_{U,H}
    =\{z:\ 1\leq\Re z\leq U,\ |\Im z|\leq H\}
\end{equation}
is compact, and $M_c$ has no zeros there. Hence
$m_{U,H}\equiv\min_{\mathcal R_{U,H}}|M_c|>0$.
By Eq.~\eqref{eq:app_haar_compact_convergence}, with probability
tending to one,
\begin{equation}
    \sup_{\mathcal R_{U,H}}|M_N-M_c|<\frac{m_{U,H}}{2},
\end{equation}
which excludes zeros from this rectangle as well. Together, the two
regions cover $\{\Re z\geq1,\ |\Im z|\leq H\}$. On their joint
high-probability event, $\Delta_{\mathrm F}(\rho_N)\geq H>T$.
Therefore,
\begin{equation}
    \Pr\!\left[\Delta_{\mathrm F}(\rho_N)\leq T\right]
    \longrightarrow0.
\end{equation}
Since $T$ was arbitrary, this proves the theorem.
\end{proof}

\section{Properties of the SAC Estimator}\label{sec:Appendix D}

In this appendix we establish several analytic properties of the SAC estimator
of Eq.~\eqref{eq:SAC_estimator}. We first explain why the boundary norm of
Eq.~\eqref{eq:normB} is the natural object for the von Neumann problem. Among the
norms built from the discrepancy function, it is the one that detects the
boundary singularity at $w=-1$, and this singularity is cancelled precisely when
the continuation parameter equals $S_{\mathrm{vN}}$. We then derive an exact
expression for the SAC error and use it to show that the estimator is exact on
affine R\'enyi data. Finally, we show that the estimator is the minimum-norm
interpolant in a reproducing kernel Hilbert space (RKHS) whose kernel generates
the matrix $A$, so that the norm, the matrix $A$, and the estimator are three
facets of a single Hilbert-space structure.

\subsection{Choice of Norm}\label{sec:choice of norm}

We now discuss why the norm in Eq.~\eqref{eq:normB} is relevant for the von
Neumann entropy problem. The Ciulli--Spearman construction relies on a
Hilbert-space structure, and Eq.~\eqref{eq:normB} defines an \(L^2\) norm
induced by an inner product, which is required for the duality theorem of
Refs.~\cite{CS1,CS2}. Additionally, this norm is useful because it
detects the boundary singularity associated with the point $w=-1 (\text{equivalently,} \hspace{2pt} z =1)$, in the von Neumann entropy problem, as we shall explain below. 

As pointed out in Sec.~\ref{sec:the methodology of SAC}, we choose the function $F(w)$ in Eq.~\eqref{eq:normB} to be the discrepancy function $D_\alpha(z(w))$. This discrepancy function in the unit disc is obtained from
\(D_\alpha(z)\), defined in Eq.~\eqref{eq:Discrepancy function}, through the
conformal map described in Sec.~\ref{sec:justification of choice of parameters}. 
Since \(z=1\) is mapped to \(w=-1\), a pole of \(D_\alpha(z)\) at \(z=1\)
appears as a boundary singularity of \(D_\alpha(z(w))\) at \(w=-1\). Thus, the
singular behavior relevant to Eq.~\eqref{eq:normB} can be analyzed directly in
the original \(z\)-plane.

We want to show that the integrand in Eq.~\eqref{eq:normB} diverges as one
approaches \(w=-1\), namely
\begin{equation}
\label{eq:derivative_of_the_imaginary_part_diverges}
    \frac{d}{d\theta}
    \operatorname{Im}
    \left[
        D_\alpha\!\left(z(e^{i\theta})\right)
    \right]
    \longrightarrow \infty,
    \qquad
    \theta\to\pi .
\end{equation}
This would imply that the norm \(\|D_\alpha\|\) is formally divergent unless the pole at
\(w=-1\) is canceled. This cancellation occurs exactly when
\(\alpha=S_{z\to 1}=S_{\mathrm{vN}}\). This motivates using the norm in Eq.~\eqref{eq:normB} to drive
the reconstruction toward $\alpha=S_{\mathrm{vN}}$.

To see Eq.~\eqref{eq:derivative_of_the_imaginary_part_diverges} explicitly, write \(z=x+i y\). Using the definition of the
discrepancy function,
\begin{equation}
\begin{split}
    D_\alpha(z)
    &=
    \frac{S_z(\rho)}{z-1}
    -
    \frac{\alpha}{z-1}
    \\
    &=
    -\frac{
        \log\!\left[
            \operatorname{Tr}\!\left(\rho^{x+i y}\right)
        \right]
    }{
        (x-1+i y)^2
    }
    -
    \frac{\alpha}{x-1+i y}.
\end{split}
\label{eq:Dalpha_xy}
\end{equation}
Approaching \(w=e^{i\theta}=-1\) along the unit circle corresponds, in the
\(z\)-plane, to approaching \(z=1\) with \(y\to 0\). Up to a finite, nonzero
Jacobian factor from the conformal map, the angular derivative in the
\(w\)-plane therefore has the same singular behavior as the derivative with
respect to \(y\) in the \(z\)-plane.

Expanding Eq.~\eqref{eq:Dalpha_xy} about \(z=1\), we find that, for generic
\(\alpha\),
\begin{equation}
\label{eq:Dalpha_pole_expansion}
    D_\alpha(1+i y)
    =
    \frac{C_\alpha}{i y}
    +
    O(1),
    \qquad
    y\to 0,
\end{equation}
where \(C_\alpha\) is finite, real-valued, and independent of \(y\). The leading
singular term is therefore purely imaginary:
\[
    \operatorname{Im}D_\alpha(1+i y)
    \sim
    -\frac{C_\alpha}{y},
    \qquad
    \operatorname{Re}D_\alpha(1+i y)=O(1).
\]
This already shows that a norm based directly on
\(\left|\operatorname{Im}D_\alpha\right|\) would also be sensitive to the pole.
The norm in Eq.~\eqref{eq:normB} probes the derivative of the same singular
quantity. Indeed, differentiating Eq.~\eqref{eq:Dalpha_pole_expansion} gives
\begin{equation}
\label{eq:ddy}
    \left.
    \frac{\partial D_\alpha(z)}{\partial y}
    \right|_{x=1,\;y\to 0}
    =
    \frac{i C_\alpha}{y^2}
    +
    O(1),
\end{equation}
so that \(\partial_y \operatorname{Im}D_\alpha(z)\propto 1/y^2\) unless
\(C_\alpha=0\). Equivalently, after mapping to the unit disc,
Eq.~\eqref{eq:derivative_of_the_imaginary_part_diverges} follows. Thus the norm in Eq.~\eqref{eq:normB}, evaluated on $D_\alpha$, is sensitive to
the boundary singularity at $w=-1$ and is driven toward the pole-cancelling value
$\alpha=S_{\mathrm{vN}}$.

It is also useful to contrast this choice with norms constructed from the real
part of the discrepancy function. For instance, one could consider\footnote{The original papers on SAC considered only the norms constructed from the imaginary part of the function or its angular derivative. However, it is easy to show that working with the norms defined in Eqs.~\eqref{eq:normA} and \eqref{eq:normC} lead to a very similar reconstruction as outlined in~\cite{CS1, CS2}.}
\begin{equation}
\label{eq:normA}
    \frac{1}{2\pi}
    \int_0^{2\pi}
    \left|
        \operatorname{Re}D_\alpha\!\left(z(e^{i\theta})\right)
    \right|^2
    \sigma(\theta)\,d\theta ,
\end{equation}
or
\begin{equation}
\label{eq:normC}
    \frac{1}{2\pi}
    \int_0^{2\pi}
    \left|
        \frac{d}{d\theta}
        \operatorname{Re}D_\alpha\!\left(z(e^{i\theta})\right)
    \right|^2
    \sigma(\theta)\,d\theta .
\end{equation}
However, Eq.~\eqref{eq:Dalpha_pole_expansion} shows that the leading pole is
purely imaginary. Consequently, both $\operatorname{Re}D_\alpha$ and
$\frac{d}{d\theta}\operatorname{Re}D_\alpha$ remain finite as $\theta\to\pi$, so
the norms in Eqs.~\eqref{eq:normA} and~\eqref{eq:normC} are insensitive to the
physical pole. This sensitivity argument concerns the physical discrepancy
function $D_\alpha$, which for generic $\alpha$ is singular at $w=-1$. For the
finite-data interpolants actually minimized over, the real- and
imaginary-derivative norms in fact coincide (Appendix~\ref{sec:RKHS_SAC}), so
this calculation motivates the imaginary-part norm through its sensitivity to the
pole rather than implying that other derivative norms give different estimators. With this motivation, we adopt Eq.~\eqref{eq:normB} throughout this work
and find that the resulting estimator performs robustly in all of our numerical
tests. We leave the systematic study of a norm based directly on
$\left|\operatorname{Im}D_\alpha\right|$ for future work.

\subsection{The SAC error and exactness on affine data}
\label{subsec:affine-data}

It is convenient to introduce
\begin{equation}
\begin{split}
  g_i \equiv \frac{S_i}{z_i-1},\qquad
  h_i \equiv \frac{1}{z_i-1},\qquad  \\
  \langle x,y\rangle \equiv \sum_{i,j}(A^{-1})_{ij}\,(x_i-x_*)(y_j-y_*),
  \label{eq:innprod}
\end{split}
\end{equation}
where $x_*$ denotes the value of $x$ at the subtraction point
$w_0=w_1$ (equivalently $z_1$), the first data point. Because $A$ is symmetric
positive definite, $\langle\cdot,\cdot\rangle$ is a genuine inner product on data vectors modulo constants, equivalently on their differences
relative to the subtraction point, and
the closed-form estimator Eq.~\eqref{eq:SAC_estimator} becomes
\begin{equation}
  S^{\rm SAC}_{vN}=\frac{\langle g,h\rangle}{\langle h,h\rangle}.
  \label{eq:sac-regression}
\end{equation}
This form yields an exact expression for the estimation error. Define the
regular discrepancy function $D^\ast(z)\equiv\dfrac{S_z-S_{vN}}{z-1}$, i.e.
$D_\alpha(z)$ evaluated at $\alpha=S_{vN}$; by construction
$g_i=S_{vN}\,h_i+D^\ast_i$. Substituting into Eq.~\eqref{eq:sac-regression} and
using bilinearity,
\begin{equation}
  S^{\rm SAC}_{vN}-S_{vN}=\frac{\langle D^\ast,\,h\rangle}{\langle h,\,h\rangle},
  \label{eq:exact-error}
\end{equation}
so the error is the SAC-metric projection of the regular part of the data onto
the pole function $h$.

Equation~\eqref{eq:exact-error} immediately gives a noiseless exactness result.
If the R\'enyi function is affine, $S_z=a+bz$, then
$D^\ast(z)=\frac{a+bz-(a+b)}{z-1}=b$ is constant, so $D^\ast_i-D^\ast_*=0$ and
\begin{equation}
  S^{\rm SAC}_{vN}=S_{vN}=a+b \qquad\text{for every }N\ge2 .
  \label{eq:affine-exact}
\end{equation}
Thus the estimator reproduces \emph{any} affine-in-$z$ R\'enyi sequence exactly,
independent of $A$ or the number of data points --- the noiseless analogue of
correctly extrapolating a straight line.

\subsection{The Reproducing Kernel Hilbert space (RKHS) approach to SAC}
\label{sec:RKHS_SAC}

In this section we show that the SAC estimator admits a natural interpretation
in terms of reproducing kernel Hilbert spaces (RKHS)~\cite{aronszajn1950theory}. This
unifies the three ingredients of the construction: the stabilizing norm of
Eq.~\eqref{eq:normB} is the Hilbert-space norm of the kernel underlying the
matrix \(A\), the matrix \(A\) itself is the Gram matrix of that kernel at the
interpolation points, and by the representer
theorem~\cite{scholkopf2001generalized}, for each candidate
value of the continuation parameter \(\alpha\) the SAC minimization selects the
minimum-norm analytic interpolant compatible with the R\'enyi data. The
stabilization criterion is therefore not an ad hoc regularization but the
geometry of minimum-norm interpolation in this space.

\subsubsection{The SAC matrix as an RKHS Gram matrix}

We begin with the closed-form expression for the matrix \(A\). For real 
\(a\in(-1,1)\),
\begin{equation}
    \ln |e^{i\theta}-a|
    =
    -\sum_{k\geq 1}\frac{a^k}{k}\cos(k\theta).
\end{equation}
Using the orthogonality of the Fourier modes gives
\begin{equation}
\label{eq:log_kernel_identity}
    \frac{1}{2\pi}
    \int_{0}^{2\pi}
    \ln|e^{i\theta}-a|
    \ln|e^{i\theta}-b|
    \,d\theta
    =
    \frac{1}{2}\operatorname{Li}_2(ab),
\end{equation}
where
\begin{equation}
    \operatorname{Li}_s(x)
    =
    \sum_{k\geq1}\frac{x^k}{k^s}
\end{equation}
is the polylogarithm. It follows immediately from 
Eq.~\eqref{eq:A_matrix_elements} that
\begin{equation}
\label{eq:A matrix in terms of dilogarithms}
    A_{ij}
    =
    2\left[
    \operatorname{Li}_2(w_iw_j)
    -
    \operatorname{Li}_2(w_iw_1)
    -
    \operatorname{Li}_2(w_jw_1)
    +
    \operatorname{Li}_2(w_1^2)
    \right].
\end{equation}

This expression naturally defines an RKHS. Since the SAC construction only 
depends on differences of function values, the constant component of the 
function plays no role. We therefore fix this irrelevant degree of freedom and 
consider the space
\begin{equation}
\label{eq:RKHS_space}
    \mathcal H
    =
    \left\{
    f(w)=\sum_{k\geq1}a_k w^k
    \;:\;
    \frac{1}{2}\sum_{k\geq1}k^2|a_k|^2<\infty
    \right\},
\end{equation}
equipped with the inner product
\begin{equation}
\label{eq:RKHS_inner_product}
    \langle f,g\rangle_{\mathcal H}
    =
    \frac{1}{2}
    \sum_{k\geq1}k^2a_k\overline{b_k},
    \qquad
    g(w)=\sum_{k\geq1}b_k w^k.
\end{equation}
The corresponding norm is
\begin{equation}
    \|f\|_{\mathcal H}^2
    =
    \frac{1}{2}\sum_{k\geq1}k^2|a_k|^2.
\end{equation}

The reproducing kernel of this space is
\begin{equation}
\label{eq:RKHS_kernel}
    K(w,b)
    =
    2\operatorname{Li}_2(w\overline b)
    =
    2\sum_{k\geq1}\frac{\overline b^{\,k}}{k^2}w^k,
\end{equation}
since
\begin{equation}
    \langle f,K(\cdot,b)\rangle_{\mathcal H}
    =
    \sum_{k\geq1}a_kb^k
    =
    f(b).
\end{equation}

The quantities entering SAC are not the values \(f(w_i)\) individually, but 
their differences relative to the first interpolation point \(w_1\). Define 
the linear functionals
\begin{equation}
    L_i[f]
    \equiv
    f(w_i)-f(w_1),
    \qquad i\geq2.
\end{equation}
By the reproducing property, the representer of \(L_i\) is
\begin{equation}
\label{eq:subtracted_kernel}
    r_i(w)
    =
    K(w,w_i)-K(w,w_1),
\end{equation}
so that
\begin{equation}
    L_i[f]
    =
    \langle f,r_i\rangle_{\mathcal H}.
\end{equation}

Since the SAC interpolation points \(w_i\) are real, the Gram matrix of these 
representers is
\begin{equation}
\begin{split}
\label{eq:A as a gram matrix}
    \langle r_i,r_j\rangle_{\mathcal H}
    &=
    2\sum_{k\geq1}
    \frac{(w_i^k-w_1^k)(w_j^k-w_1^k)}{k^2}
    \\
    &=
    2\bigg[
    \operatorname{Li}_2(w_iw_j)
    -
    \operatorname{Li}_2(w_iw_1) \\
    &\quad -
    \operatorname{Li}_2(w_jw_1)
    +
    \operatorname{Li}_2(w_1^2)
    \bigg]
    \\
    &=A_{ij}.
\end{split}
\end{equation}
Thus, the matrix \(A\) appearing in SAC is precisely the Gram matrix generated 
by the subtracted kernel functions \(\{r_i\}\). In particular, its positivity 
is a direct consequence of the underlying Hilbert-space structure.

\subsubsection{The SAC norm is the RKHS norm}

We next show that the norm defining the RKHS is exactly the boundary norm used 
in SAC, as defined in Eq.~\eqref{eq:normB}. By Parseval's identity,
\begin{equation}
\label{eq:RKHS_boundary_norm}
\begin{split}
    \|f\|_{\mathcal H}^2
    &=
    \frac{1}{2}\sum_{k\geq1}k^2|a_k|^2
    \\
    &=
    \frac{1}{4\pi}
    \int_0^{2\pi}
    |f'(e^{i\theta})|^2\,d\theta.
\end{split}
\end{equation}

On the other hand, the SAC norm in Eq.~\eqref{eq:normB} can be written as
\begin{equation}
\label{eq:SAC_boundary_norm_RKHS}
    \|f\|_{\mathrm{SAC}}^2
    =
    \frac{1}{2\pi}
    \int_0^{2\pi}
    \left|
    \frac{d}{d\theta}
    \operatorname{Im}f(e^{i\theta})
    \right|^2
    \,d\theta.
\end{equation}
Since
\begin{equation}
    \frac{d}{d\theta}
    \operatorname{Im}f(e^{i\theta})
    =
    \operatorname{Re}
    \left[
    e^{i\theta}f'(e^{i\theta})
    \right],
\end{equation}
we obtain
\begin{equation}
\begin{split}
    \|f\|_{\mathrm{SAC}}^2
    &=
    \frac{1}{2\pi}
    \int_0^{2\pi}
    \left[
    \operatorname{Re}
    \left(
    e^{i\theta}f'(e^{i\theta})
    \right)
    \right]^2
    \, d\theta.
\end{split}
\end{equation}
We can rewrite the integrand as
\begin{equation}
\begin{split}
    \left[
    \operatorname{Re}
    \left(
    e^{i\theta}f'(e^{i\theta})
    \right)
    \right]^2
    &=
    \frac{1}{4}
    \Bigg[
    \left(
    e^{i\theta}f'(e^{i\theta})
    \right)^2
    \\
    &\qquad
    +
    \left(
    e^{-i\theta}\overline{f'(e^{i\theta})}
    \right)^2
    +
    2|f'(e^{i\theta})|^2
    \Bigg].
\end{split}
\end{equation}

Since \((w f'(w))^2\) is analytic inside the unit disc and
vanishes at \(w=0\), the first term has vanishing angular average by the mean-value property for
holomorphic functions. 
The second term is the complex conjugate of the first and therefore also has
vanishing angular average. Hence, only the absolute-value term survives, giving
\begin{equation}
    \|f\|_{\mathrm{SAC}}^2
    =
    \frac{1}{4\pi}
    \int_0^{2\pi}
    |f'(e^{i\theta})|^2
    \, d\theta
    =
    \|f\|_{\mathcal H}^2.
\end{equation}
Thus, the SAC norm in Eq.~\eqref{eq:normB} is exactly the canonical RKHS norm associated 
with the dilogarithm kernel in Eq.~\eqref{eq:RKHS_kernel}. The choice of 
stabilization criterion and the matrix \(A\) are consequently two aspects of 
the same underlying Hilbert-space geometry. 
The same computation applies to the real part. Since
\begin{equation}
    \frac{d}{d\theta}\operatorname{Re}f(e^{i\theta})
    =-\operatorname{Im}\!\left[e^{i\theta}f'(e^{i\theta})\right],
\end{equation}
the identical mean-value cancellation gives
\begin{equation}
    \frac{1}{2\pi}\int_0^{2\pi}
    \left|\frac{d}{d\theta}\operatorname{Re}f(e^{i\theta})\right|^2 d\theta
    =\frac{1}{4\pi}\int_0^{2\pi}|f'(e^{i\theta})|^2\,d\theta
    =\|f\|_{\mathcal H}^2 .
\end{equation}
The real and imaginary part derivative norms (Eqs.~\eqref{eq:normC} and~\eqref{eq:normB} respectively) therefore coincide on
$\mathcal H$. This clarifies the role of the pole-detection argument in
Sec.~\ref{sec:choice of norm}. On the finite-norm space $\mathcal H$, over which
the SAC minimization is actually performed, the two derivative norms define the
same estimator. The equivalence relies on $f$ having finite norm, but the
discrepancy function $D_\alpha$ does not belong to $\mathcal H$, since it is
singular at $w=-1$ for generic $\alpha$. As we highlighted in Appendix~\ref{sec:choice of norm}, the derivative of imaginary-part norm is chosen for $D_\alpha$
for its manifest sensitivity to that pole. For the discrepancy function $D_\alpha$, cancellation
of that pole would fix $\alpha=S_{\mathrm{vN}}$.
SAC instead minimizes the norm of interpolants
that match finitely many discrepancy values,
so its estimate need not coincide with the
exact pole-canceling value, consistent with the nonzero error
in Eq.~\eqref{eq:exact-error}.

\subsubsection{SAC as minimum-norm analytic interpolation}

We can now use the Representer theorem~\cite{scholkopf2001generalized} of RKHS to give the SAC estimator a variational 
interpretation. For a fixed value of \(\alpha\), the R\'enyi data impose the 
interpolation constraints
\begin{equation}
\label{eq:RKHS_constraints}
    L_i[f]
    =
    f(w_i)-f(w_1)
    =
    a_i(\alpha) - a_1(\alpha),
    \quad i=2,\ldots,N,
\end{equation}
where \(a_i(\alpha)\) are the data values defined in 
Sec.~\ref{sec:the methodology of SAC} (Note, for the R\'enyi analytic continuation problem, $a_1(\alpha)$ is associated to $S_2(\rho)$, the second R\'enyi entropy). Among all functions in \(\mathcal H\) 
satisfying these constraints, consider the minimum-norm interpolation 
problem
\begin{equation}
\label{eq:RKHS_min_problem}
    E_{\min}(\alpha)
    =
    \operatorname*{min}_{\substack{
        f\in\mathcal H\\
        L_i[f]=a_i(\alpha),\;i=2,\ldots,N
    }}
    \|f\|_{\mathcal H}^2.
\end{equation}

The Representer theorem states that the minimizer lies entirely in the 
finite-dimensional span of the representers of the interpolation functionals. 
Consequently,
\begin{equation}
\label{eq:representer_solution}
    f_\alpha^\star(w)
    =
    \sum_{j=2}^{N}c_jr_j(w).
\end{equation}
Define the difference vector $d_i(\alpha)\equiv a_i(\alpha)-a_1(\alpha)$. Applying
the constraints gives
\begin{equation}
    d_i(\alpha)=L_i[f^\star_\alpha]=\langle f^\star_\alpha,r_i\rangle_{\mathcal H}
    =\sum_{j=2}^{N}A_{ij}c_j,
\end{equation}
so that $\mathbf c=A^{-1}\mathbf d(\alpha)$ with
$\mathbf d(\alpha)=(a_2(\alpha)-a_1(\alpha),\ldots,a_N(\alpha)-a_1(\alpha))^T$. The
norm of the minimum-norm interpolant is then
\begin{equation}
    E_{\min}(\alpha)=\|f^\star_\alpha\|_{\mathcal H}^2=\mathbf c^T A\,\mathbf c
    =\mathbf d(\alpha)^T A^{-1}\mathbf d(\alpha)=\delta_{\min}^2(\alpha),
\end{equation}
Comparing with Eq.~\eqref{eq:delta2min expression}, we 
thus see that $E_\text{min}(\alpha) = \delta^2_\text{min} (\alpha)$. Minimizing over the remaining one-parameter family labeled by \(\alpha\) consequently 
reproduces the SAC estimator in Eq.~\eqref{eq:SAC_estimator}.

This provides a complementary interpretation of 
stabilized analytic continuation. For every candidate value of the unknown 
continuation parameter \(\alpha\), the finite R\'enyi data define an 
interpolation problem with, in general, infinitely many analytic solutions. 
The RKHS structure assigns a notion of complexity to these solutions through 
the SAC norm, and the Representer theorem identifies the unique minimum-norm 
interpolant. SAC then selects the value of \(\alpha\) for which the data admit 
the least-complex interpolation in this sense. This interpretation concerns the minimum-norm interpolant constructed from the finite set of available data and does not require the underlying physical continuation itself to belong to the RKHS.

\section{Numerical Choices in SAC Implementation} 
\label{sec:justification of choice of parameters}

In this Appendix, we explain the numerical choices underlying our SAC implementation and their effects on accuracy and stability of the estimate. We first motivate the use of a strip geometry from a numerical point of view, and discuss the choice of conformal-map parameters $(\epsilon,\eta)$. We then examine how the clustering of mapped data points limits the number of R\'enyi inputs that can be used reliably. Finally, we compare different singularity exponents in the discrepancy function to motivate our use of a simple pole.

\subsection{Why the numerical implementation uses a strip geometry}
\label{app:strip-versus-wedge}

Section~\ref{sec:domain_analyticity} gives a universal wedge-shaped lower bound on the
zero-free region of $\operatorname{Tr}(\rho^z)$. In practice, any domain within the wedge can be chosen as the analytic domain, and different admissible geometries produce
different images $w_i=w(z_i)$ of the same R\'enyi indices and therefore
different matrices $A$. Since the minimized norm is defined through
$A^{-1}$, the choice of domain shape affects the sensitivity to the
variational parameter $\alpha$ even when the input data in the
$z$-plane are unchanged. We now provide an argument why we chose the strip geometry instead of the original wedge geometry.

As we saw in appendix~\ref{sec:Appendix D}, we can write the SAC estimator in the form given by eq.~\eqref{eq:sac-regression} prompting us to 
rewrite~\eqref{eq:delta2min expression} as
\begin{equation}
    \delta^2(\alpha)
    =\langle g-\alpha h,g-\alpha h\rangle.
    \label{eq:geometry-alpha-norm}
\end{equation}
It is easy to check that
\be
\delta^2(\alpha) - \delta^2(\alpha_{\min}) = \langle h, h \rangle \left( \alpha - \alpha_{\min}\right)^2
\ee
Thus, $\langle h,h\rangle$ measures the sensitivity of
the minimized norm to $\alpha$: a larger $\langle h,h\rangle$ produces a narrower,
more sharply defined minimum. The sequence $h_i=1/(z_i-1)$ itself is
independent of the choice of analytic domain or the conformal map. All domain-geometry dependence enters through
$A^{-1}$ in Eq.~\eqref{eq:innprod}.

In our finite-data continuation problem, 
the strip geometry turns out to produce a larger $\langle h, h\rangle$ , i.e. it penalizes the residual pole more strongly than the wedge geometry. It subsequently gives more accurate
estimates of $S_{\mathrm{vN}}$ in all of our benchmark calculations. We
therefore use the strip map throughout the numerical analysis. 
We now discuss the optimization of conformal mapping parameters $(\epsilon, \eta)$ within the strip
family.

\subsection{Parameters in the conformal mapping}
\label{subsec:choice of conformal parameters}

It is useful to distinguish between the discrepancy function
written in the original \(z\)-plane and the corresponding
function written in the unit disc. We define
\begin{equation}
    D_\alpha(z)
    =
    \frac{S_z-\alpha}{z-1},
    \qquad
    \tilde{D}_\alpha(w)
    =
    \frac{S_{z(w)}-\alpha}{w+1}.
\end{equation}
These two functions are related by
\begin{equation}
    D_\alpha(z(w))
    =
    \tilde{D}_\alpha(w)\,f(w),
    \qquad
    f(w)
    =
    \frac{w+1}{z(w)-1}.
\end{equation}
Thus, \(f(w)\) relates the discrepancy functions associated
with the pole factors \(1/(z-1)\) and \(1/(w+1)\).

Using the two-step conformal map defined in
Eqs.~\eqref{eq:Schwarz-Christoffel}
and~\eqref{eq:second step of conformal map},
\begin{equation}
    \xi(z)
    =
    i\sinh\!\left[\frac{\pi(z-1)}{2\epsilon}\right],
    \qquad
    w(z)
    =
    \frac{\xi(z)-i\eta}{\xi(z)+i\eta},
\end{equation}
where \(\epsilon\) is the strip half-width, the inverse is
\begin{equation}
    \xi(w)
    =
    i\eta\frac{1+w}{1-w},
    \qquad
    z(w)
    =
    1+\frac{2\epsilon}{\pi}
    \operatorname{arcsinh}\!\left[
        -i\xi(w)
    \right].
\end{equation}
Here \(\operatorname{arcsinh}\) denotes the principal branch.
Substituting for \(\xi(w)\) gives
\begin{equation}
    z(w)-1
    =
    \frac{2\epsilon}{\pi}
    \operatorname{arcsinh}\!\left[
        \eta\frac{1+w}{1-w}
    \right].
\end{equation}
Hence,
\begin{equation}
\label{eq:f_w_closed_form}
    f(w)
    =
    \frac{\pi}{2\epsilon}\,
    \frac{1+w}{
        \operatorname{arcsinh}\!\left[
            \eta\frac{1+w}{1-w}
        \right]
    }.
\end{equation}
The apparent singularity at \(w=-1\) is removable, with
\[
    \lim_{w\to-1}f(w)=\frac{\pi}{\epsilon\eta}.
\]

We next examine the behavior of \(f(w)\) numerically.
We fix the strip half-width to \(\epsilon=4\) and compare
two representative choices of the conformal-map parameter,
\(\eta=10^{-2}\) and \(\eta=10^2\).
Figure~\ref{fig:comparison of f(w) for different alpha}
shows the corresponding contour plots of \(|f(w)|\)
in the unit disc. For \(\eta=10^{-2}\), the region near
\(w=-1\) is weighted more strongly than the region near
\(w=1\). For \(\eta=10^2\), the larger values instead
occur farther toward the \(w=1\) side of the disc,
although \(f(w)\to0\) as \(w\to1\).

We repeated this comparison over several orders of
magnitude in \(\eta\). We find that \(\eta\sim10^{-2}\)
gives the desired behavior: it enhances the region near
\(w=-1\), the image of the target point \(z=1\).
However, taking \(\eta\) too small leads to numerical
instabilities. In particular, for \(\eta\lesssim10^{-3}\),
the \(A\) matrix becomes increasingly ill-conditioned.
Decreasing \(\eta\) moves the mapped R\'enyi data points
closer to \(w=1\), increasing their clustering and making
the matrix inversion unstable. Balancing these
considerations, we use \(\eta=10^{-2}\) and
\(\epsilon=4\) for all numerical results presented
in this work.

\begin{figure}
    \centering
    \includegraphics[width=1\linewidth]
    {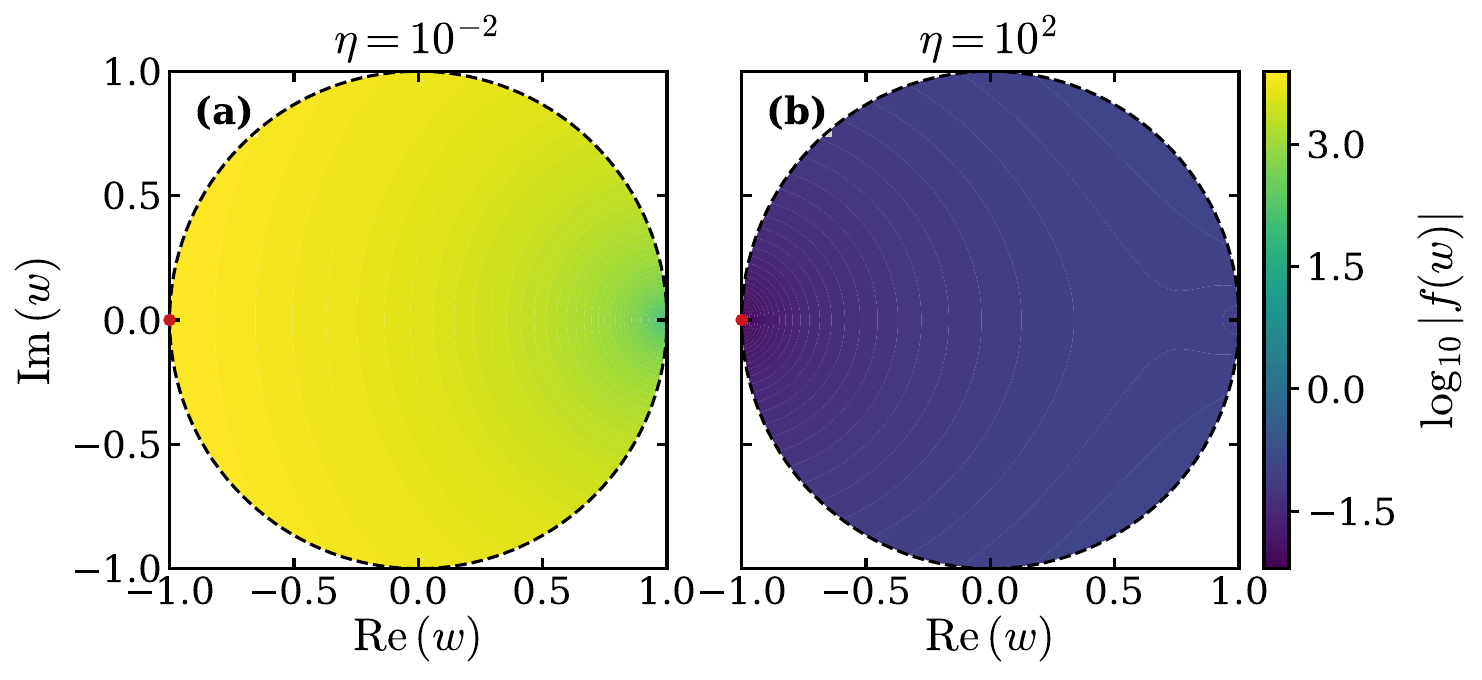}
    \caption{Two-dimensional contour plots of \(|f(w)|\)
    for strip half-width \(\epsilon=4\).
    Panels (a) and (b) correspond to \(\eta=10^{-2}\)
    and \(\eta=10^2\), respectively.}
    \label{fig:comparison of f(w) for different alpha}
\end{figure}

We also examined the effect of varying \(\epsilon\)
while keeping \(\eta=10^{-2}\) fixed.
In Eq.~\eqref{eq:f_w_closed_form}, \(\epsilon\) appears
only through the overall factor \(\pi/(2\epsilon)\).
Therefore, for any two fixed points \(w_1,w_2\) in the
unit disc, the ratio
\[
    \left|\frac{f(w_1)}{f(w_2)}\right|
\]
is independent of \(\epsilon\).
Changing the strip half-width therefore leaves the
relative weighting at fixed points in the \(w\)-plane
unchanged, but affects the positions of the mapped
R\'enyi data. In particular, taking \(\epsilon\) too
small moves these points close to \(w=1\), worsening
the conditioning of \(A\). The order-one choice
\(\epsilon=4\) gives stable performance in our tests, and is justified by our analytics of Sec.~\ref{sec:domain_analyticity}.

Increasing \(\eta\) can partly compensate for the
clustering caused by a small \(\epsilon\).
However, in our numerical implementation, such
parameter choices produce estimated quadrature errors
in entries of \(A\) that exceed the magnitudes of the
entries themselves. The numerical integration used to
construct \(A\) is then unreliable. We avoid these
issues with the parameter choice
\(\epsilon=4\), \(\eta=10^{-2}\).

These parameters give stable reconstructions when
using the discrepancy function \(D_\alpha(z)\).
One could instead formulate the continuation using
\(\tilde{D}_\alpha(w)\) directly. Since the two
discrepancy functions differ by the nonconstant factor
\(f(w)\), the parameter choice should be reassessed
for that formulation. We leave this investigation
for future work.

\subsection{Optimal number of R\'enyi entropies for a given conformal parameter set}
\label{app:Nstar-params}

The number of R\'enyi inputs that can be used reliably
in SAC depends on their locations after the conformal
map. As the integer indices increase, their images
$w_i=w(z_i)$ cluster near the boundary point $w=1$,
which can make the matrix $A$ increasingly ill-conditioned.
We examine how the strip half-width $\epsilon$ and the
M\"obius parameter $\eta$ control this clustering, and
use a numerical resolution criterion to estimate the
practical threshold $N^\ast$ of the number of R\'enyis used in the continuation.

For real $z>1$, the map is real and can be written as
\begin{equation}
    w(z)=\frac{s(z)-\eta}{s(z)+\eta},
    \qquad
    s(z)\equiv
    \sinh\!\left[\frac{\pi(z-1)}{2\epsilon}\right].
    \label{eq:wreal}
\end{equation}
It sends $z=1$ to $w=-1$ and $z\to\infty$ to $w=1$,
with $w(z)=0$ when $s(z)=\eta$.

\begin{figure}
    \centering
    \includegraphics[width=1\linewidth]
    {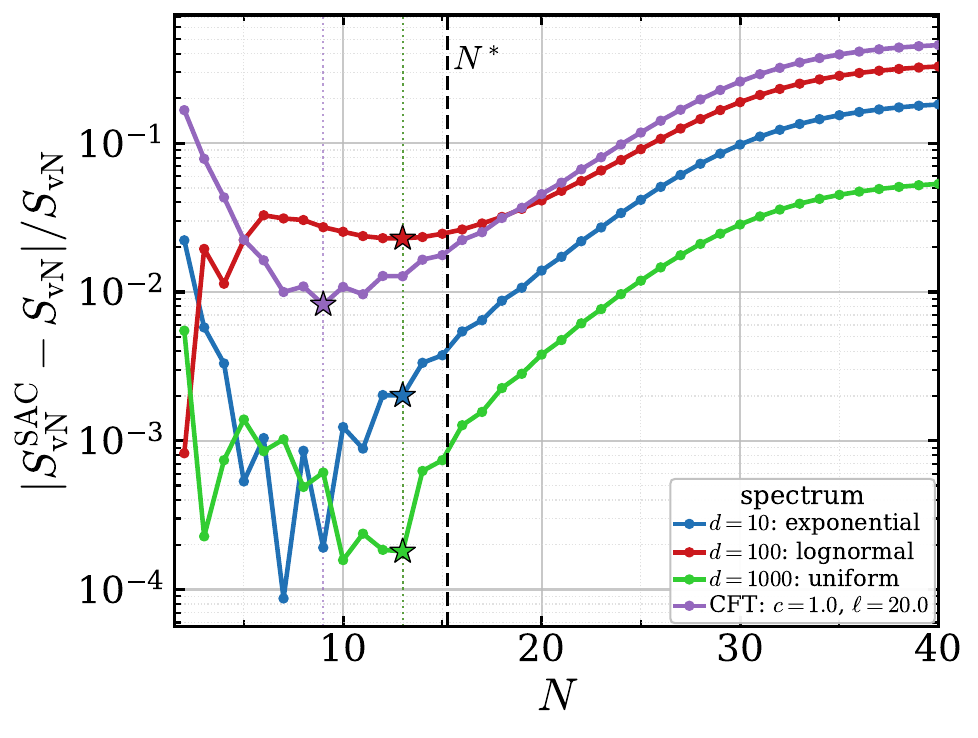}
    \caption{Relative error as a function of the number
    of R\'enyi entropies used in SAC. Colored stars mark
    the optimal number of input R\'enyi entropies for three random spectra of
    dimensions $10$, $100$, and $1000$ chosen from
    different distributions, and for a single-interval
    CFT spectrum. The results are compared with the
    estimated threshold $N^\ast\approx15$ for
    $\eta=10^{-2}$, $\epsilon=4$, and $\delta=10^{-4}$.}
    \label{fig:optimal Nstar}
\end{figure}

\paragraph{Clustering rate and the role of each parameter.}
From Eq.~\eqref{eq:wreal},
\begin{equation}
\begin{split}
    1-w(z)
    &=\frac{2\eta}{s(z)+\eta}
      \sim4\eta\,\varrho^{\,z-1},
      \qquad z\to\infty,\\
    \varrho
    &\equiv e^{-\pi/(2\epsilon)}\in(0,1).
\end{split}
\label{eq:onemw}
\end{equation}
The nearest-neighbor spacing therefore satisfies
\begin{equation}
\begin{split}
    w(z+1)-w(z)
    &=[1-w(z)]-[1-w(z+1)]\\
    &\sim4\eta(1-\varrho)\varrho^{\,z-1},
    \qquad z\to\infty.
\end{split}
\label{eq:spacing}
\end{equation}
Both the distance to $w=1$ and the local spacing decay
geometrically with ratio $\varrho=e^{-\pi/(2\epsilon)}$.
Increasing $\epsilon$ brings $\varrho$ closer to unity
and slows this clustering. At fixed $\epsilon$, $\eta$
controls the prefactor and shifts the mapped data
points, including the first point
\[
    w(2)=\frac{s(2)-\eta}{s(2)+\eta}.
\]
For the parameters used in this work,
$\epsilon=4$ and $\eta=10^{-2}$,
\[
    s(2)=\sinh(\pi/8)\approx0.40\gg\eta,
    \qquad
    w(2)\approx0.95.
\]

\begin{figure}[t!]
    \centering
    \includegraphics[width=1\linewidth]
    {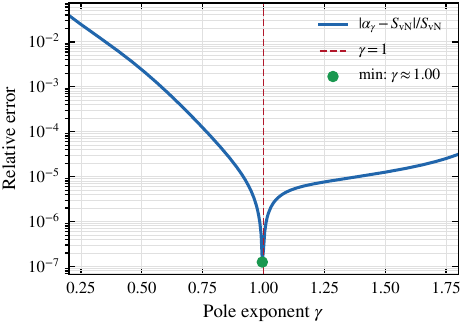}
    \caption{Relative error
    $\lvert\alpha_{\gamma}-S_{\mathrm{vN}}\rvert/S_{\mathrm{vN}}$
    in the analytically continued von Neumann entropy
    as a function of the pole exponent $\gamma$.
    For each $\gamma$, we minimize the boundary norm
    of the generalized discrepancy function
    $D_{\alpha,\gamma}(z)=(S_z-\alpha)/(z-1)^{\gamma}$
    with respect to $\alpha$, where the minimizing
    value $\alpha_{\gamma}$ estimates $S_{\mathrm{vN}}$.
    The error exhibits a sharp minimum at $\gamma\approx1$,
    supporting the simple-pole prescription used in SAC.}
    \label{fig:pole_order_CFT_1interval}
\end{figure}

\paragraph{Estimated input threshold.}
Let $\delta$ denote the tolerance level: the value of $1-w$ (equivalently, the
spacing~\eqref{eq:spacing}) below which an added R\'enyi index, though it still
carries independent information in the $z$-plane, is compressed by the conformal
map so close to $w=+1$ that it becomes numerically indistinguishable from the
indices already present. Once the spacing falls below $\delta$, the mapped data
points pile up near $w=+1$, the Gram matrix $A$ becomes ill-conditioned, and
adding further indices degrades rather than improves the SAC reconstruction of
the boundary value at $w=-1$. We emphasize that $\delta$ is a practical resolution threshold, set by the conditioning of $A$ under the conformal map. It
does not mean the additional inputs cease to carry independent information: that
information remains present in the $z$-plane, but is squeezed below numerical
resolution once the mapped points cluster near $w=+1$.

Using Eq.~\eqref{eq:onemw}, the condition
$1-w(z)\geq\delta$ is equivalent to
\begin{equation}
    z-1
    \leq
    \frac{2\epsilon}{\pi}
    \operatorname{arcsinh}\!\left(
        \frac{2\eta-\eta\delta}{\delta}
    \right).
\end{equation}
For $\delta\ll1$, this gives the estimate
\begin{equation}
    N^\ast
    \approx
    \frac{2\epsilon}{\pi}
    \operatorname{arcsinh}\!\left(
        \frac{2\eta}{\delta}
    \right).
    \label{eq:Nstar-law}
\end{equation}
Here we count consecutive inputs $z=2,\ldots,N+1$,
so the largest value of $z-1$ is $N$.

At fixed $\eta$ and $\delta$, this estimate grows
linearly with $\epsilon$. When $2\eta/\delta\gg1$,
\[
    N^\ast
    \approx
    \frac{2\epsilon}{\pi}
    \log\!\left(\frac{4\eta}{\delta}\right),
\]
so its dependence on $\eta$ is only logarithmic.
The estimate depends on the mapped sampling geometry
rather than the spectrum, although the observed
error minimum can also depend on the spectrum and
the numerical implementation.

The nearest-neighbor spacing provides a related
resolution criterion, but its tolerance differs:
Eq.~\eqref{eq:spacing} gives
\[
    w(z+1)-w(z)
    \sim(1-\varrho)[1-w(z)].
\]
We use the distance criterion above throughout
this estimate.

For $\delta=10^{-4}$, $\epsilon=4$, and $\eta=10^{-2}$,
Eq.~\eqref{eq:Nstar-law} gives
\[
    N^\ast
    \approx
    \frac{8}{\pi}\operatorname{arcsinh}(200)
    \approx15.3.
\]
Figure~\ref{fig:optimal Nstar} compares this estimate
with the observed optimal input counts for random
spectra of dimensions $10$, $100$, and $1000$ drawn
from different distributions, and for a
single-interval CFT spectrum.

The estimate above ties the number of usable R\'enyi
inputs to the chosen strip half-width.
As discussed in Sec.~\ref{sec:domain_analyticity},
state-dependent information can justify an
$\mathcal{O}(1)$ strip of analyticity for many
physically relevant states, even as the universal
zero-free wedge narrows with increasing rank.
Thus, increasing system size need not force this
threshold to decrease. For the conformal parameters
used throughout this work, all continuation benchmarks
in the main text use fewer R\'enyi inputs than the
estimated threshold $N^\ast\approx15$.

\subsection{Why simple poles?}
Finally, we also briefly comment on the chosen order of pole in the discrepancy function $D_{\alpha}(z)$. More generally, one may define
\be
D_{\alpha, \gamma}(z) = \frac{S_z - \alpha}{(z-1)^{\gamma}}, \ \ \gamma > 0
\ee
which diverges at the von Neumann point $z = 1$. In practice, however, a simple pole, $\gamma\simeq 1$, consistently yields the best performance. This is demonstrated in Fig.~\ref{fig:pole_order_CFT_1interval}, using the single-interval CFT example as a controlled benchmark.

\end{document}